\newtheorem{lemma}{Lemma}
\newtheorem{proposition}{Proposition}
\newtheorem{observation}{Observation}
\newtheorem{definition}{Definition}
\newtheorem{theorem}{Theorem}
\newcommand{\Gx}{\G^{\overline{x}}}
\newcommand{\Bx}{B^{\overline{x}}}
\newcommand{\width}{\mathsf{width}}
\newcommand{\pw}{\mathsf{pw}}
\newcommand{\cpw}{\mathsf{cpw}}
\newcommand{\cD}{\mathbf{D}}
\newcommand{\sft}{\mathsf{t}}
\newcommand{\sfs}{\mathsf{s}}
\newcommand{\sfr}{\mathsf{r}}
\newcommand{\sfQ}{\mathsf{Q}}
\newcommand{\sfS}{\mathsf{S}}
\newcommand{\sfR}{\mathsf{R}}
\definecolor{MidnightBlue}{rgb}{0.1,0.1,0.44}
\definecolor{Black}{rgb}{0,0, 0}
\definecolor{Blue}{rgb}{0, 0 ,1}
\definecolor{Red}{rgb}{1, 0 ,0}
\definecolor{White}{rgb}{1, 1, 1}
\definecolor{Grey}{rgb}{.6, .6, .6}
\definecolor{Mygreen}{rgb}{.0, .5, .0}
\definecolor{Yellow}{rgb}{.55,.55,0}
\definecolor{mustard}{rgb}{1.0, 0.86, 0.35}
\definecolor{applegreen}{rgb}{0.55, 0.71, 0.0}
\definecolor{darkturquoise}{rgb}{0.0, 0.81, 0.82}
\definecolor{celestialblue}{rgb}{0.29, 0.59, 0.82}
\definecolor{green-yellow}{rgb}{0.68, 1.0, 0.18}
\definecolor{crimsonglory}{rgb}{0.75, 0.0, 0.2}
\definecolor{darkmagenta}{rgb}{0.30, 0.0, 0.30}
\newcommand{\red}[1]{{\color{Red}#1}}
\def\ins{\mathsf{Ins}}
\def\Rep{\mathbf{Rep}}
\def\ie{\emph{i.e.}\xspace}
\def\sfT{\mathsf{T}}
\newcommand{\tseq}{\mathsf{Tseq}}
\newcommand{\rep}{\mathsf{rep}}
\newcommand{\w}{w}
\newcommand{\remove}[1]{}
\newcommand{\sshow}[2]{\ifthenelse{\equal{#1}{0}}{#2}{}}
\newcounter{func}
\newcommand{\funref}[1]{\hyperref[#1]{f_{\ref*{#1}}}} 
\newcounter{con}
\newcommand{\conref}[1]{\hyperref[#1]{c_{\ref*{#1}}}} 
\newcommand{\model}{\mathsf{model}}
\newcommand{\profile}{\mathsf{profile}}
\newcommand{\sfL}{\mathsf{L}}
\newcommand{\sfK}{\mathsf{K}}
\newcommand{\sfP}{\mathsf{P}}
\newcommand{\G}{\mathbf{G}}
\newcommand{\bd}{\mathbf{bd}}
\newcommand{\cc}{\mathbf{cc}}
\newcommand{\val}{\mathbf{val}}
\newcommand{\bp}{\mathbf{bp}}
\newcommand{\Gi}{G_i}
\newcommand*\samethanks[1][\value{footnote}]{\footnotemark[#1]}
\title{\bf A linear fixed parameter tractable algorithm\\ for connected pathwidth\thanks{An extended abstract of this paper appeared in the proceedings of Annual European Symposium on Algorithms ({ESA})~\cite{KantePT19ali}}}
\author{\bigskip Mamadou Moustapha Kanté\thanks{Université Clermont Auvergne, LIMOS, CNRS, France.  Supported by projects DEMOGRAPH (ANR-16-CE40-0028) and ASSK  (ANR-18-CE40-0025-01). Email: \texttt{mamadou.kante@uca.fr}.} \and
  Christophe Paul\thanks{CNRS, LIRMM, Univ de Montpellier, Montpellier, France. Supported  by  projects DEMOGRAPH (ANR-16-CE40-0028), ESIGMA (ANR-17-CE23-0010) and the French-German Collaboration ANR/DFG Project UTMA (ANR-20-CE92-0027). Emails: \texttt{christophe.paul@lirmm.fr}, \texttt{sedthilk@thilikos.info}\,.}
  \and
  Dimitrios  M. Thilikos\samethanks[3]  }
\begin{document}
\date{\empty}

\maketitle

\vspace{-10mm}
\begin{abstract}
\noindent 
The graph parameter of {\sl pathwidth} can be seen as a measure 
of the topological resemblance of a graph to a path. 
A popular definition
of pathwidth is given in terms of {\sl node search} 
where we are given a system of tunnels (represented by a graph) that is contaminated by  some infectious substance and we are looking for  a search strategy that, at each step, either places a searcher on a vertex or removes a searcher from a vertex
and where an edge is cleaned  when both endpoints are simultaneously occupied by searchers. It was proved that the minimum number of searchers required for a successful 
cleaning strategy is equal to the pathwidth of the graph plus one.
Two desired characteristics for a cleaning  strategy is to be 
{\sl monotone} (no recontamination occurs) and {\sl connected}
(clean territories always remain connected). Under these two demands,  
 the number of searchers  is equivalent to a variant of pathwidth called {\em connected pathwidth}.
We prove that connected pathwidth is fixed parameter tractable, in particular
we design a $2^{O(k^2)}\cdot n$ time algorithm that checks whether 
the connected pathwidth of $G$ is at most $k.$ This resolves an open question by [{\sl Dereniowski, Osula, and Rz{\k{a}}{\.{z}}ewski, Finding small-width connected path-decompositions in polynomial time. Theor. Comput. Sci., 794:85–100, 2019}\,]. For our algorithm,  we enrich the 
{\sl typical sequence technique} that is able to deal with the connectivity demand. Typical sequences have  been introduced in [{\sl Bodlaender and Kloks. Efficient and constructive algorithms for the pathwidth and treewidth of graphs. J. Algorithms, 21(2):358–402, 1996}\,] for the design of linear parameterized algorithms for treewidth and pathwidth. While this technique has been later applied to other parameters, none of  its advancements  was able to deal with the connectivity demand, as it is a ``global’’ demand that concerns
an unbounded number of parts of the graph of unbounded size.
The  proposed extension is based on an encoding of the 
connectivity property that  is quite versatile and may be adapted so to  deliver linear parameterized algorithms for the connected variants of other width parameters as well. An immediate consequence of our result is a $2^{O(k^2)}\cdot n$ time algorithm for the monotone and connected version of the edge search number.
\end{abstract}

%
%

\newpage

\section{Introduction}

\paragraph{Pathwidth.}
A {\em path-decomposition} of a graph $G=(V,E)$ is a sequence   ${\sf Q}=\langle B_{1},\ldots,B_{q}\rangle$ of vertex sets, called {\em bags} of ${\sf Q},$ 
such that
\begin{enumerate}
\item $\bigcup_{i\in\{1,\ldots,q\}}B_{i}=V,$ 
\item every edge $e\in E$ is a subset of  some member of ${\sf Q},$
and 
\item the {\em trace} of every vertex $v\in V,$ that is the set $\{i\mid v\in B_{i}\},$  is a set of consecutive integers.
\end{enumerate}
The {\em width} of a path-decomposition is $\max\{|B_{i}|-1\mid i\in\{1,\ldots,q\}\}$ and the {\em pathwidth}
of a graph $G,$ denoted by $\pw(G),$ is the minimum width of a path-decomposition of $G.$ 

The above definition appeared for the first time in~\cite{RobertsonS83GM_I}. Pathwidth can be seen as a measure 
of the topological resemblance of a graph to a path.%
\footnote{Or, alternatively, to a caterpillar, as aptly remarked in~\cite{Telle05tree}.}  
Pathwidth, along with its tree-analogue {\sl treewidth}, have been used as key combinatorial tools in the Graph Minors series of Robertson and Seymour~\cite{RobertsonS04GMXX} and they are omnipresent 
in both structural and algorithmic graph theory. 
Apart from the above definition, pathwidth  was also defined as the {\sl interval thickness}~\cite{KirousisP85inte} (in terms of interval graphs), as the {\sl  vertex separation number}~\cite{Kinnersley92thev} (in terms of graph layouts), as the {\sl  maximum order of a  blockage}~\cite{BienstockRST91quic} (in terms of min-max dualities -- see also~\cite{FominT01onth}), and as the {\sl  node search number}~\cite{Moehring90grap,KirousisP85inte,Bienstock89grap,BienstockS91mono} (in terms of graph searching games).

Deciding whether the pathwidth of a graph is at most $k$ is an {\sf NP}-complete problem~\cite{ArnborgCP87comp}.
This  motivated the problem of the existence, or not, of a {\em parameterized algorithm} for this problem, 
and algorithm running in
$f(k)\cdot n^{O(1)}$ time algorithm.  An affirmative
answer to this question was directly implied as a consequence of the algorithmic and combinatorial 
results of the Graph Minors series and the fact that, for every $k,$ the class of graphs with pathwidth at most $k$
is closed  under taking of minors\footnote{A graph $H$ is a {\em minor} of a graph $G$ if $H$ can be obtained by some subgraph of $G$ by contracting edges.}. 
On the negative side, this implication was purely existential. 
The challenge of {\sl constructing} an $f(k)\cdot n^{O(1)}$ time algorithm for pathwidth (as well as for treewidth) was a consequence of the classic result of Bodlaender and Kloks in~\cite{BodlaenderK96effi} (see also~\cite{LagergrenA91find,CourcelleL96equi}).
The main result in ~\cite{BodlaenderK96effi} implies a  $2^{O(k^3)}\cdot n$ time algorithm. This 
was later improved to one running in $2^{O(k^2)}\cdot n$ time by Fürer in~\cite{Furer16fast}).

\paragraph{Graph searching.} In a {\em graph searching} game, the opponents 
are a group of {\sl searchers} and an evading  {\sl fugitive}. The opponents  move  in turns in a graph.
The objective of the searchers  is to deploy a strategy of moves that leads to the capture of the fugitive. At each step of the {\em node searching game}, the searchers may either place a searcher at a vertex or remove a searcher  from a vertex.
The  fugitive resides in the edges of the graph and  is lucky, invisible, fast, and agile. 
The capture of the fugitive occurs  when searchers occupy both endpoints of the edge where he  currently resides. 
A {\em node searching strategy} is a sequence of moves of the searchers that can guarantee the 
eventual  capture of the fugitive.\footnote{An equivalent setting of graph searching is to 
see $G$ as a system of pipelines or corridors that is contaminated by some poisonous gas or some highly
infectious substance. The searchers can be seen as cleaners that deploy a decontamination 
strategy~\cite{FominT08anan,ChartrandZHHMB03grap}. The fact that the fugitive is invisible, fast, lucky, and agile permits us to
see him as being omnipresent in any edge that has not yet been cleaned.} 
The cost of a searching strategy is the maximum number of
searchers simultaneously present in the graph during the deployment of the strategy. The {\em node search number}
of a graph $G,$ denoted by ${\sf ns}(G),$  is defined as the minimum cost of a  node searching strategy. 
Node searching was defined by Kirousis and Papadimitriou in~\cite{KirousisP86sear}
who proved that the game is equivalent to its monotone variant where
search strategies are  {\em monotone} in the sense that they prevent the fugitive from pervading 
again areas from where he had been expelled. This result along with the results in~\cite{Moehring90grap,Kinnersley92thev,KirousisP85inte}, imply that,
for every graph $G,$ ${\sf ns}(G)={\sf pw}(G)+1.$

\paragraph{The connectivity issue.} 
In several applications of graph searching it is important  to guarantee secure 
communication channels between the searchers so that they can safely exchange information. This issue 
was treated for the first time in the area of distributed computing,  in particular in~\cite{BarriereFFS02capt},
where the authors considered the problem of capturing an intruder by mobile agents (acting for example as 
antivirus programs). As agents deploy their cleaning strategy, they must guarantee that, at each moment of the 
search, the cleaned territories remain connected, so to permit the safe exchange  of information
between the coordinating agents. 

The systematic study of connected graph searching was initiated in~\cite{BarriereFFFNST12conn,BarriereFST03sear}. When, in node searching, we 
demand that the search strategies are monotone and connected, we define 
 {\em monotone connected node search number}, denoted by ${\sf mcns}(G).$
 \footnote{As proved in~\cite{YangDA09swee}, under the connectivity demand, the monotone and the non-monotone versions of graph searching are not any more equivalent.}
The graph decomposition counterpart of this parameter 
was introduced by Dereniowski in~\cite{Dereniowski12from}.  He defined the {\em connected pathwidth} of a connected graph $G$, denoted by ${\sf cpw}(G),$ by considering {\em connected} path-decompositions  ${\sf Q}=\{B_{1},\ldots,B_{q}\}$
where the following additional property  is satisfied:
\begin{eqnarray*}
\mbox{\red{$\blacktriangleright$} For every $i\in\{1,\ldots,q\},$ the subgraph of $G$ induced by $\bigcup_{h\in\{1,\ldots,i\}}Β_{h}$ is {\sl connected}.}\end{eqnarray*}

As noticed in~\cite{Dereniowski12from}, for every connected graph $G,$  ${\sf mcns}(G)={\sf cpw}(G)+1$ (see also~\cite{AdlerPT21conn}).
 Notice that  the above demand  results to a break of symmetry: the fact that $\langle B_{1},\ldots,B_{q}\rangle$  is a connected path-decomposition does not imply that 
the same holds for $\langle B_{q},\ldots,B_{1}\rangle$ (while this is always the case for conventional path-decompositions). 
 This break of symmetry seems to be the source of all combinatorial particularities (and challenges) of connected pathwidth. This phenomenon was also observed in the context of connected treewidth \cite{AdlerPT21conn,MescoffPT21apol}.

\paragraph{Computing connected pathwidth.} It is easy to see that
checking whether $\cpw(G)\leq k$ is an {\sf NP}-complete problem: if we define $G^*$ 
as the graph obtained from $G$ after adding a new vertex adjacent with all the vertices of $G,$ then observe that $\pw(G)=\cpw(G^*)-1.$
 This motivates the question on the parameterized complexity of the problem.
The first  progress in this direction was done recently in~\cite{DereniowskiOR19find} by  Dereniowski, Osula, and Rz{\k{a}}{\.{z}}ewski who gave 
an $f(k)\cdot n^{O(k^2)}$ time algorithm.
In~\cite[Conjecture 1]{DereniowskiOR19find}, they  conjectured that there is a fixed parameter algorithm checking whether $\cpw(G)\leq k.$ The general question on the parameterized complexity  of the connected variants of graph search was raised as an open question by Fedor V. Fomin during 
the GRASTA 2017 workshop (see~\cite{Fomin17comp}).

A somehow dissuasive fact  towards a parameterized algorithm for connected pathwidth is that connected pathwdith is not closed under minors and therefore   it does not fit
\begin{figure}
\begin{center}
\includegraphics[height=2cm]{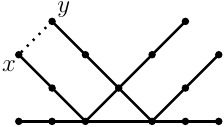}
\end{center}
\caption{A graph $G$ of connected pathwidth  $2$ with  a subgraph of connected pathwidth $3.$}
\label{dido9io}
\end{figure}

in the powerful algorithmic framework of Graph Minors (which is the case with pathwidth). The removal of an edge may increase the parameter. For instance, the connected pathwidth of the graph in \autoref{dido9io} has connected pathwidth 2 while if we remove the edge 
$\{x,y\}$
 its connected pathwidth increases to 3. On the positive side, connected pathwidth is closed under contractions (see e.g.,~\cite{AdlerPT21conn}), i.e, its value does not increase 
 when we contract edges and, moreover, the {\sf yes}-instances of the problem have bounded pathwidth, therefore they   also have bounded treewidth.
Based on these observations,  the existence of a parameterized algorithm would be implied  if we can prove that, for any $k,$ the set ${\cal Z}_{k}$ of contraction-minimal\footnote{For instance, the graph $G\setminus\{x,y\}$ from  \autoref{dido9io} belongs in ${\cal Z}_{2}.$} graphs with  connected pathwidth more than $k$ is {\sl finite}: as contraction containment can be expressed in MSO logic, one should just apply Courcelle’s theorem~\cite{Courcelle90them} to check whether some graph of ${\cal Z}_{k}$ is a contraction of $G.$ The hurdle in this direction is that 
we have no idea whether ${\cal Z}_{k}$ is finite or not.
The alternative pathway is to try  to devise a linear parameterized algorithm
by applying the algorithmic techniques that are already known for pathwidth.

\paragraph{The typical sequence technique.}
The main result of~\cite{BodlaenderK96effi} was an algorithm that, given a path-decomposition ${\sf Q}$ of $G$ of width at most $k$ and an integer $w,$ outputs, if exists, a path-decomposition of $G$ of width at most $w,$ in $2^{O(k(w+\log k))}\cdot n$ time. In this algorithm Bodlaender and Kloks introduced the celebrated {\em typical sequence technique}, a refined dynamic programming technique that encodes partial path/tree decompositions as a system of suitably compressed sequences of integers, able to encode 
all possible path-decompositions of width at most $w$. This technique was later extended/adapted for the  design of 
parametrized algorithms for numerous graph parameters such as branchwidth~\cite{BodlaenderT97cons}, linear-width~\cite{BodlaenderT98comp}, cutwidth~\cite{ThilikosSB05cutI}, carving-width~\cite{ThilikosSB00cons},  modified cutwidth, and others~\cite{ThilikosSB05cutII,BodlaenderJT20,BodlaenderFT09deri}. Also a similar 
approach was used by Lagergren in \cite{Lagergren98upp} for bounding the sizes of minor obstruction sets.
In~\cite{BodlaenderFT09deri} the typical sequence technique was viewed  as a result of {\em un-nondeterminization}: a stepwise evolution of a trivial hypothetical non-deterministic algorithm towards a deterministic parameterized algorithm. A considerable generalization of the characteristic
 sequence technique was proposed in the PhD thesis of  Soares~\cite{Pardo13purs} where this technique was implemented under the  powerful 
meta-algorithmic framework of {\em  $q$-branched $\Phi$-width}. Non-trivial extensions 
of the typical sequence technique where proposed for devising parameterized algorithms for parameters on matroids such as matroid pathwidth~\cite{Jeong0O16cons}, matroid branchwidth~\cite{Jeong0O18find}, as well as all the parameters on graphs or hypergraphs that can be expressed by them. Very recently  Bodlaender, Jaffke, and Telle in~\cite{BodlaenderJT20} suggested refinements 
of the typical sequence technique that enabled the polynomial time computation of several width parameters on directed graphs.
Finally, Bojańczyk and Pilipczuk suggested an alternative approach to the typical sequence technique, based on MSO transductions between decompositions~\cite{BojanczykP17opti}.\smallskip

Unfortunately, the above mentioned state of the art on  the typical sequence technique is unable to encompass connected pathwidth. {A reason for this is  that the connectivity demand is  a 
``global property’’ applying to {\sl every} prefix of the path-decomposition, which corresponds to 
an unbounded number of subgraphs  of arbitrary size.}

\paragraph{Our result.} In this paper we resolve {\sl affirmatively} the conjecture that checking whether ${\sf cpw}(G)\leq k$ 
is fixed parameter tractable. Our main result is the following.

\begin{theorem}
\label{main_erd}
One may construct an algorithm that given an $n$-vertex connected graph $G,$ a path-decomposition ${\sf Q}=\langle B_{1},\ldots,B_{q}\rangle$ of $G$ of width at most $k$ and an integer  $w,$ outputs 
a connected path-decomposition of $G$ of width at most $w$ or reports correctly that such an algorithm does not exist in $2^{O(k(w+\log k))}\cdot n$ time.
\end{theorem}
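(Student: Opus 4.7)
The plan is to extend the Bodlaender--Kloks typical sequence dynamic programming, carried out along the given path-decomposition $\mathsf{Q} = \langle B_{1},\ldots,B_{q}\rangle$, by augmenting its states with a bounded-size encoding of connectivity. Let $V_{\leqslant i} = \bigcup_{j\leq i} B_{j}$ and $G_{i} = G[V_{\leqslant i}]$. The DP processes $\mathsf{Q}$ from left to right, and at step $i$ stores a collection of \emph{connected profiles}: each profile represents an equivalence class of partial connected path-decompositions of $G_{i}$ that could still be completed to a connected path-decomposition of $G$ of width at most $w$.

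A profile would record: (i) the prospective current bag $R\subseteq B_{i}$ of the partial decomposition, with $|R|\leq w+1$; (ii) a partition $\pi$ of $R$ whose parts correspond to the connected components of $G_{i}$ that intersect $R$; (iii) a typical sequence over $\{0,\ldots,w\}$ compressing the widths of the bags that have been generated while processing $B_{1},\ldots,B_{i}$; and (iv) a Boolean flag telling whether the already-forgotten vertices form a connected subgraph. The central observation justifying this encoding is that any extension of the partial decomposition interacts with $V_{\leqslant i}$ only through $R$: any later bag touching $V_{\leqslant i}$ is a subset of $R$ together with vertices from $V\setminus V_{\leqslant i}$, and the connectivity of any future prefix can be tested by merging parts of $\pi$ through newly introduced edges. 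Thus $\pi$ captures exactly the ``handles'' through which past connectivity decisions can influence the future.

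The DP transition when advancing from $B_{i}$ to $B_{i+1}$ would nondeterministically guess a sequence of introduce and forget operations in the new decomposition, update $R$ accordingly, refresh $\pi$ by taking the connectivity-closure under the edges incident to newly introduced vertices, update the typical sequence following the Bodlaender--Kloks compression scheme, and \textbf{forbid} any forget that would disconnect a part of $\pi$ from the rest of the still-present vertices (unless that part is to be permanently left behind, which in turn forces the flag to record irrecoverable disconnection and kills the profile). Counting the possible profiles: $2^{k+1}$ choices for $R$, Bell-number-many partitions $\pi$ (at most $(w+1)^{w+1}$), and $k^{O(w)}$ typical sequences of length $O(w\log k)$, giving at most $2^{O(k(w+\log k))}$ profiles per bag, and hence the claimed running time after the usual linear sweep. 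The output decomposition is produced by the standard self-reduction, following back-pointers through the DP table.

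The hard part will be step (i): proving interchangeability, i.e.\ that two partial connected path-decompositions of $G_{i}$ with the same profile admit exactly the same completions. Because the connectivity constraint is ``global'' in the sense highlighted in the introduction, one must show that $\pi$ together with $R$ encodes \emph{all} information about the past that matters for future connectivity, despite the underlying partial decompositions possibly being very different. The key lemma should state that if two partial decompositions realize the same profile, one can surgically replace the first by the second, preserving connectivity of every prefix, because all future connectivity tests reduce to reachability relations among $R$-vertices through $G_{i}$, and these are precisely what $\pi$ records. Once this lemma is in place, the monotone typical-sequence arguments of Bodlaender--Kloks (and the compression bound of Fürer) transfer essentially verbatim and deliver the stated time complexity.
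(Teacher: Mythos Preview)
Your profile is too coarse to support the DP: it records a single snapshot $(R,\pi)$ rather than the compressed \emph{sequence} that the Bodlaender--Kloks technique actually needs. At step $i$ the object being represented is not a prefix of the target decomposition ending at one bag $R$; it is the full restriction $\langle A_1\cap V_{\leqslant i},\ldots,A_\ell\cap V_{\leqslant i}\rangle$ of an arbitrary target decomposition $\mathsf{P}=\langle A_1,\ldots,A_\ell\rangle$ of $G$ to $G_i$, and there is no distinguished ``current bag.'' The Bodlaender--Kloks characteristic stores, for every position $j$, the set $A_j\cap B_i$ (the interval model) and, on each maximal stretch where this set is constant, one typical sequence of the values $|A_j\setminus B_i|$. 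When $B_{i+1}=B_i\cup\{x\}$, inserting $x$ means choosing an interval $[f_x,l_x]$ of positions in $\mathsf{P}$; whether that interval covers every edge $xy$ with $y\in B_i$, and what it does to the bag sizes, depends on the whole sequence of sets $A_j\cap B_i$, not on any single one. Your interchangeability lemma thus fails already for ordinary pathwidth: two restrictions with the same ``last bag'' but different interval models are not interchangeable.

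For connected pathwidth the paper adds a third component at each position $j$ of the sequence: the near-partition of the boundary vertices seen so far induced by the connected components of $G_i[\bigcup_{h\leq j}A_h]$. The connectedness requirement is that this be an honest partition at \emph{every} $j$, not just at one position, and inserting $x$ at $[f_x,l_x]$ may merge blocks at all positions $j\geq f_x$ simultaneously. The combinatorial fact that makes this tractable---and is the point missing from your sketch---is that these near-partitions only coarsen as $j$ grows, so they change at most $|B_i|+1$ times; combined with the $O(k)$ breakpoints of the interval model and the $O(w)$ tips per typical sequence, this bounds each representative to length $O(kw)$ and gives $2^{O(k(w+\log k))}$ representatives. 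A single partition $\pi$ cannot encode this evolution, and your counting $2^{k+1}\cdot(w{+}1)^{w+1}\cdot k^{O(w)}$ does not produce the stated bound in any case.
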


To design an algorithm checking whether ${\sf cpw}(G)\leq k$ we first use the algorithms
of \cite{BodlaenderK96effi} and~\cite{Furer16fast}, to build, if exists, a path 
decomposition of $G$ of width at most $k,$ in $2^{O(k^2)}\cdot n$ time. In case of a negative answer we know than ${\sf cpw}(G)> k$, otherwise  we  apply the algorithm of~\autoref{main_erd}. The overall running time 
is dominated by the algorithm of Fürer in ~\cite{Furer16fast} which is $2^{O(k^2)}\cdot n.$\smallskip

\paragraph{Our techniques.} We now give a brief description of our techniques by focusing 
on the novel issues that we introduce. This description demands some familiarity with  the typical sequence technique. Otherwise, the reader can 
go directly to the next section.

Let ${\sf Q}=\langle B_{1},\ldots,B_{q}\rangle$ be a (nice) path-decomposition of $G$ of width at most $k.$ For every $i\in[q],$ we let $\G_i=(G_i,B_i)$ be the boundaried graph  where $G_{i}=G[\bigcup_{h\in\{1,\ldots,i\}}B_{h}].$ We follow standard dynamic programming over a path-decomposition that consists  in computing a  representation of the set of partial solutions associated to $\G_i,$ which in our case are \emph{connected} path-decompositions of $\G_i$ of width at most $\w.$ The challenge is how to handle in a compact way the connectivity requirement of a path-decomposition of a graph that can be of  arbitrarily large size.

A connected path-decomposition $\sfP=\langle A_{1},\ldots,A_{\ell}\rangle$ of $\G_i$ is represented by means of a \emph{$(\G_i,\sfP)$-encoding sequence} $\sfS=\langle\sfs_1,\dots,\sfs_{\ell}\rangle.$ For 
every $j\in[\ell],$ the element $\sfs_j$ of the sequence $\sfS$ is a triple $(\bd(\sfs_j),\cc(\sfs_j),
\val(\sfs_j))$ where: $\bd(\sfs_i)=A_j\cap B_i$; $\val(\sfs_j)=|A_j\setminus B_i|$; and $\cc(\sfs_j)$ is 
the projection of the connected components of $G^{j}_{i}=G_{i}[\bigcup_{h\in\{1,\ldots,j\}}A_{h}]$ onto 
the subset of boundary vertices $B_i\cap V(G^{j}_{i}).$ To compress a \emph{$(\G_i,\sfP)$-encoding sequence} $\sfS,$ we identify a subset $\bp(\sfS)$ of indexes, called \emph{breakpoints}, such that 
$j\in\bp(\sfS)$ if $\bd(\sfs_{j-1})\neq\bd(\sfs_j)$ 
(type-$1$) or $\cc(\sfs_{j-1})\neq\cc(\sfs_{j})$ (type-$2$) or $j$ is an index belonging to a typical sequence of the integer sequence 
$\langle \val(\sfs_b),\dots,\val(\sfs_{c-1})\rangle$ where $b,c\in[\ell]$ are consecutive type-$1$ or $2$- breakpoints. We define $\rep(\sfS)$ as the induced subsequence $\sfS[\bp(\sfS)].$

The novelty in this representation is the $\cc(\cdot)$ component which is a near-partition of the subset $B_i\cap V(G^{j}_{i})$ of boundary vertices. The critical observation is that for every $j\in[\ell-1],$ $\cc(\sfs_{j+1})$ is coarser than $\cc(\sfs_j).$ This, together with the known results on typical sequences, allows us to prove that the size of $\rep(\sfS)$ is $O(k\w)$ and that the number of representative sequences is $2^{O(k(w+\log k))}.$ Finally, as in  the typical sequence technique, we define a domination relation over the set of representative sequences. The DP algorithm over the path-decomposition $\sf Q$ consists then in computing a domination set ${\bf D}_{w}(G_{i+1})$ of the representative sequences of $\G_{i+1}$ from  a domination set ${\bf D}_{w}(G_{i})$ of the representative sequences of $\G_{i}$.

The above scheme extends the current state of the art on typical sequences as it further incorporates the encoding of the connectivity property. 
While this  is indeed a ``global property’’, it appears that 
its evolution with respect to the bags of the decomposition can be controlled by the second component 
of our encoding and this is done in terms of a sequence of a gradually coarsening partitions. This establishes a dynamic programming framework
that can potentially be applied on the  connected versions of most of the parameters where the typical sequence technique was used so far. Moreover, it may be the starting point of the algorithmic study of 
parameters where other, alternative to connectivity,  global properties are imposed to the corresponding decompositions.

\paragraph{Consequences in connected graph searching.} 
The original version of graph searching was the {\em edge searching} variant, defined%
\footnote{An equivalent model was proposed independently by Petrov~\cite{Petrov82apro}. The models of Parsons and Petrov where 
different but also equivalent, as  proved by Golovach in~\cite{Golovach89equi}. 
The model of Parsons was inspired by 
an earlier paper by Breisch~\cite{Breisch67anin}, titled  {\sl ``An intuitive approach to speleotopology’’}, where the 
aim was to rescue an (unlucky)
speleologist lost in a system of caves. 
Notice that  ``unluckiness’’ cancels the speleologist's will of being rescued as,  from the searchers' point of view,  it
imposes on him/her  the status of an ``evading entity’’. As a matter of fact, the connectivity issue appears even in the first inspiring model of the search game. In a more realistic scenario, the searchers cannot ``teleport'' themselves to non-adjacent territories of the caves while  this was indeed permitted in the original setting of Parsons.}
by Parsons~\cite{Parsons78thes,Parsons78purs}, where the only differences with node searching is that a searcher can additionally 
slide along an edge and sliding is the only way to clean an edge.
The corresponding search number is called {\em edge search number}
and is denoted by  ${\sf es}(G).$ If we additionally demand that the searching strategy  is connected and monotone, then we define the  {\em monotone connected edge search number} denoted by ${\sf mces}(G).$  As proved in~\cite{KirousisP86sear}, ${\sf es}(G)={\sf pw}(G_{\sf v}),$ where $G_{\sf v}$ is the graph obtained if we subdivide twice each edge of $G.$ Applying the same reduction as in~\cite{KirousisP86sear} for  the monotone and connected setting,  one can prove that  ${\sf mces}(G)={\sf cpw}(G_{\sf v}).$
As we already mentioned, ${\sf mcns}(G)={\sf cpw}(G_{\sf v})+1.$ These two reductions imply that the result of \autoref{main_erd} holds also 
for ${\sf mcns}$ and ${\sf mces},$ i.e., the search numbers for the monotone and connected versions of both node and edge searching.

\section{Preliminaries and definitions}
\label{sec:prelim}

\paragraph{Sets and near-partitions.} 
{For an integer $\ell>0$, we  denote by $[\ell]$ the set $\{1,\dots, \ell\}$.}
 Let $S$ be a finite set. A \emph{near-partition} $\mathsf{Q}$ of $S$ is a family of subsets $\{X_1,\dots, X_k\}$ (with $k\leq|S|+1$) of subsets of $S$, called \emph{blocks}, such that $\bigcup_{i\in[k]} X_i=S$ and for every $1\leq i<j\leq k$, $X_i\cap X_j=\emptyset$. Observe that  a near-partition may contain several copies of the empty set. A \emph{partition} of $S$ is a near-partition with the additional constraint that if it contains the empty set, then this is the unique block.
Let $\mathsf{Q}$ be a near-partition of a set $S$ and $\mathsf{Q'}$ be a near-partition of a set $S'$ such that $S\subseteq S'$. We say that $\mathsf{Q}$ is
\emph{thinner} than $\mathsf{Q'}$, or that $\mathsf{Q'}$ is \emph{coarser} than $\mathsf{Q}$, which we denote by $\mathsf{Q}\sqsubseteq \mathsf{Q'}$, if for every block $X$ of $\mathsf{Q}$, there exists a block $X'$ of $\mathsf{Q'}$ such that $X\subseteq X'$. 
For a near-partition $\mathsf{Q}=\{X_1,\dots, X_\ell\}$ of $S$ and a subset $S'\subseteq S$, we define the  \emph{projection of $\sfQ$ onto $S'$} as the near-partition $\mathsf{Q}_{\mid S'}=\{X_1\cap S',\dots, X_\ell\cap S'\}$. Observe that if $\mathsf{Q}$ is a partition, then $\mathsf{Q}_{\mid S'}$ 
{
may not be a partition: if several blocks of $\mathsf{Q}$ are subsets of $S\setminus S’$, then $\mathsf{Q}_{\mid S'}$ contains several copies of the emptyset.
}

\paragraph{Sequences.}
Let $S$ be a set. A \emph{sequence} of elements of $S$, denoted by $\alpha=\langle a_1,\dots, a_{\ell}\rangle$, is a subset of $S$ equipped with a total ordering: for $1\leqslant i<j\leqslant \ell$, $a_i$
occurs before $a_j$ in the sequence $\alpha$. The \emph{length} of a sequence is the number of elements that it contains. 
Let $X\subseteq [\ell]$ be a subset of indexes of $\alpha$. We define the {\em subsequence} of $\alpha$ {\em induced} by $X$ as the sequence $\alpha[X]$ on the subset $\{a_i\mid i\in X\}$ such that, for $i,j\in X$, $a_i$ occurs before $a_j$ in $\alpha[X]$ if and  only if $i<j$. 
{If $\alpha=\langle a_1,\ldots,a_\ell\rangle$ and $\beta=\langle b_1,\ldots, b_{p}\rangle$ are two sequences, we let $\alpha\circ\beta$ denote the concatenation of $\alpha$ and $\beta$, \ie, $\alpha\circ \beta$ is the  sequence $\langle a_1,\ldots, a_\ell, b_1,\ldots, b_{p}\rangle$.}

The \emph{duplication} of the element $a_j$, with $j\in[\ell]$, in the sequence $\alpha=\langle a,\dots, a_\ell\rangle$ yields the  sequence
$\alpha'=\langle a_1,\dots, a_{j-1},a_j, a_j,a_{j+1}, \dots, a_\ell\rangle$ of length $\ell+1$. A sequence $\beta$ is an \emph{extension} of the sequence $\alpha$ if it is either $α$ or it results from a series of duplications on $α$. We define the set of extensions of $\alpha$ as:
${\sf Ext}(\alpha)=\{\alpha^*\mid \alpha^* \mbox{ is an extension of } \alpha\}.$

Let $\alpha=\langle a_1,\dots, a_{\ell}\rangle$ be a sequence and $\alpha^*=\langle a_1,\dots, a_p\rangle$ be an extension of $\alpha$.
If $p\leq \ell+k$, then $\alpha^*$ results from a series of at most $k$ duplications and we say that $\alpha^*$ is a \emph{$(\leq k)$-extension} of $\alpha$. 
With the definition of an extension, every element of $\alpha^*$ is a copy of some element of $\alpha$. We define the \emph{extension surjection} as a surjective function $\delta_{\alpha^*\rightarrow\alpha}:[p]\rightarrow [\ell]$ such that if $\delta_{\alpha^*\rightarrow\alpha}(j)=i$, then $a^*_j=a_i$. 
An extension surjection $\delta_{\alpha^*\rightarrow\alpha}$ is a certificate that $\alpha^*\in {\sf Ext}(\alpha)$.
Finally, we observe that if $\alpha^*\in {\sf Ext}(\alpha)$, then $\alpha$ is an induced subsequence of $\alpha^*$. 
Moreover, if $\alpha^*\in {\sf Ext}(\alpha)$ and $\beta\in {\sf Ext}(\alpha^*)$, then $\beta$ is an extension of $\alpha$.

\begin{figure}[h]
\centering
\begin{tikzpicture}[scale=1.2]
   \tikzstyle{vertex}=[fill,red,circle,minimum size=0.15cm,inner sep=0pt]
   \tikzstyle{vertex2}=[fill,black,diamond,minimum size=0.15cm,inner sep=0pt]

\draw[black,thin,-] (-3,3) -- (3,3);
\node[anchor=east] at (-3,3) {$\alpha$};


\foreach \i in {1,...,11}{
	\node[vertex] (s\i) at (\i/2-3,3) {};
	\node[anchor=south] at (s\i) {\tiny $a_{\i}$};
	}

\draw[black,thin,-] (-5,1) -- (5,1);
\node[anchor=east] at (-5,1) {$\beta$};

\foreach \j in {2,3,6,7,11,13,14,18}{
	\node[vertex2] (r\j) at (\j/2-5,1) {};
	\node[anchor=north] at (r\j) {\tiny $b_{\j}$};
	}

\foreach \j in {1,4,5,8,9,10,12,15,16,17,19}{
	\node[vertex] (r\j) at (\j/2-5,1) {};
	\node[anchor=north] at (r\j) {\tiny $b_{\j}$};
	}

\draw[gray!60,thick,->,dotted] (r1) -- (s1);
\draw[gray!60,thick,->,dotted] (r2) -- (s1);
\draw[gray!60,thick,->,dotted] (r3) -- (s1);
\draw[gray!60,thick,->,dotted] (r4) -- (s2);
\draw[gray!60,thick,->,dotted] (r5) -- (s3);
\draw[gray!60,thick,->,dotted] (r6) -- (s3);
\draw[gray!60,thick,->,dotted] (r7) -- (s3);
\draw[gray!60,thick,->,dotted] (r8) -- (s4);
\draw[gray!60,thick,->,dotted] (r9) -- (s5);
\draw[gray!60,thick,->,dotted] (r10) -- (s6);
\draw[gray!60,thick,->,dotted] (r11) -- (s6);
\draw[gray!60,thick,->,dotted] (r12) -- (s7);
\draw[gray!60,thick,->,dotted] (r13) -- (s7);
\draw[gray!60,thick,->,dotted] (r14) -- (s7);
\draw[gray!60,thick,->,dotted] (r15) -- (s8);
\draw[gray!60,thick,->,dotted] (r16) -- (s9);
\draw[gray!60,thick,->,dotted] (r17) -- (s10);
\draw[gray!60,thick,->,dotted] (r18) -- (s10);
\draw[gray!60,thick,->,dotted] (r19) -- (s11);

\node[anchor=west] at (4,2) {$\delta_{\beta\rightarrow \alpha}(\cdot)$};

\end{tikzpicture}
\caption{The sequence $\beta=\langle b_1,\dots, b_{19}\rangle$ is an $(\leq 8)$-extension of the sequence $\alpha=\langle a_1,\dots, a_{11}\rangle$. The element $a_3$ has been duplicated twice in $\beta$ yielding three copies $b_{5}$, $b_{6}$,  and $b_{7}$, which are certified by  $\delta_{\beta\rightarrow \alpha}(5)=\delta_{\beta\rightarrow \alpha}(6)=\delta_{\beta\rightarrow \alpha}(7)=3$.
\label{fig:extension-surjection}}
\end{figure}
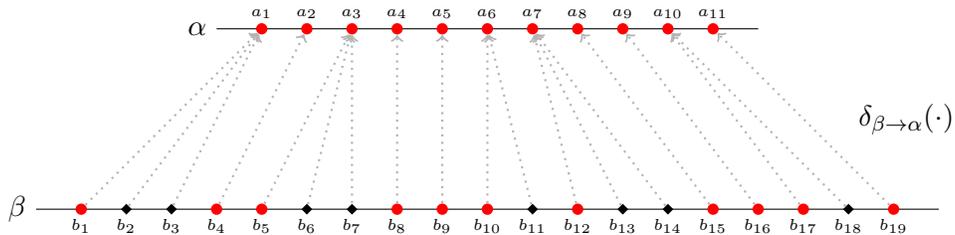

\paragraph{Graphs and boundaried graphs.}
{Given a graph $G=(V,E)$ and a vertex set $S\subseteq V(G)$, we denote by $G[S]$
the {\em subgraph} of $G$ that is {\em  induced by the vertices of} $S$, i.e., the graph  $(S,\{e\in E\mid e\subseteq S\})$}. Also, if $x\in V$, we define $G\setminus x=G[V\setminus\{x\}]$. 
 The {\em neighborhood} of a vertex $v$ in $G$
is the set of vertices that are adjacent to $v$ in $G$ and is denoted by $N_{G}(v)$.

A \emph{boundaried graph} is a pair $\mathbf{G}=(G,B)$ such that $G$ is a graph over a vertex set $V$ and $B\subseteq V$ is a subset of distinguished vertices, called \emph{boundary vertices}. The vertices of $V\setminus B$ are called \emph{inactive} vertices. 
We say that a boundaried graph $\mathbf{G}=(G,B)$ is \emph{connected} if either $G$ is connected and $B=\emptyset$ or, in case $B\ne \emptyset$,  every connected component $C$ of $G$ contains some
boundary vertex, that is $C\cap B\neq\emptyset$.

\subsection{Connected pathwidth.} 

A \emph{path-decomposition}  of a graph $G=(V,E)$ is a sequence $\sfP=\langle A_1,\dots, A_p\rangle$ of subsets of $V$ where:
\begin{enumerate}
\item for every vertex $x\in V$, there exists $i\in[p]$ such that $x\in A_i$;
\item for every edge $e\in E$, there exists $i\in[p]$  such that $e\subseteq A_i$;
\item for every vertex $x\in V$, the set $\mathcal{A}(x)=\{i\in P\mid x\in A_i\}$ is a subset of consecutive integers.
\end{enumerate}
Hereafter, the subsets $A_i$'s (for $i\in[p]$) are called the \emph{bags} of the path-decomposition $\sfP$ and the set $\mathcal{A}(x)$ is the \emph{trace} of $x$ in $\sfP$.
 The \emph{width} of a path-decomposition is  $\mathsf{width}(\sfP)=\max\{|A_i|-1\mid i\in [p]\}$. 
The \emph{pathwidth} of a graph $G$, denoted by $\pw(G)$, is the least width of a path-decomposition of $G$. 
Finally, for every $i\in [p]$, we define $V_i=\bigcup_{j\leq i} A_j$ and $G_i=G[V_i]$.

A path-decomposition $\sfP=\langle A_1,\dots, A_{\ell}\rangle$ of a graph $G$ is \emph{nice} if $|A_1|=1$ and for every $1< i\leq p$, the symmetric difference $A_{i-1}\vartriangle A_i$ has
size one. We distinguish two types of bags:
\begin{itemize}
\item if $A_{i-1}\subset A_i$ ($1<i\leq p$), then $A_i$ is an \emph{introduce} bag ($A_1$ is also defined as an introduce bag);
\item if $A_i\subset A_{i-1}$ ($1<i\leq p$), then $A_i$ is a \emph{forget} bag. 
\end{itemize}

It is well-known that any path-decomposition can be turned in {linear} time into a nice path-decompo\-si\-tion of same width {(see e.g.,~\cite{BodlaenderK96effi})}.

\begin{definition}[Connected path-decomposition]
  A path-decomposition $\sfP=\langle A_1,\dots, A_{\ell}\rangle$ of a {connected} graph $G$ is \emph{connected} if, for every $i \in [\ell]$, the subgraph $G_i$ is
  connected. The \emph{connected pathwidth}, denoted by $\cpw(G)$, is the smallest width of a connected path-decomposition of $G$.
\end{definition}

Let us notice that if $\sfP=\langle A_1,\dots, A_{\ell}\rangle$ is a path-decomposition of a graph $G$, then $\sfP'=\langle A_p,\dots, A_1\rangle$ is also a path-decomposition of $G$. But the fact that $\sfP$ is a connected path-decomposition does not imply that $\sfP'$ is a connected path-decomposition.

\begin{observation} \label{obs:cpw}
For every graph $G$, $\pw(G)\leqslant \cpw(G)$.
\end{observation}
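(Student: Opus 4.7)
The plan is to observe that the definition of a connected path-decomposition is a strict strengthening of the definition of a path-decomposition: the three axioms (vertex coverage, edge coverage, consecutive trace) are identical in both cases, while connectedness only imposes an extra requirement that, for each prefix, the induced subgraph $G_i$ is connected. Consequently, every connected path-decomposition $\sfP = \langle A_1,\dots,A_p\rangle$ of $G$ is, in particular, a valid path-decomposition of $G$.

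Given this observation, the inequality follows by a simple witness argument. First I would pick an optimal connected path-decomposition $\sfP^*$ of $G$, whose width is by definition $\cpw(G)$. Since $\sfP^*$ is also a path-decomposition of $G$, it lies in the set whose minimum defines $\pw(G)$, and therefore $\pw(G) \leq \width(\sfP^*) = \cpw(G)$.

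There is essentially no obstacle here beyond unwinding the definitions: the only subtlety worth a sentence is that $G$ must be connected for $\cpw(G)$ to be defined in the first place (as stipulated in the definition), but the conclusion $\pw(G) \leq \cpw(G)$ then holds for precisely that class of graphs, which suffices for the statement.
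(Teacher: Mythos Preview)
Your proposal is correct and is precisely the intended argument: every connected path-decomposition is in particular a path-decomposition, so minimizing over the smaller set can only give a larger width. The paper states this as an observation without proof, so there is nothing further to compare.
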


Let $\sfP$ be a path-decomposition of a graph $G=(V,E)$. Then for every subset $B\subseteq V$, $\sfP$ is a path-decomposition of the connected boundaried graph $\G=(G,B)$. The definition of a connected path-decomposition also naturally extends to boundaried graphs as follows.

\begin{definition}[Connected path-decomposition of a boundaried graph]
Let  $\sfP=\langle A_1,\dots, A_{\ell}\rangle$ be a path-decomposition of the boundaried graph $\mathbf{G}=(G,B)$. Then $\sfP$ is \emph{connected} if, for every $i\in [p]$, the boundaried graph $\Gi=(G_i,V_i\cap B)$ is connected.
\end{definition}

Let  $\sfP=\langle A_1,\dots, A_{\ell}\rangle$ be a path-decomposition of $\G=(G,B)$. If $x$ is a vertex of $G$, then 
$\langle A_1\setminus\{x\},\dots, A_{\ell}\setminus\{x\}\rangle$, is a path-decomposition of $(G\setminus x, B\setminus \{x\})$. Notice that we may have a bag $A_i$ of $P$ such that $A_i\setminus \{x\}=\emptyset$, but this does not contradict the definition of path-decomposition.  However, the fact that $\sfP$ is a connected path-decomposition does not imply that $\langle A_1\setminus\{x\},\dots, A_{\ell}\setminus\{x\}\rangle$ is. The following lemma establishes a condition for the vertex $x$ to satisfy so that its removal preserves connectivity.

\begin{lemma}\label{lem:proj-boundaried} 
Let $\sfP=\langle A_1,\dots, A_{\ell}\rangle$ be a  connected  path-decomposition of the  connected boundaried graph $(G,B)$.
If $x$ is a vertex of $B$ such that $N_G(x)\subseteq B$, then $\langle A_1\setminus\{x\},\dots, A_{\ell}\setminus\{x\}\rangle$ is a connected path-decomposition of $(G\setminus x,B\setminus \{x\})$.
\end{lemma}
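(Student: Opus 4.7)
The plan is to verify the two requirements separately: first that $\sfP'=\langle A_1\setminus\{x\},\dots,A_\ell\setminus\{x\}\rangle$ is still a valid path-decomposition of the boundaried graph $(G\setminus x,B\setminus\{x\})$, and second that it satisfies the connectivity condition on every prefix.

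For the first part, I would simply go through the three axioms. Removing $x$ from every bag does not affect the covering of any vertex of $V(G)\setminus\{x\}$, and it does not affect the covering of any edge of $G\setminus x$ since no edge of $G\setminus x$ contains $x$. The trace of any vertex $y\ne x$ in $\sfP'$ equals its trace in $\sfP$, hence remains a set of consecutive integers. This step is routine; the only subtle point is that some bag $A_i\setminus\{x\}$ might be empty, but this is explicitly allowed by the definition.

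For the connectivity part, fix $i\in[p]$ and denote $V'_i=V_i\setminus\{x\}$ and $G'_i=(G\setminus x)[V'_i]=G_i\setminus x$. I need to show that every connected component $C$ of $G'_i$ meets $V'_i\cap(B\setminus\{x\})=(V_i\cap B)\setminus\{x\}$. I split into two cases. If $x\notin V_i$, then $G'_i=G_i$ and $V'_i\cap(B\setminus\{x\})=V_i\cap B$, so the claim is immediate from the connectedness of $\sfP$ at step $i$. If $x\in V_i$, let $C$ be a component of $G'_i$. Let $C'$ be the connected component of $G_i$ that contains $C$; by hypothesis $C'$ contains a vertex $b\in V_i\cap B$. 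If $b\ne x$, then $b\in C$ (since removing $x$ from $C'$ either leaves $b$ in the same component as $C$ or splits $C'$, and in the latter case we argue below), so $C$ meets $(V_i\cap B)\setminus\{x\}$. Otherwise $b=x$, meaning $C'$ was the component of $x$; then $C$ is one of the components of $C'\setminus\{x\}$, so $C$ contains a neighbor $y$ of $x$ in $G_i$. Since $N_G(x)\subseteq B$, we have $y\in V_i\cap B$, and $y\ne x$, so again $C$ meets $(V_i\cap B)\setminus\{x\}$.

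The only mildly delicate point is the first subcase above, where $x\in V_i$ and we pick any component $C$ of $G'_i$: whether $C$ sits inside a component $C'$ of $G_i$ that contains $x$ or not, we must exhibit a non-$x$ boundary vertex in $C$. This is handled uniformly by noting that $C$ is always contained in some component $C'$ of $G_i$ with $C'\cap (V_i\cap B)\ne\emptyset$, and by distinguishing whether the only such boundary vertex of $C'$ is $x$. The hypothesis $N_G(x)\subseteq B$ is precisely what rescues us in that last case, because it guarantees that each of the components of $C'\setminus\{x\}$ inherits at least one neighbor of $x$, which is a boundary vertex different from $x$. This is the crux of the lemma and the only place where the hypothesis $N_G(x)\subseteq B$ is actually used.
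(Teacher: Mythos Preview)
Your proof is correct and follows essentially the same route as the paper's: split on whether $x\in V_i$, and when $x\in V_i$ handle the components of $G_i\setminus x$ by distinguishing those that come from the component $C_x$ of $x$ in $G_i$ (where each piece of $C_x\setminus\{x\}$ inherits a neighbor of $x$, hence a boundary vertex) from those that do not (which are untouched by the deletion). The one wobble is your intermediate case split on ``$b\ne x$'' versus ``$b=x$'': the claim ``if $b\ne x$ then $b\in C$'' is not literally valid, since $C'$ may contain both $x$ and a boundary vertex $b\ne x$ lying in a different component of $C'\setminus\{x\}$ than $C$; you patch this in the parenthetical and the final paragraph, but the clean dichotomy---and the one the paper uses---is simply whether $x\in C'$ or not.
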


\begin{proof}
  As already observed, $\sfP^{\overline{x}}=\langle A_1\setminus\{x\},\dots, A_{\ell}\setminus\{x\}\rangle$ is a path-decomposition of $G\setminus x$. Suppose
  that $[f,l]$ with $1\leq f\leq l\leq \ell$ is the trace of $x$ in $\sfP$.  As for every integer $i<f$ (supposing that $1<l$), the
  boundaried graph $(G_i\setminus x,(V_i\cap B)\setminus\{x\})$ is equal to $(G_i,V_i\cap B)$ and is thereby connected. So, let us consider
  an integer $i$ such that $f\leq i$. Let $C_x$ be the connected component of $G_i$ that contains $x$. As $(G_i,V_i\cap B)$ is connected,
  every connected component of $G_i$ intersects $B$. Observe that every connected component $C$ of $G_i$ distinct from $C_x$ (if any) is a
  connected component of $G[V_i\setminus\{x\}]$ which intersects $B\setminus\{x\}$. If $C_x=\{x\}$, by the previous observations, the
  statement holds. So, let $C_1,\dots, C_s$, with $s\geq 1$, be the connected components of $G[V_i\setminus\{x\}]$ such that for every
  $j\in[s]$, $C_j\subsetneq C_x$. As $C_x\neq\{x\}$, for every $j\in[s]$, $C_j$ contains a neighbor of $x$ which by assumption belongs to
  $B\setminus\{x\}$. It follows that every connected component of $G_i\setminus x$ contains a vertex of $B\setminus \{x\}$. Thereby
  $(G_i\setminus x,(V_i\cap B)\setminus \{x\})$ is a connected boundaried graph implying that $\sfP^{\overline{x}}$ is a connected
  path-decomposition of $(G\setminus x,B\setminus \{x\})$.
\end{proof}


\subsection{Integer sequences} 
\label{tipstoj}

Let us recall the notion of \emph{typical sequences} introduced by Bodlaender and Kloks~\cite{BodlaenderK96effi}  (see also~\cite{LagergrenA91find,CourcelleL96equi}).

\begin{definition} \label{def:rep-seq}
Let $\alpha=\langle a_1,\dots, a_\ell\rangle$ be an integer sequence. The \emph{typical sequence} $\tseq(\alpha)$ is obtained after iterating the following operations, until none is possible anymore:
\begin{itemize}
\item if for some $i\in[\ell-1]$, $a_i=a_{i+1}$, then remove $a_{i+1}$ from $\alpha$;
\item if there exists $i,j\in[\ell]$ such that $i\leqslant j-2$ and $\forall h$, $i< h< j$, $a_i\leq a_h\leq a_j$ or $\forall h$,  $i< h< j$, $a_i\geq a_h\geq a_j$, then remove the subsequence $\langle a_{i+1},\dots, a_{j-1}\rangle$ from $\alpha$.
\end{itemize}
\end{definition}

As a typical sequence $\tseq(\alpha)=\langle b_1,\dots, b_i,\dots, b_r\rangle$ is a subsequence of $\alpha$, it follows that, for every $i\in[r]$, there exists $j_i\in [\ell]$ such that $b_i=a_{j_i}$. Herefater every such index $j_i$ is called a \emph{tip} of the sequence $\alpha$.

\begin{lemma}[\cite{BodlaenderK96effi}]\label{lem:size-ts}
  Let $\alpha=\langle a_1,\dots, a_\ell\rangle$ be an integer sequence. Then, $\tseq(\alpha)$ is uniquely defined. If, moreover, for every $ i\in[\ell]$, we have
  $a_i\in \{0,1,\ldots, k\}$, then the length of $\tseq(\alpha)$ is at most $2k+1$.
\end{lemma}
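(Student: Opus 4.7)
The statement consists of two independent claims: uniqueness of $\tseq(\alpha)$ and the length bound, which I handle separately.

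For \emph{uniqueness}, I view Definition~\ref{def:rep-seq} as a rewriting system on finite integer sequences. Every rewrite strictly decreases length, so the system is terminating. By Newman's lemma it suffices to establish local confluence: whenever $\alpha$ can be rewritten in two different one-step ways, yielding $\alpha'$ and $\alpha''$, the two results admit a common further reduct. This is a finite case analysis on how the two rewriting windows relate: disjoint windows commute trivially; overlapping windows can always be reconciled by a single rule~2 application on the union of the two windows, possibly followed by a rule~1 merger to identify newly adjacent equal entries. Local confluence plus termination yields a unique normal form, so $\tseq(\alpha)$ is well defined.

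For the \emph{length bound}, write $\beta=\tseq(\alpha)=\langle b_1,\ldots,b_r\rangle$ and, more generally, aim at the sharper bound $r\leq 2R+1$ where $R=\max_i b_i-\min_i b_i\leq k$. Two basic consequences of irreducibility are: (a)~no two consecutive entries are equal (rule~1); and~(b)~rule~2 applied to a triple $(i,i+2)$, combined with~(a), forces every interior entry to be a \emph{strict} local extremum. So $\beta$ has a strict zig-zag shape, which I express as alternating strict local maxima $M_1,M_2,\ldots$ and strict local minima $m_0,m_1,\ldots$.

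The key structural leverage comes from applying rule~2 to longer windows. A four-index window $(j,j+3)$ forces the \emph{periodic-drop} property: in every period $m_{i-1},M_i,m_i,M_{i+1}$ of the zig-zag, either $M_i>M_{i+1}$ or $m_{i-1}>m_i$. A window whose endpoints carry the global maximum $M$ and the global minimum $m$ of $\beta$ (so that all middles automatically lie in $[m,M]$) forces any $M$-position and any $m$-position to be at distance at most~$1$ in $\beta$; consequently each of $M$ and $m$ appears at most twice in $\beta$, and if one of them doubles it encloses a unique occurrence of the other in a tight pattern $M,m,M$ or $m,M,m$.

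The length bound then follows by induction on $R$. The base case $R=0$ is immediate from~(a). In the inductive step I locate the (at most two) occurrences of the global maximum $M$ and split $\beta$ at these positions: the resulting pieces are themselves typical sequences, since both the escape property and the non-equality of neighbors are inherited by contiguous sub-sequences, and their values lie in a strictly smaller range. The clustering property above ensures that in the ``doubled'' case the pieces have value range at most $R-2$, while the periodic-drop property is used in the ``unique interior'' case to control the two pieces jointly. A careful accounting then closes the induction and gives $r\leq 2R+1$. The main obstacle is precisely this accounting: a naive split at a single interior occurrence of $M$ yields only a multiplicative bound, so one must combine the clustering of global extremes with the period-level drop condition to recover the sharp additive bound $r\leq 2R+1\leq 2k+1$.
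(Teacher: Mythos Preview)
The paper does not prove this lemma; it is quoted from Bodlaender--Kloks without argument, so there is no in-paper proof to compare against and I assess your plan on its own.

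Your uniqueness argument via Newman's lemma is a legitimate route. Termination is clear, and local confluence, though only sketched, is a finite case check that goes through.

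The length bound, however, has a real gap. You correctly derive that $\beta=\tseq(\alpha)$ is a strict zig-zag, that any $M$-position and any $m$-position must be adjacent (so each of $M,m$ occurs at most twice, and not both doubled), and the four-window ``periodic drop''. But your induction on the range $R$ does not close. Even in the favourable doubled case (say $M$ at positions $p,p+2$ and $m$ uniquely at $p+1$), the two outer pieces $\beta[1..p{-}1]$ and $\beta[p{+}3..r]$ have range at most $R-2$, and induction only yields
\[
r \le 2\bigl(2(R-2)+1\bigr)+3 = 4R-3,
\]
not $2R+1$. You explicitly flag this obstacle and assert that combining clustering with the periodic-drop condition recovers the additive bound, but you never say how, and I do not see a way to make \emph{this} induction produce $2R+1$: the two pieces can each have range close to $R$, and the bookkeeping loses a factor of two.

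What your four-window conditions actually buy, when you use \emph{both} orientations together, is stronger and non-inductive. Combining ``$M_i>M_{i+1}$ or $m_i>m_{i+1}$'' with ``$M_{i+1}>M_i$ or $m_{i+2}>m_{i+1}$'' forces $M_{i+1}>\min(M_i,M_{i+2})$ and, dually, $m_{i+1}<\max(m_i,m_{i+2})$. Hence the local-maxima sequence is hill-shaped (strictly increasing then strictly decreasing) and the local-minima sequence valley-shaped, with the peak and the trough aligned by your $M$--$m$ adjacency. A direct count then finishes: to the left of the peak the $M_j$ strictly increase and the $m_j$ strictly decrease, giving at most $(M-M_1)+(m_1-m)+1\le R+1$ positions (using $m_1<M_1$), and symmetrically on the right, with one position of overlap; hence $r\le 2R+1$. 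This is essentially the Bodlaender--Kloks argument, and it is what should replace your missing inductive step.
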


\begin{figure}[h]

\centering
\usetikzlibrary{shapes}
\begin{tikzpicture}[scale=1.2]
   \tikzstyle{vertex}=[fill,circle,minimum size=0.2cm,inner sep=0pt]
   \tikzstyle{bvertex}=[fill,gray,diamond,minimum size=0.2cm,inner sep=0pt]

\foreach \i in {
-10,-9,-8,-7,-6,-5,-4,-3,-2,-1,0,1,2,3,4,5,6,7,8,9,10,11
}
{\draw[gray,thin,-] (\i/2,2.5) -- (\i/2,-2);}

\draw[black,thin,->] (-5.5,-2) to (6,-2);
\draw[black,thin,->] (-5.5,-2) to (-5.5,3);

\foreach \i in {
-3,-2,-1,0,1,2,3,4,5
}
{\draw[gray,thin,-] (-5.5,\i/2) -- (5.5,\i/2);}


\foreach \i/\posx/\posy in {
0/-10/0,1/-7/3,2/-6/-1,3/-3/5,4/1/-3,5/4/4,6/7/-1,7/11/2
}
{\node[vertex] (j\i) at (\posx/2,\posy/2){};}

\foreach \i/\h in {
0/1,1/2,2/3,3/4,4/5,5/6,6/7%
}
{\draw[thin] (j\i) to (j\h);
} 

\node[anchor=east] at (-5.5,-1.5) {\scriptsize $1$};
\node[anchor=east] at (-5.5,-0.5) {\scriptsize $3$};
\node[anchor=east] at (-5.5,0.5) {\scriptsize $5$};
\node[anchor=east] at (-5.5,1.5) {\scriptsize $7$};
\node[anchor=east] at (-5.5,2.5) {\scriptsize $9$};
\node[anchor=south] at (-5.5,3) {\scriptsize $a_i$};

\node[anchor=north] at (-5,-2) {\scriptsize $1$};
\node[anchor=north] at (-4,-2) {\scriptsize $3$};
\node[anchor=north] at (-3,-2) {\scriptsize $5$};
\node[anchor=north] at (-2,-2) {\scriptsize $7$};
\node[anchor=north] at (-1,-2) {\scriptsize $9$};
\node[anchor=north] at (0,-2) {\scriptsize $11$};
\node[anchor=north] at (1,-2) {\scriptsize $13$};
\node[anchor=north] at (2,-2) {\scriptsize $15$};
\node[anchor=north] at (3,-2) {\scriptsize $17$};
\node[anchor=north] at (4,-2) {\scriptsize $19$};
\node[anchor=north] at (5,-2) {\scriptsize $21$};
\node[anchor=west] at (6,-2) {\scriptsize $i$};

\foreach \j/\posa/\posb in {
1/-9/2,2/-8/1,3/-5/1,4/-4/3,5/-2/0,6/-1/2,7/0/-1,8/2/0,9/3/3,10/5/1,11/6/2,12/8/0,13/9/0,14/10/1
}
{\node[bvertex] (x\j) at (\posa/2,\posb/2){};}

\end{tikzpicture}
\caption{The black bullets forms the typical sequence $\tseq(\alpha)=\langle 4,7,3,9,1,8,3,6\rangle$ of the sequence $\alpha=\langle 4,6,5,7,3,5,7,9,4,6,3,1,4,7,8,5,6,3,4,4,5,6\rangle$ represented by black bullets and gray diamonds.\label{fig:typical-sequence}}
\end{figure}
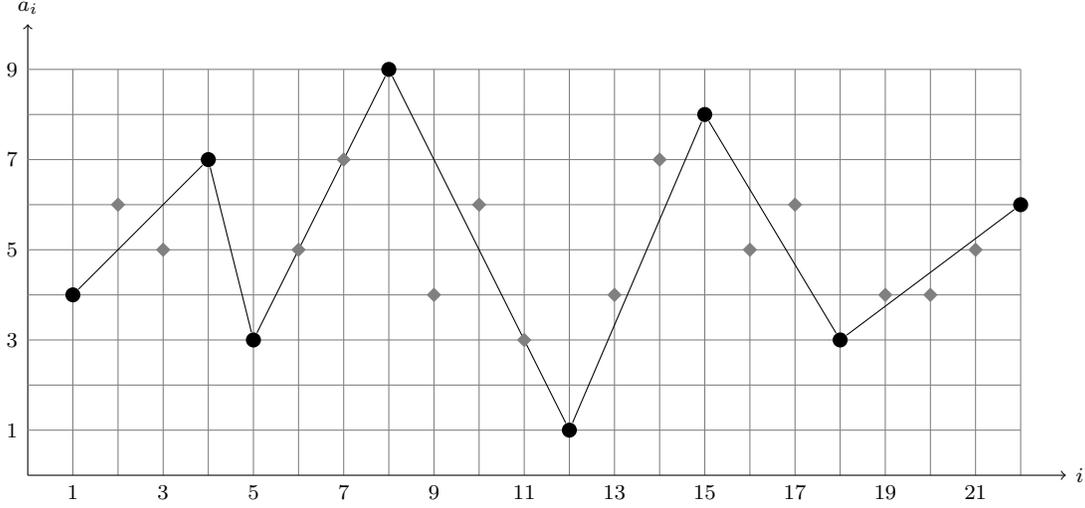

\begin{lemma}[\cite{BodlaenderK96effi}]\label{lem:nb-ts} The number of different typical sequences of integers in $\{0,1,\ldots,k\}$ is at most $\frac{8}{3}\cdot 2^{2k}$. 
\end{lemma}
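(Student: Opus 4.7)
The plan is to exploit the structural characterization of typical sequences as strict zigzags and then count such zigzags encoded by compact combinatorial objects.

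First, I would unpack Definition~\ref{def:rep-seq} to obtain structural constraints on any $\tseq(\alpha) = \langle b_1,\ldots,b_r\rangle$ with $b_i \in \{0,1,\ldots,k\}$. The first reduction rule guarantees that no two consecutive elements are equal. Applying the second reduction rule with $j = i+2$ rules out any monotone triple, so every internal element is a \emph{strict} local extremum: for $1<h<r$ we have $b_{h-1}<b_h>b_{h+1}$ (a \emph{peak}) or $b_{h-1}>b_h<b_{h+1}$ (a \emph{valley}). Thus $\tseq(\alpha)$ is a zigzag, and the up--down pattern alternates strictly, so it is determined by its length and by whether it starts with an up-move or a down-move.

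Next I would extract the finer constraints forced by the second rule when $j-i > 2$. Taking $j = i+3$ on a zigzag $b_i < b_{i+1} > b_{i+2} < b_{i+3}$, the non-reducibility forbids $b_{i+1}\le b_{i+3}$ to hold together with $b_i \le b_{i+2}$; symmetrically with the reversed inequalities. Iterating this for all $j$ yields the key property: on each side of any local extremum of $\tseq(\alpha)$, the subsequent peak values form a strictly decreasing run from the global maximum, and the valley values form a strictly increasing run from the global minimum (and symmetrically on the other side). In other words, the peaks split into two monotone runs meeting at the global maximum, and the valleys into two monotone runs meeting at the global minimum.

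With this structure in hand, counting becomes a book-keeping exercise. I would encode each typical sequence by the quadruple consisting of (i) the set of peak values on the left of the maximum, (ii) the set of peak values on the right of the maximum, (iii) the set of valley values on the left of the minimum, (iv) the set of valley values on the right of the minimum, together with one bit of orientation and the relative order of the global extrema, from which the whole zigzag can be recovered uniquely. Each of the four subsets lives in $\{0,\ldots,k\}$, but since peaks and valleys interleave strictly and are disjoint from each other within a single side, a careful inclusion-exclusion (or, equivalently, a generating-function computation) bounds the total count by $\frac{8}{3}\cdot 2^{2k}$. The constant $8/3$ arises from summing geometric contributions over the possible positions of the global maximum and minimum within the zigzag.

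The main obstacle is Step~3: the combined constraints from all choices of $(i,j)$ must be packaged into a clean invariant (monotone runs of peaks/valleys on each side of the global extrema) so that one can count without double-counting sequences that admit several valid encodings. Once that invariant is pinned down, the final estimate is a short computation, and the bound $\tfrac{8}{3}\cdot 2^{2k}$ follows.
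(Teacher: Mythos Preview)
The paper does not give a proof of this lemma; it is quoted verbatim from Bodlaender--Kloks~\cite{BodlaenderK96effi} and used as a black box. So there is no ``paper's proof'' to compare against, and your sketch should be measured against the original BK argument.

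Your first two steps are fine and in fact recover exactly the structural characterisation BK use: a typical sequence zigzags strictly, and---this is the key consequence of the second reduction rule applied to the pair (global-min position, global-max position)---the global maximum and the global minimum of the sequence sit at \emph{adjacent} positions. From this, the recursion you hint at does hold: each half (left of the min/max pair, right of it) is again typical with its own extreme at the inner end, so inductively the consecutive $[\text{valley},\text{peak}]$ intervals on each side are strictly nested. This gives your unimodal-peaks/antimodal-valleys picture.

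Where your sketch is loose is the encoding in Step~3. Encoding by \emph{four} independent subsets of $\{0,\ldots,k\}$ would give an upper bound of order $2^{4k}$, not $2^{2k}$; the constraints you mention (``peaks and valleys interleave strictly'') are doing essential work that you have not cashed out. The clean encoding is: once you know the min/max are adjacent, each half is a chain of strictly nested intervals with endpoints in $\{0,\ldots,k\}$, and such a chain is determined by the \emph{set} of values it uses (at most one subset per half, not two). Summing $\binom{k+1}{2t}$ over $t$ gives roughly $2^{k}$ possibilities per half, hence $O(2^{2k})$ overall; the constant $\tfrac{8}{3}$ falls out of the exact summation in BK. So your plan is essentially the BK argument, but you should replace the four-subset encoding by the adjacency-of-global-extrema observation and the resulting one-subset-per-half encoding, or else the arithmetic will not close.
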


 A consequence of the next lemma is that every tip of the sequence $\alpha\circ\beta$ is a tip of $\alpha$ or of $\beta$.

\begin{lemma}[\cite{BodlaenderK96effi}] \label{lem:tip-concat} 
Let $\alpha$ and $\beta$ be two integer sequences. Then, $\tseq(\alpha\circ \beta)= \tseq(\tseq(\alpha)\circ \tseq(\beta))$. 
\end{lemma}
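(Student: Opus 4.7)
The strategy is to exploit the uniqueness of the typical sequence stated in Lemma~\ref{lem:size-ts}: since $\tseq(\gamma)$ is the unique normal form of $\gamma$ under the two reduction rules of Definition~\ref{def:rep-seq}, any sequence of rule applications starting from $\gamma$ and terminating must produce $\tseq(\gamma)$. Consequently, to prove the identity, it suffices to exhibit \emph{one} reduction of $\alpha\circ\beta$ that passes through $\tseq(\alpha)\circ\tseq(\beta)$.

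First, I would observe that a reduction step applied to the subsequence $\alpha$ of $\alpha\circ\beta$ remains a legal reduction step on the whole concatenation: removing a duplicate $a_{i+1}=a_i$ with both indices inside $\alpha$, or removing a strictly monotone interior $\langle a_{i+1},\ldots,a_{j-1}\rangle$ with $i,j$ both in the range of $\alpha$, is manifestly still of the prescribed form in $\alpha\circ\beta$. The same holds symmetrically for steps internal to $\beta$. Thus, fixing any terminating sequence of operations that takes $\alpha$ to $\tseq(\alpha)$ and lifting it to $\alpha\circ\beta$ produces the intermediate sequence $\tseq(\alpha)\circ\beta$; appending to this a reduction sequence that takes the $\beta$-part to $\tseq(\beta)$ produces $\tseq(\alpha)\circ\tseq(\beta)$.

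Next, starting from $\tseq(\alpha)\circ\tseq(\beta)$, any further application of the reduction rules is, by definition, part of a (now possibly boundary-crossing) reduction toward the normal form, which is $\tseq(\tseq(\alpha)\circ\tseq(\beta))$. Concatenating the three reduction phases gives a complete reduction of $\alpha\circ\beta$ to $\tseq(\tseq(\alpha)\circ\tseq(\beta))$. Since the latter is irreducible (being itself a typical sequence) and since the normal form of $\alpha\circ\beta$ is unique and equal to $\tseq(\alpha\circ\beta)$, we conclude
\[
\tseq(\alpha\circ\beta) \;=\; \tseq\bigl(\tseq(\alpha)\circ\tseq(\beta)\bigr),
\]
as required.

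The only subtle point—really the main obstacle—is justifying that the interior-of-monotone-interval rule can be carried out ``locally'' inside $\alpha$ (or inside $\beta$) without needing to look at the whole sequence. This is immediate from the definition: the monotonicity condition $a_i\le a_h\le a_j$ (or the reversed inequalities) is witnessed entirely by indices in the range $[i,j]$, so the rule fires on $\alpha\circ\beta$ whenever it fires on $\alpha$ alone with $i,j$ in the $\alpha$-range. Once this observation is in place, the uniqueness-of-normal-form argument in Lemma~\ref{lem:size-ts} does the rest, and no further combinatorial analysis is needed.
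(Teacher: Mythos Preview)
The paper does not supply its own proof of this lemma; it is quoted from \cite{BodlaenderK96effi} without argument. Your proof is correct and is essentially the standard confluence argument: the two reduction rules of Definition~\ref{def:rep-seq} are local (their applicability is witnessed entirely by indices in the affected range), so every reduction step valid in $\alpha$ alone remains valid in $\alpha\circ\beta$, and likewise for $\beta$; hence $\alpha\circ\beta$ reduces to $\tseq(\alpha)\circ\tseq(\beta)$, and the uniqueness of the normal form asserted in Lemma~\ref{lem:size-ts} then forces $\tseq(\alpha\circ\beta)=\tseq(\tseq(\alpha)\circ\tseq(\beta))$.
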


If $\alpha$ and $\beta$ are two integer sequences of same length $\ell$, we say that $\alpha \leq \beta$ if for every $j\in[\ell]$, $a_j\leq b_j$.

\begin{definition}
Let $\alpha$ and $\beta$ be two integer sequences. Then $\alpha\preceq\beta$ if there are $\alpha^*\in {\sf Ext}(\alpha)$ and $\beta^*\in {\sf Ext}(\beta)$ such that $\alpha^*\leq\beta^*$. Whenever $\alpha\preceq\beta$ and $\beta\preceq\alpha$, we say that $\alpha$ and $\beta$ are \emph{equivalent} which is denoted by $\alpha\equiv \beta$.
\end{definition}

We summarize in the following a set of known properties concerning duplications of integer sequences and the binary relation $\preceq$ we will need.

\begin{lemma}[\cite{BodlaenderK96effi}]\label{lem:prop-useful} 
Let $\alpha$ and $\beta$ be two integer sequences.
\begin{enumerate}
  \item If $\alpha$ has length at most $\ell$, then ${\sf Ext}(\alpha)$ contains at most $2^{\ell-1}$ sequences of length $\ell$.
  \item If $\alpha^*\in {\sf Ext}(\alpha)$, then $\alpha\equiv\alpha^*$.
  \item If $\alpha^*\in {\sf Ext}(\alpha)$ and $\beta^*\in {\sf Ext}(\beta)$, then $\alpha^*\circ \beta^* \in {\sf Ext}(\alpha\circ \beta)$.
  \item If $\alpha' \preceq \alpha$ and $\beta'\preceq \beta$, then $\alpha'\circ \beta' \preceq \alpha\circ \beta$.
  \item  The relation $\preceq$ is transitive, and $\equiv$ is an equivalence relation.
  \item For every integer sequence $\alpha$, we have $\tseq(\alpha) \equiv \alpha$.  Moreover, there exist extensions $\alpha'$ and $\alpha''$ of $\tseq(\alpha)$ such that
    $\alpha' \leq \alpha \leq \alpha''$.
  \item  $\alpha\preceq\beta$ if and only if $\tseq(\alpha)\preceq\tseq(\beta)$.
    \end{enumerate}
\end{lemma}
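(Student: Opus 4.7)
The plan is to handle items~(1)--(4) as direct consequences of the definitions, then to treat transitivity of $\preceq$ in~(5), which in turn unlocks both~(6) and~(7). For~(1), an extension of $\alpha$ of length exactly $\ell$, when $\alpha$ has length $m\le\ell$, is determined by the number of copies each entry of $\alpha$ receives (each at least one, summing to $\ell$); the number of such compositions is $\binom{\ell-1}{m-1}\le 2^{\ell-1}$. Item~(2) is witnessed in both directions by $\alpha^{*}\le \alpha^{*}$, using $\alpha^{*}\in {\sf Ext}(\alpha)$ on one side and the trivial self-extension of $\alpha^{*}$ on the other. For~(3), concatenating the extension surjections $\delta_{\alpha^{*}\to\alpha}$ and $\delta_{\beta^{*}\to\beta}$ (with a suitable index shift) gives a surjection certifying $\alpha^{*}\circ\beta^{*}\in {\sf Ext}(\alpha\circ\beta)$. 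Item~(4) then follows by taking the extensions that realize $\alpha'\preceq\alpha$ and $\beta'\preceq\beta$, concatenating them via~(3), and observing that pointwise inequality is preserved under concatenation.

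The main obstacle lies in the transitivity of $\preceq$ in item~(5). Given $\alpha\preceq\beta$ via extensions $\alpha^{*},\beta^{*}$ of common length and $\beta\preceq\gamma$ via extensions $\beta^{**},\gamma^{*}$ of common length, the difficulty is that $\beta^{*}$ and $\beta^{**}$ need not coincide as extensions of $\beta$. I would construct a simultaneous common refinement $\hat\beta\in {\sf Ext}(\beta)$ that is also an extension of both $\beta^{*}$ and $\beta^{**}$; pulling the additional duplications back through the extension surjections then produces compatible extensions $\hat\alpha\in {\sf Ext}(\alpha^{*})$ and $\hat\gamma\in {\sf Ext}(\gamma^{*})$ of the same length as $\hat\beta$, satisfying $\hat\alpha\le\hat\beta\le\hat\gamma$ pointwise. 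This yields $\alpha\preceq\gamma$. Reflexivity of $\equiv$ is immediate, symmetry is by definition, and transitivity of $\equiv$ follows from transitivity of $\preceq$.

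For~(6), I would proceed by induction on the number of reduction steps producing $\tseq(\alpha)$ from $\alpha$. Removing a duplicate $a_i=a_{i+1}$ is the inverse of a single duplication, hence equivalence follows from~(2). Removing a monotone block $\langle a_{i+1},\dots,a_{j-1}\rangle$ with $a_i\le a_h\le a_j$ for $i<h<j$ (the non-increasing case being symmetric) is handled by exhibiting two explicit extensions of the reduced sequence: duplicating the entry $a_j$ enough times to match the original length yields a pointwise upper bound of $\alpha$ (since $a_j\ge a_h$ throughout the block), while duplicating $a_i$ symmetrically yields a pointwise lower bound. Accumulating these bounds across the entire reduction, and using~(3) to glue them with the unchanged prefix and suffix, delivers the ``moreover'' statement with explicit extensions $\alpha'\le\alpha\le\alpha''$ of $\tseq(\alpha)$.

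Finally for~(7), apply~(6) to get $\tseq(\alpha)\equiv\alpha$ and $\tseq(\beta)\equiv\beta$, and chain with transitivity from~(5) in both directions: $\alpha\preceq\beta$ exactly when $\tseq(\alpha)\preceq\tseq(\beta)$.
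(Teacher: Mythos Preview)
Your proof sketch is correct. The paper does not prove this lemma at all---it is stated with a citation to Bodlaender and Kloks~\cite{BodlaenderK96effi} and no argument is given---so there is no in-paper proof to compare against. Your handling of each item is sound; in particular, the common-refinement construction for transitivity in~(5) and the two-sided bounding extensions for the monotone-block removal in~(6) are exactly the standard arguments, and the inductive composition of the bounds in~(6) (applying to the current lower/upper witness the same duplications that produced the previous one) goes through as you indicate, even if you left that step somewhat implicit.
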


We extend the definition of the $\leq$-relation and $\preceq$-relation on integer sequences to sequences of integer sequences. Let $\mathsf{P}=\langle \sfL_1,\dots, , L_r\rangle$ and $\mathsf{Q}=\langle \sfK_1,\dots, \sfK_r\rangle$ be two sequences of integer sequences such that for every $i\in[r]$, $\sfL_i$ and $\sfK_i$ have the same length. We say that $\sfP\leq \sfQ$ if for every $i\in[r]$, $\sfL_i\leq \sfK_i$. The set of \emph{extensions} of $\mathsf{P}$ is ${\sf Ext}(\mathsf{P})=\{\langle \sfL'_1,\dots, \sfL'_r\rangle\mid i\in[r],\sfL'_i\in {\sf Ext}(\sfL_i)\}$. Finally we say that $\sfP\preceq\sfQ$ if there exist $\sfP'\in {\sf Ext}(\sfP)$ and $\sfQ'\in {\sf Ext}(\sfQ)$ such that $\sfP'\leq \sfQ'$. If $\sfP\preceq\sfQ$ and $\sfQ\preceq\sfP$, then we say that $\sfP\equiv \sfQ$. {The relation $\equiv$ is an equivalence relation.}

\section{Boundaried sequences} 
\label{sec:boundaried-section}

\begin{definition}[$B$-boundaried sequence]
Let $B$ be a finite set. A \emph{$B$-boundaried sequence} is a sequence $\sfS=\langle\sfs_1,\dots, \sfs_{\ell}\rangle$ such that for every $j\in[\ell]$, $\sfs_j=(\bd(\sfs_j),\cc(\sfs_j), \val(\sfs_j))$ is defined as follows:
 \begin{itemize}
\item $\bd(\sfs_j)\subseteq B$ with the property that for every $x\in B$, the indices $j\in[l]$ such that $x\in\bd(\sfs_j)$ are consecutive; 
\item $\cc(\sfs_j)$ is a near-partition of  $\bigcup_{i\leq j} \bd(\sfs_i)\subseteq B$ with the property that for every $j<\ell$, $\cc(\sfs_j) \sqsubseteq \cc(\sfs_{j+1})$;
\item $\val(\sfs_j)$ is a positive integer.
\end{itemize}
\end{definition}

The \emph{width} of $\sfS$ is defined as $\width(\sfS)=\max_{j\in\ell} (|\bd(\sfs_j)|+\val(\sfs_j))$.

\begin{figure}[h]
\centering
\begin{tikzpicture}[scale=1.1]
   \tikzstyle{vertex}=[fill,circle,minimum size=0.2cm,inner sep=0pt]
   \tikzstyle{type1}=[fill,red,rectangle,minimum size=0.15cm,inner sep=0pt]
   \tikzstyle{type2}=[fill,blue,diamond,minimum size=0.2cm,inner sep=0pt]
   \tikzstyle{type3}=[fill,black!20,circle,minimum size=0.2cm,inner sep=0pt]

\node[] at (-1,-0.75) {\dots};
\node[] at (13,-0.75) {\dots};


\begin{scope}
\coordinate (X) at (0,0);
\coordinate (Y) at (0,-0.5);
\coordinate (Z) at (0,-1);
\coordinate (V) at (0,-2);
\coordinate (T) at (0,-2.5);

\filldraw[fill=red!20] (0,0.2) .. controls (0.3,0.2) and (0.3,-0.7) .. (0,-0.7)  .. controls (-0.3,-0.7) and (-0.3,0.2) .. (0,0.2)
-- cycle;

\filldraw[fill=red!20] (0,-1.3) .. controls (0.3,-1.3) and (0.3,-1.7) .. (0,-1.7)  .. controls (-0.3,-1.7) and (-0.3,-1.3) .. (0,-1.3)
-- cycle;

\node[] at (X) {$x$};
\node[] at (Y) {$y$};
\node[] at (V) {$2$};
\node[vertex] at (0,-1.5) {};

\end{scope}

\begin{scope}[shift={(1,0)}]

\coordinate (X) at (0,0);
\coordinate (Y) at (0,-0.5);
\coordinate (Z) at (0,-1);
\coordinate (V) at (0,-2);
\coordinate (T) at (0,-2.5);

\filldraw[fill=red!20] (0,0.2) .. controls (0.3,0.2) and (0.3,-0.7) .. (0,-0.7)  .. controls (-0.3,-0.7) and (-0.3,0.2) .. (0,0.2)
-- cycle;

\filldraw[fill=red!20] (0,-0.8) .. controls (0.3,-0.8) and (0.3,-1.7) .. (0,-1.7)  .. controls (-0.3,-1.7) and (-0.3,-0.8) .. (0,-0.8)
-- cycle;

\node[] at (X) {$x$};
\node[] at (Y) {$y$};
\node[] at (Z) {$z$};
\node[] at (V) {$2$};
\node[type1] at (T) {};
\node[vertex] at (0,-1.5) {};
\node[] at (0,0.5) {$\sfs_i$};

\end{scope}

\begin{scope}[shift={(2,0)}]

\coordinate (X) at (0,0);
\coordinate (Y) at (0,-0.5);
\coordinate (Z) at (0,-1);
\coordinate (V) at (0,-2);
\coordinate (T) at (0,-2.5);

\filldraw[fill=red!20] (0,0.2) .. controls (0.3,0.2) and (0.3,-0.7) .. (0,-0.7)  .. controls (-0.3,-0.7) and (-0.3,0.2) .. (0,0.2)
-- cycle;

\filldraw[fill=red!20] (0,-0.8) .. controls (0.3,-0.8) and (0.3,-1.7) .. (0,-1.7)  .. controls (-0.3,-1.7) and (-0.3,-0.8) .. (0,-0.8)
-- cycle;

\node[] at (X) {$x$};
\node[] at (Y) {$y$};
\node[] at (Z) {$z$};
\node[] at (V) {$4$};
\node[vertex] at (0,-1.5) {};

\end{scope}

\begin{scope}[shift={(3,0)}]

\coordinate (X) at (0,0);
\coordinate (Y) at (0,-0.5);
\coordinate (Z) at (0,-1);
\coordinate (V) at (0,-2);
\coordinate (T) at (0,-2.5);

\filldraw[fill=red!20] (0,0.2) .. controls (0.3,0.2) and (0.3,-0.7) .. (0,-0.7)  .. controls (-0.3,-0.7) and (-0.3,0.2) .. (0,0.2)
-- cycle;

\filldraw[fill=red!20] (0,-0.8) .. controls (0.3,-0.8) and (0.3,-1.7) .. (0,-1.7)  .. controls (-0.3,-1.7) and (-0.3,-0.8) .. (0,-0.8)
-- cycle;

\node[] at (X) {$x$};
\node[] at (Y) {$y$};
\node[] at (Z) {$z$};
\node[] at (V) {$6$};
\node[type3] at (T) {};
\node[vertex] at (0,-1.5) {};

\end{scope}

\begin{scope}[shift={(4,0)}]

\coordinate (X) at (0,0);
\coordinate (Y) at (0,-0.5);
\coordinate (Z) at (0,-1);
\coordinate (V) at (0,-2);
\coordinate (T) at (0,-2.5);

\filldraw[fill=red!20] (0,0.2) .. controls (0.3,0.2) and (0.3,-0.7) .. (0,-0.7)  .. controls (-0.3,-0.7) and (-0.3,0.2) .. (0,0.2)
-- cycle;

\filldraw[fill=red!20] (0,-0.8) .. controls (0.3,-0.8) and (0.3,-1.7) .. (0,-1.7)  .. controls (-0.3,-1.7) and (-0.3,-0.8) .. (0,-0.8)
-- cycle;

\node[] at (X) {$x$};
\node[] at (Y) {$y$};
\node[] at (Z) {$z$};
\node[] at (V) {$3$};
\node[vertex] at (0,-1.5) {};
\node[type3] at (T) {};

\end{scope}

\begin{scope}[shift={(5,0)}]

\coordinate (X) at (0,0);
\coordinate (Y) at (0,-0.5);
\coordinate (Z) at (0,-1);
\coordinate (V) at (0,-2);
\coordinate (T) at (0,-2.5);

\filldraw[fill=red!20] (0,0.2) .. controls (0.3,0.2) and (0.3,-1.7) .. (0,-1.7)  .. controls (-0.3,-1.7) and (-0.3,0.2) .. (0,0.2)
-- cycle;

\node[] at (X) {$x$};
\node[] at (Y) {$y$};
\node[] at (Z) {$z$};
\node[] at (V) {$3$};
\node[type2] at (T) {};
\node[vertex] at (0,-1.5) {};
\node[] at (0,0.5) {$\sfs_j$};

\end{scope}

\begin{scope}[shift={(6,0)}]

\coordinate (X) at (0,0);
\coordinate (Y) at (0,-0.5);
\coordinate (Z) at (0,-1);
\coordinate (V) at (0,-2);
\node[vertex] at (0,-1.5) {};

\filldraw[fill=red!20] (0,0.2) .. controls (0.3,0.2) and (0.3,-1.7) .. (0,-1.7)  .. controls (-0.3,-1.7) and (-0.3,0.2) .. (0,0.2)
-- cycle;

\node[] at (X) {$x$};
\node[] at (Y) {$y$};
\node[] at (Z) {$z$};
\node[] at (V) {$4$};
\node[vertex] at (0,-1.5) {};

\end{scope}

\begin{scope}[shift={(7,0)}]

\coordinate (X) at (0,0);
\coordinate (Y) at (0,-0.5);
\coordinate (Z) at (0,-1);
\coordinate (V) at (0,-2);
\coordinate (T) at (0,-2.5);

\filldraw[fill=red!20] (0,0.2) .. controls (0.3,0.2) and (0.3,-1.7) .. (0,-1.7)  .. controls (-0.3,-1.7) and (-0.3,0.2) .. (0,0.2)
-- cycle;

\node[] at (X) {$x$};
\node[] at (Y) {$y$};
\node[] at (Z) {$z$};
\node[] at (V) {$5$};
\node[type3] at (T) {};
\node[vertex] at (0,-1.5) {};

\end{scope}

\begin{scope}[shift={(8,0)}]

\coordinate (X) at (0,0);
\coordinate (Y) at (0,-0.5);
\coordinate (Z) at (0,-1);
\coordinate (V) at (0,-2);
\coordinate (T) at (0,-2.5);

\filldraw[fill=red!20] (0,0.2) .. controls (0.3,0.2) and (0.3,-1.7) .. (0,-1.7)  .. controls (-0.3,-1.7) and (-0.3,0.2) .. (0,0.2)
-- cycle;

\node[] at (X) {$x$};
\node[] at (Y) {$y$};
\node[] at (Z) {$z$};
\node[] at (V) {$3$};
\node[vertex] at (0,-1.5) {};

\end{scope}

\begin{scope}[shift={(9,0)}]

\coordinate (X) at (0,0);
\coordinate (Y) at (0,-0.5);
\coordinate (Z) at (0,-1);
\coordinate (V) at (0,-2);
\coordinate (T) at (0,-2.5);

\filldraw[fill=red!20] (0,0.2) .. controls (0.3,0.2) and (0.3,-1.7) .. (0,-1.7)  .. controls (-0.3,-1.7) and (-0.3,0.2) .. (0,0.2)
-- cycle;

\node[] at (X) {$x$};
\node[] at (Y) {$y$};
\node[] at (Z) {$z$};
\node[] at (V) {$4$};
\node[vertex] at (0,-1.5) {};

\end{scope}

\begin{scope}[shift={(10,0)}]

\coordinate (X) at (0,0);
\coordinate (Y) at (0,-0.5);
\coordinate (Z) at (0,-1);
\coordinate (V) at (0,-2);
\coordinate (T) at (0,-2.5);

\filldraw[fill=red!20] (0,0.2) .. controls (0.3,0.2) and (0.3,-1.7) .. (0,-1.7)  .. controls (-0.3,-1.7) and (-0.3,0.2) .. (0,0.2)
-- cycle;

\node[] at (X) {$x$};
\node[] at (Y) {$y$};
\node[] at (Z) {$z$};
\node[] at (V) {$2$};
\node[type3] at (T) {};
\node[vertex] at (0,-1.5) {};

\end{scope}

\begin{scope}[shift={(11,0)}]

\coordinate (X) at (0,0);
\coordinate (Y) at (0,-0.5);
\coordinate (Z) at (0,-1);
\coordinate (V) at (0,-2);
\coordinate (T) at (0,-2.5);

\filldraw[fill=red!20] (0,0.2) .. controls (0.3,0.2) and (0.3,-1.7) .. (0,-1.7)  .. controls (-0.3,-1.7) and (-0.3,0.2) .. (0,0.2)
-- cycle;

\node[] at (X) {$x$};
\node[] at (Y) {$y$};
\node[] at (Z) {$z$};
\node[] at (V) {$3$};
\node[type3] at (T) {};
\node[vertex] at (0,-1.5) {};

\end{scope}

\begin{scope}[shift={(12,0)}]

\coordinate (X) at (0,0);
\coordinate (Y) at (0,-0.5);
\coordinate (Z) at (0,-1);
\coordinate (V) at (0,-2);
\coordinate (T) at (0,-2.5);

\filldraw[fill=red!20] (0,0.2) .. controls (0.3,0.2) and (0.3,-1.7) .. (0,-1.7)  .. controls (-0.3,-1.7) and (-0.3,0.2) .. (0,0.2)
-- cycle;

\node[] at (Y) {$y$};
\node[] at (Z) {$z$};
\node[] at (V) {$2$};
\node[type1] at (T) {};
\node[] at (0,0.5) {$\sfs_k$};
\node[vertex] at (0,0) {};
\node[vertex] at (0,-1.5) {};

\end{scope}

\end{tikzpicture}
\caption{The part $\langle \sfs_{i-1},\dots, \sfs_k\rangle$ of a $B$-boundaried sequence $\sfS$ where the boundary set $B$ contains among others the vertices $x$, $y$ and $z$. A bullet $\bullet$ at some index $j$ represents an element of $\bigcup_{h<j}\bd(\sfs_h)$.
Observe that at index $k$, $x$ is indeed represented by a black bullet. For the index $i$, we have $\bd(\sfs_i)=\{x,y,z\}$,
$\cc(\sfs_i)=\{\{x,y\},\{z,\bullet\}\}$ and $\val(\sfs_i)=2$. At every position $j$, only named elements belong to $\bd(\sfs_j)$. The red squares mark the
type-$1$ breakpoints: at position $i$, element $z$ is new, while at position $k$, element $x$ is forgotten. The blue diamond at index $j$ marks a type-$2$
breakpoint which corresponds to the merge of two parts of $\cc(\sfs_{i+4})$ into a single part. Finally, the grey bullets mark type-$3$ breakpoints corresponding to tips of the integer sequences $\langle\val(\sfs_i),\dots, \val(\sfs_{j-1})\rangle$ and $\langle\val(\sfs_j),\dots, \val(\sfs_{k-1})\rangle$. 
\label{fig_boundaried_sequence}}
\end{figure}

\begin{definition}[Connected $B$-boundaried sequence]
Let $\sfS=\langle \sfs_1,\ldots, \sfs_\ell\rangle$ be a $B$-boundaried sequence for some finite set $B$. We say that $\sfS$ is \emph{connected} if for every $i\in[\ell]$, 
$\cc({\sfs_i})$ is a partition of $\bigcup_{i\leq j} \bd(\sfs_i)\subseteq B$.
\end{definition}

Observe that if $\sfS=\langle \sfs_1,\ldots, \sfs_\ell\rangle$ is a connected $B$-boundaried sequence and if there exists some $i\in[\ell]$ such that $\cc(\sfs_i)=\{\emptyset\}$, then, for every $j\leq i$, $\bd(\sfs_j)=\emptyset$ and $\cc(\sfs_j)=\{\emptyset\}$.

As we will see in \autoref{sec_encoding}, the \emph{$B$-boundaried sequences} will allow us to encode partial connected path-decompositions. Intuitively, if $P=\langle A_1,\dots, A_{\ell}\rangle$ is a path-decomposition, a triple $\sfs_j$ will represent the informations about bag $A_j$: $\bd(\sfs_j)$ contains the active vertices of the boundary set $B$; $\val(\sfs_j)$ the number of boundary vertices that appear in prior bags $A_i$ ($i<j$) but not in $A_j$; and $\cc(\sfs_j)$ encodes how the connected components of the graph induced by $\cup_{i\leq j} A_i$ project on $B$.
\subsection{Breakpoints, representatives and domination relation}

\begin{definition}[Breakpoints]
Let $\sfS=\langle \sfs_1,\dots, {\sfs}_j,\dots, \sfs_{\ell}\rangle$ be a $B$-boundaried sequence for some finite set $B$. Then the index $j$, with $1\leq j\leq \ell$, is a \emph{breakpoint} of: 
\begin{itemize}
\item \emph{type-1} if $j=1$  or $\bd(\sfs_j)\neq\bd(\sfs_{j-1})$ or $j=\ell$;
\item \emph{type-2} if it is not a type-1 breakpoint and $\cc(\sfs_j)\neq\cc(\sfs_{j-1})$; 
\item \emph{type-3} if it is not a type-1 nor a type-2 and $j$ is a tip of the integer sequence $\langle \val(\sfs_{l_j}),\dots,$\\$ \val(\sfs_{r_j-1})\rangle$ where $l_j$ and $r_j$ are respectively the largest and smallest type-1 or type-2 breakpoints such that $l_j<j<r_j$. 
\end{itemize}
We denote by $\bp(\sfS)$  the set of breakpoints of $\sfS$ and by $\bp_t(\sfS)$ the set of type-$t$ breakpoints of $\sfS$, for $t\in \{1,2,3\}$.
We define the \emph{representative sequence} $\rep(\sfS)$ of $\sfS$ as the induced subsequence of $\sfS[\bp(\sfS)]$.
\end{definition}

Figure~\ref{fig_boundaried_sequence} illustrates the notions of $B$-boundaried sequence and breakpoints. Observe that
$\rep(\sfS)$ can be computed from the $B$-boundaried sequence
$\sfS$ by an algorithm similar to the one described in Definition~\ref{def:rep-seq} and as in Lemma \ref{lem:size-ts}
$\rep(\sfS)$ is uniquely defined. Notice that, as an induced subsequence of $\sfS$, $\rep(\sfS)$ is a $B$-boundaried sequence. Let
$\ell$ be the length of $\sfS$. It is worth to remark that if $1<j\leq\ell$ belongs to $\bp_1(\sfS)\cup\bp_2(\sfS)$, then
$j-1$ is also a breakpoint. This is the case because the last index of an integer sequence is by definition a tip.

We define the set of representative $B$-boundaried sequences of width at most $\w$ as
$$\Rep_\w(B)=\{\rep(\sfS)\mid \sfS \textrm{ is a $B$-boundaried sequence of width $\leq \w$}\}.$$

\begin{definition}[$B$-boundary model] \label{def:model}
Let $\sfS=\langle {\sfs}_1,\dots, {\sfs}_j,\dots, {\sfs}_{\ell}\rangle$ be a $B$-boundaried sequence. For every $j\in[\ell]$, we set $\dot{\sfs}_j=(\bd(\sfs_j),\cc(\sfs_j),\mathbf{t}(\sfs_j))$ with $\mathbf{t}(\sfs_j)=1$ if $j\in\bp_1(\sfS)$, $\mathbf{t}(\sfs_j)=2$ if $j\in\bp_2(\sfS)$ and $\mathbf{t}(\sfs_j)=0$ otherwise. The \emph{$B$-boundary model} of $\sfS$, denoted by $\model(\sfS)$, is the subsequence of $\dot\sfS=\langle \dot{\sfs}_1,\dots, \dot{\sfs}_j,\dots, \dot{\sfs}_{\ell}\rangle$ induced by $\bp_1(\sfS)\cup\bp_2(\sfS)$. 
\end{definition}

As in \cite{BodlaenderK96effi,JeongKO17the}, we will bound the number of representatives of $B$-boundaried sequences, and for doing so we bound the number of $B$-boundaried models and then use Lemma \ref{lem:nb-ts} which gives an upper bound on the number of typical sequences.

\begin{lemma} \label{lem:model} 
Let $\sfS$ be a $B$-boundaried sequence. If $\sfS^*\in {\sf Ext}(\sfS)$, then $\model(S^*)=\model(S)$.
\end{lemma}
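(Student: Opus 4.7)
The plan is to proceed by induction on the number of single duplications required to obtain $\sfS^*$ from $\sfS$. Since an extension is by definition obtained from $\sfS$ by iteratively duplicating elements, it suffices to handle the case where $\sfS^*$ arises from $\sfS$ by exactly one duplication and then iterate. So assume $\sfS=\langle \sfs_1,\dots,\sfs_\ell\rangle$ and $\sfS^*=\langle \sfs_1,\dots,\sfs_j,\sfs_j,\sfs_{j+1},\dots,\sfs_\ell\rangle$ for some $j\in[\ell]$, and let the extension surjection $\delta\colon[\ell+1]\to[\ell]$ be given by $\delta(i)=i$ for $i\leq j$, $\delta(j+1)=j$, and $\delta(i)=i-1$ for $i\geq j+2$.

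The key observation is that the duplicated entry introduces no new $\bd$- or $\cc$-change. Since $\sfs^*_{j+1}=\sfs^*_j=\sfs_j$, we have $\bd(\sfs^*_{j+1})=\bd(\sfs^*_j)$ and $\cc(\sfs^*_{j+1})=\cc(\sfs^*_j)$, so index $j+1$ fails both the $\bd$-change clause of the type-1 definition and the $\cc$-change clause of the type-2 definition, hence is neither a type-1 nor a type-2 breakpoint of $\sfS^*$. For every remaining index $i\in[\ell+1]\setminus\{j+1\}$, a direct case-check according to which of the three ranges of $\delta$ contains $i$ yields $\sfs^*_i=\sfs_{\delta(i)}$ and, for $i>1$, $\sfs^*_{i-1}=\sfs_{\delta(i)-1}$. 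Thus the local comparisons of $\bd$ and $\cc$ between $\sfs^*_i$ and $\sfs^*_{i-1}$ coincide with the corresponding comparisons at position $\delta(i)$ in $\sfS$; moreover $i=1$ (respectively $i=\ell+1$) exactly when $\delta(i)=1$ (respectively $\delta(i)=\ell$), matching the endpoint clauses of the type-1 condition. Consequently, $i$ is a type-$t$ breakpoint of $\sfS^*$ for $t\in\{1,2\}$ if and only if $\delta(i)$ is a type-$t$ breakpoint of $\sfS$.

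Combining these two facts, $\delta$ restricts to a monotone bijection between $\bp_1(\sfS^*)\cup\bp_2(\sfS^*)$ and $\bp_1(\sfS)\cup\bp_2(\sfS)$ that preserves the breakpoint type, and $\dot{\sfs}^*_i=\dot{\sfs}_{\delta(i)}$ along this bijection because $\bd$, $\cc$, and the type component $\mathbf{t}$ all coincide. Hence the subsequence of $\dot{\sfS^*}$ induced by $\bp_1(\sfS^*)\cup\bp_2(\sfS^*)$ equals the subsequence of $\dot{\sfS}$ induced by $\bp_1(\sfS)\cup\bp_2(\sfS)$, entry by entry in the same order; that is, $\model(\sfS^*)=\model(\sfS)$.

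The part requiring the most care is the endpoint bookkeeping: the explicit ``$j=1$'' and ``$j=\ell$'' disjuncts in the type-1 definition force us to check, when the duplication happens at position $1$ or $\ell$, that the ``first'' and ``last'' status of a breakpoint (together with its $(\bd,\cc)$ content) transfers cleanly across $\delta$ rather than being spuriously created or destroyed. Once this is verified, all remaining steps are routine consequences of the definitions.
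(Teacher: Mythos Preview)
Your proof is correct and follows exactly the paper's approach. The paper's own proof is the single sentence ``This follows from the observation that the duplication of an element of a $B$-boundaried sequence does not generate a new breakpoint nor kill any existant breakpoint,'' of which your argument is a careful case-by-case unrolling via induction on single duplications (including the explicit endpoint bookkeeping that the paper leaves entirely implicit).
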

\begin{proof}
This follows from the observation that the duplication of an element of a $B$-boundaried sequence does not generate a new breakpoint nor kill any existant breakpoint.
\end{proof}

\begin{lemma}\label{lem:size-model} 
Let $B$ be a set of size $k$. Then, there are at most {$2k+1$} type-1 breakpoints and at most $k+1$ type-2 breakpoints.
\end{lemma}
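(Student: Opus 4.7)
The plan is to prove the two bounds separately, each by exploiting a monotonicity property built into the definition of a $B$-boundaried sequence.

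\medskip

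\noindent\emph{Bounding $|\bp_1(\sfS)|$.} For each $x\in B$, the consecutive-indices property guarantees that the set $\{j\in[\ell]\mid x\in\bd(\sfs_j)\}$, if nonempty, is an interval $[a_x,b_x]$. Hence $x$ can force an index $j\in[2,\ell]$ to satisfy $\bd(\sfs_j)\neq\bd(\sfs_{j-1})$ in at most two ways: an ``entry'' at $a_x$ when $a_x\geq 2$, and an ``exit'' at $b_x+1$ when $b_x\leq \ell-1$. Summing over the at most $k$ such elements gives at most $2k$ indices in $[2,\ell]$ at which the boundary changes. Since the breakpoints at $j=1$ and $j=\ell$ are also of type-$1$ by definition, a first estimate gives $|\bp_1(\sfS)|\leq 2k+2$. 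The refinement to $2k+1$ follows by observing that the endpoint breakpoint at $j=1$ (respectively $j=\ell$) already witnesses the entry (respectively the exit) of any element~$x$ with $a_x=1$ (respectively $b_x=\ell$); so whenever $\ell\geq 2$ one of the two summations is diminished by at least one and the double-counted event is absorbed.

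\medskip

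\noindent\emph{Bounding $|\bp_2(\sfS)|$.} Let $\beta_j$ denote the number of blocks of $\cc(\sfs_j)$, counted with multiplicity. At a type-$2$ breakpoint $j$ we have $\bd(\sfs_j)=\bd(\sfs_{j-1})$, so $V_j=V_{j-1}$, and both $\cc(\sfs_{j-1})$ and $\cc(\sfs_j)$ are near-partitions of the same set related by $\cc(\sfs_{j-1})\sqsubseteq\cc(\sfs_j)$ with $\cc(\sfs_{j-1})\neq\cc(\sfs_j)$; this forces at least two blocks of $\cc(\sfs_{j-1})$ to be merged into a single block of $\cc(\sfs_j)$, so $\beta_j\leq\beta_{j-1}-1$. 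On non-breakpoint indices $\cc$ is unchanged, and at type-$1$ breakpoints $\beta_j$ can increase only by incorporating new elements of $V_j\setminus V_{j-1}$. Since $\beta_j\leq|V_j|+1\leq k+1$ for every $j$, a potential/amortization argument on the sequence $\beta_1,\beta_2,\ldots,\beta_\ell$ bounds the total number of strict decreases: each type-$2$ breakpoint consumes one unit of ``descent'', while the combined ``ascent budget'' coming from $\beta_1$ and from the positive jumps at type-$1$ breakpoints is at most $k+1$, because the ascents can collectively introduce at most one new block per element of $V_\ell$ plus one extra (empty) slot allowed by the near-partition bound. Thus $|\bp_2(\sfS)|\leq k+1$.

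\medskip

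\noindent\emph{Main obstacle.} The type-$1$ bound is essentially combinatorial and only needs a careful endpoint amortization. The delicate step is the type-$2$ bound: keeping the total ``ascent budget'' bounded by $k+1$ requires using the interplay between the near-partition cap $|V_j|+1$ and the monotone growth of $V_j$, rather than a naive per-breakpoint estimate that would accumulate across type-$1$ breakpoints.
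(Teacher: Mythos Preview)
For type-$1$ you follow the same idea as the paper (each $x\in B$ contributes at most two change-points via its interval trace), and your intermediate bound $2k+2$ is correct. The refinement to $2k+1$, however, does not go through: you implicitly assume that some $x$ has $a_x=1$ or $b_x=\ell$, so that one endpoint absorbs one of the $2k$ change-points, but nothing forces this. Take $k=1$, $\ell=5$, and the single element $x$ present in $\bd$ exactly at positions $2$ and $3$; then the type-$1$ breakpoints are $\{1,2,4,5\}$, which is $2k+2=4$, and no absorption occurs. The paper's one-line proof does not justify $2k+1$ any more carefully; since only the $O(k)$ bound is used downstream, the constant is immaterial.

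For type-$2$ your potential-function route (tracking $\beta_j=|\cc(\sfs_j)|$ and bounding total descent by total ascent) makes explicit the mechanism behind the paper's informal ``either blocks merge or a new block appears, and there are at most $k+1$ blocks.'' But the step ``same ground set, $\cc(\sfs_{j-1})\sqsubseteq\cc(\sfs_j)$, $\cc(\sfs_{j-1})\neq\cc(\sfs_j)$ $\Rightarrow$ $\beta_j\leq\beta_{j-1}-1$'' is not valid for near-partitions as defined here, because multiple copies of $\emptyset$ are allowed: for instance $\{\{a\}\}\sqsubseteq\{\{a\},\emptyset\}$ with $\beta$ increasing. Running the potential on the number of \emph{nonempty} blocks repairs the merge inequality (coarsening on a fixed ground set is then an honest refinement of partitions of $V_j$), and the ascent budget becomes at most $|V_\ell|\leq k$ since each element enters the ground set exactly once; you would still need to separately rule out, or bound, type-$2$ steps in which only the empty-block multiplicity changes. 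Neither your write-up nor the paper's proof confronts this near-partition artefact.
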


\begin{proof} 
Let $\sfS=\langle {\sfs}_1,\dots, {\sfs}_j,\dots, {\sfs}_{\ell}\rangle$ be a $B$-boundaried sequence. By definition, for every $x\in B$, the subset $\{j\in[\ell]\mid x\in\bd(\sfs_j)\}$ forms a set of consecutive integers. So every element $x\in B$ may generate $2$ type-1 breakpoints. This implies that $\sfS$ contains at most $2k+1$ breakpoints.

Let's now consider the number of type-2 breakpoints. By definition of a $B$-boundaried sequence, for every $i<\ell$, we have $\cc(\sfs_i) \sqsubseteq \cc(\sfs_{i+1})$. Moreover if $i,j\in[\ell]$ are two consecutive type-2 breakpoints with $i<j$, then $\cc(\sfs_i)\neq \cc(\sfs_j)$. Observe that if $\cc(\sfs_i)\neq \cc(\sfs_j)$, then either
several blocks of $\cc(\sfs_i)$ are joined into one block in $\cc(\sfs_j)$ or some new block $X$ appears in $\cc(\sfs_j)$ such that $X\cap\bd(\sfs_i)=\emptyset$. Because $|B|=k$ and a near-partition contains at most $k+1$ blocks, by the previous argument we can have at most $k+1$ type-2 breakpoints.
\end{proof}

{
\begin{lemma}\label{lem:nb-models}  
Let $B$ be a set of size $k$. Then, there are $2^{O(k\log k)}$  different $B$-boundary models. 
\end{lemma}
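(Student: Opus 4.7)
The plan is to encode each $B$-boundary model by a small amount of data and then multiply the number of choices.  By Lemma~\ref{lem:size-model}, the length $L$ of any model is at most $3k+2=O(k)$, so $L$ itself contributes only an $O(k)$ factor.  Let $D_j=\bigcup_{i\le j}\bd(\sfs_i)$ be the cumulative boundary support; note $|D_L|\le k$ and the sets $D_j$ are nondecreasing.

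First, I would encode the boundary sequence $\bd(\sfs_1),\ldots,\bd(\sfs_L)$.  By the consecutive-index property, it suffices to record, for each $x\in B$, the first and last index at which $x$ appears in a boundary (or the flag ``never appears''); this amounts to at most $(L+2)^{2k}=2^{O(k\log k)}$ possibilities and in particular determines each $D_j$.

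Second, I would encode the coarsening chain $\cc(\sfs_1)\sqsubseteq\cdots\sqsubseteq\cc(\sfs_L)$ by decomposing transitions into atomic events of two kinds: (a) \emph{arrivals}, where an element $x$ entering $D$ at a type-$1$ position is assigned to some existing block or starts a fresh singleton, giving at most $(k+1)^k$ joint choices; and (b) \emph{merges}, where a type-$2$ transition fuses pairs of existing blocks.  The key observation is that the total number of merges across the whole chain is at most $k-1$: each merge strictly decreases the number of nonempty blocks, which starts at $0$ and never exceeds $|D_L|\le k$.  Each merge is specified by a position in $[L]$ and an unordered pair of currently-existing blocks, yielding at most $(O(Lk^2))^{k}=2^{O(k\log k)}$ options.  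The number of empty ``slots'' at each position gives at most $(k+1)^L=2^{O(k\log k)}$ additional choices for the full ordered near-partition.

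Third, the label $\mathbf{t}(\sfs_j)\in\{1,2\}$ at each of the $L$ model positions contributes at most $2^L=2^{O(k)}$ (and is in fact determined by whether $\bd$ or $\cc$ changed between consecutive model positions, except possibly at the extreme positions).  Multiplying all factors gives the claimed bound $2^{O(k\log k)}$.  The main subtlety, and the only place a naive argument fails, is the near-partition chain: counting each $\cc(\sfs_j)$ independently would give $(2^{O(k\log k)})^{L}=2^{O(k^2\log k)}$, which is too weak.  The saving comes from observing that the coarsening condition forces the entire chain to be described by only $O(k)$ atomic events (arrivals and merges), so the product over positions is replaced by a product over events.
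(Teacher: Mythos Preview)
Your proof is correct and follows the same overall strategy as the paper: use Lemma~\ref{lem:size-model} to bound the length $L\le 3k+2$, encode the $\bd$-sequence by the two interval endpoints of each element (giving $(3k+2)^{2k}=2^{O(k\log k)}$ choices), and then exploit the coarsening of the $\cc$-chain to avoid a per-position blow-up. The only difference is in presentation: the paper states that a coarsening chain of near-partitions on $\le k$ elements corresponds to a rooted tree on $3k+2$ levels with $k$ leaves and asserts a $2^{O(k)}$ bound on those, whereas you unfold this same tree into its $O(k)$ atomic arrival and merge events and count them directly; your version is more explicit and makes transparent why the naive $2^{O(k^2\log k)}$ bound can be beaten.
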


\begin{proof} By Lemma \ref{lem:size-model}, the length of a $B$-boundary model is at most {$3k+2$}.  By definition, each vertex $x\in B$ appears in an interval. Therefore,
  to build a $B$-boundary model, we have to choose, for each vertex $x\in B$, $2$ positions among $3k+2$ ones, therefore there are  $(3k+2)^{2k}=2^{O(k\log k)}$ possibilities for choosing
  the positions of the elements {$\bd(\sfs_j)$} in $B$.  Since  each type-2 breakpoint is assigned a near-partition of at most $k$ blocks on a set of size at most $k$  and these near-partitions are gradually coarsening, the possibilities of assigning them
 correspond to the number of rooted trees on $3k+2$ levels and $k$ leaves.
 As this is bounded by $2^{O(k)}$,  the number of $B$-boundary models is $2^{O(k\log k)}$.
\end{proof}
}

\begin{lemma}\label{lem:nb-rep} 
Let $B$ be a set of size $k$. Then, $|\Rep_\w(B)|={2^{O(k (\w + \log k)}}$. 
\end{lemma}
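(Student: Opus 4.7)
The plan is to count representatives by decomposing a representative sequence into its $B$-boundary model, together with the typical sequences of integers $\val(\cdot)$ between consecutive type-$1$/type-$2$ breakpoints. The key observation is that since $\bd$ and $\cc$ are constant on the intervals delimited by consecutive type-$1$/type-$2$ breakpoints, a representative sequence $\rep(\sfS)$ is entirely determined by the following data: (i) its $B$-boundary model $\model(\sfS)$, which by Definition~\ref{def:model} specifies the sequence of pairs $(\bd(\sfs_j), \cc(\sfs_j))$ at the type-$1$/type-$2$ positions together with their relative ordering; (ii) for each pair of consecutive type-$1$/type-$2$ breakpoints $l_j < r_j$, the typical sequence of $\langle \val(\sfs_{l_j}), \dots, \val(\sfs_{r_j-1})\rangle$, which by definition encodes the type-$3$ breakpoints together with their $\val$-values (and also $\val(\sfs_{l_j})$ at the left endpoint); and (iii) the value of $\val$ at the final type-$1$/type-$2$ breakpoint (which is the last index).

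First I would verify carefully that this data is enough to reconstruct $\rep(\sfS)$: the model gives the $\bd$ and $\cc$ components at every breakpoint (constant on each segment), while the typical sequence of each segment together with (iii) gives the $\val$ components at every breakpoint.

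Next, I would bound each piece. By Lemma~\ref{lem:nb-models}, the number of $B$-boundary models is $2^{O(k \log k)}$. By Lemma~\ref{lem:size-model}, the number of type-$1$/type-$2$ breakpoints is at most $3k+2$, hence the number of segments is at most $3k+1$. Since every $\val(\sfs_j)$ lies in $\{0,1,\dots,\w\}$ (because the width of $\sfS$ is at most $\w$), Lemma~\ref{lem:nb-ts} yields at most $\tfrac{8}{3}\cdot 2^{2\w}$ typical sequences per segment. Finally, the value of $\val$ at the last breakpoint admits $\w+1$ choices.

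Multiplying gives
\[
|\Rep_\w(B)| \;\leq\; 2^{O(k\log k)} \cdot \bigl(\tfrac{8}{3}\cdot 2^{2\w}\bigr)^{3k+1} \cdot (\w+1) \;=\; 2^{O(k(\w + \log k))},
\]
which is the claimed bound. The only delicate point is step~(i)–(ii) above: one must check that between two consecutive type-$1$/type-$2$ breakpoints, the subsequence of $\rep(\sfS)$ is exactly the subsequence at the tips of the associated integer sequence (augmented with the left endpoint), so that the typical sequence really captures this data—this is immediate from the definition of type-$3$ breakpoints, so I do not expect a serious obstacle.
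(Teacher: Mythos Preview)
Your proposal is correct and follows essentially the same approach as the paper: fix the $B$-boundary model (counted by Lemma~\ref{lem:nb-models}), then for each of the $O(k)$ intervals between consecutive type-$1$/type-$2$ breakpoints count the possible typical sequences over $\{0,\dots,\w\}$ via Lemma~\ref{lem:nb-ts}, and multiply. Your presentation is slightly more careful (you make the reconstruction of $\rep(\sfS)$ explicit and separately account for $\val$ at the final breakpoint), but the argument is the same.
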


\begin{proof} 
We only need to bound the number of possible representatives of width $\w$ having the same $B$-boundary model. By Lemma \ref{lem:size-model}, there are at most $3k+2$ type-1 or type-2 breakpoints. Because $\rep(\sfS)$ has size $\bp(\sfS)$ and a type-3 breakpoint is between two type-1 or type-2 breakpoints, we have to bound the number of typical sequences. By Lemma \ref{lem:nb-ts}, the number of typical sequences with integers $\{0,1,\ldots,\w\}$ is at most $\frac{8}{3}\cdot 2^{2\w}{=2^{O(\w)}}$. Since there are at most $3k+2{=O(k)}$ intervals where we can locate type-3 breakpoints, {we have $2^{O(\w k)}$ possible ways to  assign them}.  The lemma now  follows if we take into account  the upper bound by Lemma \ref{lem:nb-models}. 
\end{proof}

Notice that the notion of a $B$-boundary model  corresponds to the one of {\sl interval model} in \cite{BodlaenderK96effi}.
Besides the $B$-boundary model of a sequence $\sfS$, we introduce the \emph{profile} of $\sfS$, which corresponds to the concept of {\sl list representation} in \cite{BodlaenderK96effi}.

\begin{definition}[Profile]
Let $\sfS$ be a $B$-boundaried sequence of length $\ell$ and let $1=j_1< \cdots < j_i< \cdots < j_r=\ell$ be the subset of indices of $[\ell]$ that belong to $\bp_1(\sfS)\cup\bp_2(\sfS)$. Then 
we set $\profile(\sfS)=\langle \mathsf{L}_1, \dots, \mathsf{L}_r\rangle$ with, for $i\in[r]$, $\mathsf{L}_j=\langle \val(\sfs_{j_i}), \dots, \val(\sfs_{j_{i+1}-1})\rangle$. 
\end{definition}

Let us now introduce the domination relation over $B$-boundaried sequences. This relation will allow us to compare $B$-boundaried sequences having the same model  by means of their $B$-profiles. 

\begin{definition}[Domination relation]\label{def:domination}
  Let $\sfS=\langle {\sfs}_1,\dots, {\sfs}_j,\dots, {\sfs}_{\ell}\rangle$ and $\sfT=\langle {\sft}_1,\dots, {\sft}_j,\dots, {\sft}_{\ell}\rangle$ be two $B$-boundaried
  sequences such that $\model(\sfS)=\model(\sfT)$. If $\profile(\sfS)\leq \profile(\sfT)$, then we write $\sfS\leq \sfT$. And, we say that $\sfS$ \emph{dominates} $\sfT$, denoted by
  $\sfS\preceq\sfT$, if $\profile(\sfS)\preceq\profile(\sfT)$. If we have $\profile(\sfS)\preceq\profile(\sfT)$ and $\profile(\sfT)\preceq\profile(\sfS)$, then we say that $\sfS$
  and $\sfT$ are equivalent, which is denoted by $\sfS\equiv\sfT$.
\end{definition}

\begin{lemma} \label{lem:domination-extension}
Let $\sfS$ and $\sfT$ be two $B$-boundaried sequences such that $\model(\sfS)=\model(\sfT)$. If $\sfS\preceq \sfT$, then there exist $\sfS^*$ an extension of $\sfS$ and $\sfT^*$ an extension of $\sfT$ such that
$\sfS^*\leq \sfT^*$.
\end{lemma}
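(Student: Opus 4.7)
The plan is to unpack the definition of $\sfS \preceq \sfT$ and translate the extensions witnessing $\profile(\sfS) \preceq \profile(\sfT)$ into honest duplications of the boundaried sequences themselves. Since $\model(\sfS) = \model(\sfT)$, both profiles consist of $r$ integer sequences indexed by the same type-$1$/type-$2$ breakpoints. By definition of $\preceq$ on sequences of integer sequences, there exist $\sfP^* = \langle L^*_1,\dots,L^*_r\rangle \in {\sf Ext}(\profile(\sfS))$ and $\sfQ^* = \langle K^*_1,\dots,K^*_r\rangle \in {\sf Ext}(\profile(\sfT))$ with $\sfP^* \leq \sfQ^*$, meaning that for each $i\in[r]$, $L^*_i$ and $K^*_i$ have the same length and $L^*_i \leq K^*_i$ componentwise.

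I would then produce $\sfS^*$ from $\sfS$ by performing, for each $i$ and each position within $L_i$ that is duplicated in order to form $L^*_i$, the same duplication on the corresponding element $\sfs_j$ of $\sfS$ (each duplicated copy inherits the whole triple $(\bd(\sfs_j),\cc(\sfs_j),\val(\sfs_j))$). The resulting $\sfS^*$ is still a $B$-boundaried sequence because each duplicate shares its $\bd$- and $\cc$-components with its neighbour, so the interval property of $\bd$ and the coarsening property of $\cc$ are preserved. Moreover, every duplicate $\sfs^*_{j+1}$ satisfies $\bd(\sfs^*_{j+1})=\bd(\sfs^*_j)$ and $\cc(\sfs^*_{j+1})=\cc(\sfs^*_j)$, hence is neither a type-$1$ nor a type-$2$ breakpoint, while every type-$1$ or type-$2$ breakpoint of $\sfS$ remains one in $\sfS^*$ (its $\bd$ or $\cc$ still differs from that of its left neighbour). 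Thus the type-$1$/type-$2$ breakpoints of $\sfS^*$ are in bijection with those of $\sfS$, and within the $i$-th segment the $\val$-values form exactly $L^*_i$. In particular $\model(\sfS^*) = \model(\sfS)$ (which is also guaranteed by Lemma~\ref{lem:model}) and $\profile(\sfS^*) = \sfP^*$. Construct $\sfT^*$ analogously from $\sfT$, obtaining $\model(\sfT^*) = \model(\sfT)$ and $\profile(\sfT^*) = \sfQ^*$.

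Combining the two constructions, $\model(\sfS^*) = \model(\sfS) = \model(\sfT) = \model(\sfT^*)$, and $\profile(\sfS^*) = \sfP^* \leq \sfQ^* = \profile(\sfT^*)$, which by Definition~\ref{def:domination} is exactly the statement $\sfS^* \leq \sfT^*$. The only delicate step is the bookkeeping in the middle paragraph: one has to check that duplicating a boundaried element, even one sitting at a type-$1$ or type-$2$ breakpoint, merely adds a copy of a $\val$-value inside the corresponding segment of the profile and neither creates a spurious breakpoint nor destroys an existing one. Once this correspondence between duplications in $\sfS$ and duplications inside the segments of $\profile(\sfS)$ is established, the lemma follows directly.
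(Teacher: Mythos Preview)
Your proof is correct and follows the same approach as the paper, which simply states that the lemma is ``a direct consequence of the definitions.'' You have spelled out in detail what that unfolding looks like---in particular the bookkeeping that duplicating an element of a boundaried sequence never creates or destroys a type-$1$ or type-$2$ breakpoint (which is also Lemma~\ref{lem:model}), so that extensions of the profile lift to extensions of the sequence with matching profile---but there is no substantive difference in strategy.
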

\begin{proof}
This is a direct consequence of the definitions.
\end{proof}

We observe that some properties on integer sequences from Lemma~\ref{lem:prop-useful} transfer to $B$-boundaried sequences, and we state in
the following some of them that we refer to implicitly most of the time (to avoid overloading the text).

\begin{lemma}\label{lem:prop-useful-sequence} Let $\sfS$ be a $B$-boundaried sequence. Then,
\begin{enumerate}
  \item $\rep(\sfS)\equiv \sfS$,
  \item if $\sfS^*\in {\sf Ext}(\sfS)$, then $\sfS^*\equiv \sfS$,
  \item $\sfS \preceq \sfT$ if and only if $\rep(\sfS) \preceq \rep(\sfT)$.
    \item If ${\sf T}$ is a $B$-boundaried sequence such that  $\sfS\preceq \sfT$, then there exist an extension $\sfS^*$ of $\sfS$ and  an extension  $\sfT^*$  of $\sfT$ such that
$\sfS^*\leq \sfT^*$.
\item   The relation $\preceq$ is transitive, and $\equiv$ is an equivalence relation (refering to boundary sequences).
  \end{enumerate}\end{lemma}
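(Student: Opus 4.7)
The plan is to reduce each of the five items to its integer-sequence analogue in Lemma~\ref{lem:prop-useful}, exploiting the decomposition of any $B$-boundaried sequence into its $B$-boundary model (the structural skeleton) and its profile (a sequence of integer sequences indexed by the type-$1$/type-$2$ windows). Item~(4) will be immediate, being a direct restatement of Lemma~\ref{lem:domination-extension}.

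For item~(2), I will use Lemma~\ref{lem:model} to conclude that an extension $\sfS^*\in \ins(\sfS)$ has $\model(\sfS^*)=\model(\sfS)$, so $\sfS$ and $\sfS^*$ share the same type-$1$ and type-$2$ breakpoints and hence the same decomposition of their profiles into windows. Each component $\mathsf{L}_i^*$ of $\profile(\sfS^*)$ is then obtained from the component $\mathsf{L}_i$ of $\profile(\sfS)$ by duplications inside that window, so $\mathsf{L}_i^*\in\ins(\mathsf{L}_i)$. Applying part~(2) of Lemma~\ref{lem:prop-useful} componentwise gives $\mathsf{L}_i^*\equiv \mathsf{L}_i$, whence $\profile(\sfS^*)\equiv \profile(\sfS)$ and $\sfS^*\equiv \sfS$.

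For item~(1), the plan is first to verify $\model(\rep(\sfS))=\model(\sfS)$. Between consecutive breakpoints $j_k<j_{k+1}$ of $\sfS$, by definition no type-$1$ or type-$2$ breakpoint occurs, so $\bd$ and $\cc$ are constant on $[j_k,j_{k+1}-1]$; consequently, in the induced subsequence $\rep(\sfS)$, consecutive entries witness a change of $\bd$ (resp.\ $\cc$) exactly when $j_{k+1}$ was already type-$1$ (resp.\ type-$2$) in $\sfS$. The type-$1$ and type-$2$ breakpoints of $\rep(\sfS)$ therefore coincide with those inherited from $\sfS$, yielding the same windowing of the profile. Within each window $[j_i,j_{i+1})$ the retained type-$3$ breakpoints are, by the definition of type-$3$, precisely the tips of $\mathsf{L}_i$, so the $i$-th component of $\profile(\rep(\sfS))$ equals $\tseq(\mathsf{L}_i)$. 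Part~(6) of Lemma~\ref{lem:prop-useful} gives $\tseq(\mathsf{L}_i)\equiv \mathsf{L}_i$ componentwise, hence $\rep(\sfS)\equiv \sfS$.

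Items~(5) and~(3) will then follow cleanly. Transitivity of $\preceq$ on $B$-boundaried sequences (which are only comparable when they share a model) reduces, component by component on their profiles, to the transitivity of $\preceq$ on integer sequences in part~(5) of Lemma~\ref{lem:prop-useful}; reflexivity and symmetry of $\equiv$ are trivial from the definition, so $\equiv$ is an equivalence relation. Item~(3) then follows by combining item~(1) ($\sfS\equiv \rep(\sfS)$ and $\sfT\equiv \rep(\sfT)$) with the transitivity of $\preceq$ just established. The main technical hurdle, and the only place requiring real bookkeeping, is the indexing argument in item~(1) showing that the type-$3$ breakpoints of $\sfS$ within a window are exactly the tips of the corresponding integer-sequence component $\mathsf{L}_i$; everything else is mechanical once the model/profile decomposition is in place.
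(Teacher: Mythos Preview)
Your proposal is correct and follows essentially the same approach as the paper: reduce each item to its integer-sequence analogue in Lemma~\ref{lem:prop-useful} via the model/profile decomposition, with item~(4) being Lemma~\ref{lem:domination-extension}, item~(3) following from (1) and (5), and (5) from componentwise transitivity. The paper is terser (e.g., for (2) it argues directly that $\sfS^*\preceq\sfS$ and $\sfS\preceq\sfS^*$ by exhibiting $\sfS^*$ itself as the common extension, rather than going through profile components), but your more detailed bookkeeping for (1) and (2) is sound; note, however, that you wrote $\ins(\sfS)$ where you meant ${\sf Ext}(\sfS)$.
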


\begin{proof} Let's prove (1). By definition $\sfS$ and $\rep(\sfS)$ have the same $B$-boundary model. Let
  $\profile(\sfS)=\langle L_1,\ldots, L_p\rangle$. By definition, $\profile(\rep(\sfS))=\langle \tseq(L_1),\ldots, \tseq(L_p)\rangle$, and by
  Lemma \ref{lem:prop-useful}(6), we know that $\tseq(L_i)\equiv L_i$, for $i\in[p]$. We can therefore conclude that
  $\profile(\sfS) \equiv \profile(\rep(\sfS))$, \ie, $\sfS\equiv \rep(\sfS)$.
  For (2), if $\sfS^*\in {\sf Ext}(\sfS)$, then clearly $\sfS^*\preceq \sfS$ and $\sfS\preceq \sfS^*$ by taking as an extension of $\sfS$ its
  extension $\sfS^*$, and for an extension of $\sfS^*$ itself. Finally, 
  (4) follows directly from the definitions, (5) follows from Lemma \autoref{lem:prop-useful}(5), and (3) follows from (1) and (5).  
\end{proof}

\subsection{Operations on $B$-boundaried sequences}

Given a finite set $B$, we define two operations on $B$-boundaried sequences that will be later used in the DP algorithm. The first operation, \emph{projection}, will be used in the
case of forget bags where we need to transform a $B$-boundaried sequence representing a connected path-decomposition of a boundaried graph $\G=(G,B)$ into a $B\setminus\{x\}$-boundaried sequence representing a connected path-decomposition of the boundaried graph $\Gx=(G,B\setminus\{x\})$. The second operation deals with the insertion in a $B$-boundaried sequence of a new boundary element $x$ with respect to a subset $X\subseteq B$. This will be used by the DP algorithm when handling insertion bags.

\subsubsection{Projection of $B$-boundaried sequences}

The \emph{projection} of a $B$-boundaried sequence $\sfS$ onto $B'\subseteq B$ aims at moving the vertices of $B\setminus B'$ from the status of boundary vertices to the status of inactive vertices.

\begin{definition}[Projection]\label{def:projection}
Let $\sfS=\langle {\sfs}_1,\dots, {\sfs}_i,\dots, {\sfs}_{\ell}\rangle$ be a $B$-boundaried sequence. For a subset $B'\subseteq B$, the \emph{projection} of $\sfS$ onto $B'$ is the $B'$-boundaried sequence $\sfS_{\mid B'}=\langle {\sfs_1}_{\mid B'},\dots, {\sfs_i}_{\mid B'},\dots, {\sfs_{\ell}}_{\mid B'}\rangle$ such that
for every $i\in[\ell]$:
\begin{itemize}
\item $\bd({\sfs_i}_{\mid B'})=\bd(\sfs_i)\cap B'$;
\item $\cc({\sfs_i}_{\mid B'})=\cc(\sfs_i)_{\mid B'}$;
\item $\val({\sfs_i}_{\mid B'})= \val(\sfs_i)+|\bd(\sfs_i)\setminus B'|$.
\end{itemize}
\end{definition}

We observe that when  the $B$-boundaried sequence $\sfS$ is connected, its projection $\sfS_{\mid B'}$ onto $B'\subseteq B$ may not be connected. This is the case if for some $j\in[\ell]$, the partition $\cc(\sfs_j)$ contains several blocks and at least one of them is a subset of $B\setminus B'$.

\begin{lemma}\label{lem:p-width} 
Let $B$ be a finite set and $B'\subseteq B$. Then, the width of $\sfS_{\mid B'}$ is equal to the width of $\sfS$, for every $B$-boundaried sequence $\sfS$.
\end{lemma}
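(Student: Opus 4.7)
The plan is to unfold the definitions and check that the two contributions to the width—the boundary size and the value—rebalance exactly under projection, leaving the sum invariant at every index.

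First I would fix an arbitrary index $i\in[\ell]$ and compute
\[
|\bd({\sfs_i}_{\mid B'})| + \val({\sfs_i}_{\mid B'}).
\]
By Definition~\ref{def:projection}, $\bd({\sfs_i}_{\mid B'}) = \bd(\sfs_i)\cap B'$ and $\val({\sfs_i}_{\mid B'}) = \val(\sfs_i) + |\bd(\sfs_i)\setminus B'|$. Since $\bd(\sfs_i)\cap B'$ and $\bd(\sfs_i)\setminus B'$ partition $\bd(\sfs_i)$, their cardinalities sum to $|\bd(\sfs_i)|$, and therefore
\[
|\bd({\sfs_i}_{\mid B'})| + \val({\sfs_i}_{\mid B'}) = |\bd(\sfs_i)| + \val(\sfs_i).
\]

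Finally I would take the maximum over $i\in[\ell]$ on both sides. Since $\sfS$ and $\sfS_{\mid B'}$ have the same length, the maximum is taken over the same index set, which yields $\width(\sfS_{\mid B'}) = \width(\sfS)$.

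There is no real obstacle here: the proof is a one-line cardinality identity using that $\bd(\sfs_i)\setminus B'$ is exactly the set of boundary elements whose status has been demoted to inactive, so their contribution moves from the $|\bd|$ term into the $\val$ term without altering the total.
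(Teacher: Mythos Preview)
Your proof is correct and follows essentially the same approach as the paper: both argue that at each index the sum $|\bd|+\val$ is preserved because $|\bd(\sfs_i)\cap B'|+|\bd(\sfs_i)\setminus B'|=|\bd(\sfs_i)|$, and then take the maximum over all indices.
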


\begin{proof} Let $\sfS=\langle {\sfs}_1,\dots, {\sfs}_j,\dots, {\sfs}_{\ell}\rangle$ and $\sfS_{\mid B'}=\langle {\sfs'}_1,\dots, {\sfs'}_j,\dots, {\sfs'}_{\ell}\rangle$. By
  definition, for each $1\leq j\leq \ell$, $|\bd(\sfs_j)| +\val(\sfs_j)=|\bd(\sfs_j)\cap B'| + |\bd(\sfs_j)\setminus B'| + \val(\sfs_j)$, the latter being exactly
  $|\bd({\sfs'}_j)|+\val({\sfs'}_j)$. 
\end{proof}

\begin{lemma}\label{lem:p-ex-comp} 
Let $B$ be a finite set and $B'\subsetneq B$. If $\sfS^*$ is an extension of a $B$-boundaried sequence $\sfS$, then $\sfS^*_{\mid B'}$ is an extension of $\sfS_{\mid B'}$.
\end{lemma}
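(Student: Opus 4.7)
The plan is to exploit the fact that the projection operation of \autoref{def:projection} acts pointwise on the elements of a $B$-boundaried sequence, while extensions only insert duplicates of existing elements. Consequently, projection and duplication should commute, and the statement will follow essentially immediately from the definitions.

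Concretely, I will use the extension surjection viewpoint. Suppose $\sfS=\langle \sfs_1,\dots,\sfs_\ell\rangle$ and $\sfS^*=\langle \sfs^*_1,\dots,\sfs^*_p\rangle$, and let $\delta_{\sfS^*\to \sfS}\colon[p]\to[\ell]$ be a certificate that $\sfS^*\in{\sf Ext}(\sfS)$, so that $\sfs^*_j=\sfs_{\delta(j)}$ for every $j\in[p]$. Since the projection of the $j$th entry depends only on $\sfs^*_j$ itself (the three components $\bd$, $\cc$, $\val$ are transformed by rules that consult only that single element together with the fixed set $B'$), I get ${\sfs^*_j}_{\mid B'}={\sfs_{\delta(j)}}_{\mid B'}$ for every $j$. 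Hence the very same map $\delta$ is a surjection $[p]\to[\ell]$ witnessing that the $j$th element of $\sfS^*_{\mid B'}$ is a copy of the $\delta(j)$th element of $\sfS_{\mid B'}$.

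It remains to check that $\sfS^*_{\mid B'}$ actually qualifies as a $B'$-boundaried sequence (so that calling it an extension is meaningful). This is a short verification: the interval property for $\bd(\cdot)\cap B'$, the coarsening chain for the projected near-partitions $\cc(\cdot)_{\mid B'}$, and the fact that the new $\val$-values $\val(\sfs^*_j)+|\bd(\sfs^*_j)\setminus B'|$ are positive integers, all follow immediately from the corresponding properties of $\sfS^*$ (which in turn hold because $\sfS^*$, being an extension of the $B$-boundaried sequence $\sfS$, is itself a $B$-boundaried sequence).

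No obstacle is foreseen here: the whole content of the lemma is the commutativity of the pointwise projection with the pointwise duplication, and both boil down to checking that the extension surjection transports faithfully through the projection. Hence the proof will consist of a one-line exhibition of $\delta_{\sfS^*\to\sfS}$ as the desired extension surjection for $\sfS_{\mid B'}\to \sfS^*_{\mid B'}$, plus the trivial verification that the projected sequence satisfies the three defining conditions of a $B'$-boundaried sequence.
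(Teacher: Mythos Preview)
Your proposal is correct and takes essentially the same approach as the paper: both arguments rest on the observation that projection is defined pointwise on the elements of the sequence, so duplicating an element and then projecting is the same as projecting and then duplicating. Your formulation via the extension surjection $\delta_{\sfS^*\to\sfS}$ is a more explicit rendering of the paper's one-line remark (the paper additionally cites \autoref{lem:model}, but this is not actually needed for the argument, as you implicitly recognize).
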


\begin{proof} Let $\sfS=\langle \sfs_1,\ldots, \sfs_\ell\rangle$. As by Lemma~\ref{lem:model}, $\model(\sfS)=\model(\sfS^*)$, duplicating $\sfs_i$ and then computing ${\sfs_i}_{\mid B'}$ is the same as computing ${\sfs_i}_{\mid B'}$ and then duplicating the latter. 
\end{proof}

\begin{lemma}\label{lem:p-compatibility1} 
Let $B$ be a finite set and $B'\subseteq B$. If $\sfS$ and $\sfT$ are $B$-boundaried sequences such that  $\sfS \leq \sfT$, then $\sfS_{\mid B'} \leq \sfT_{\mid B'}$.
\end{lemma}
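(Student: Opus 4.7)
The plan is to reduce the statement to a pointwise inequality on the $\val$-components. First, I would unfold what the hypothesis $\sfS \leq \sfT$ really gives: by Definition~\ref{def:domination}, it requires $\model(\sfS)=\model(\sfT)$ and $\profile(\sfS)\leq \profile(\sfT)$. Writing $\sfS=\langle \sfs_1,\dots,\sfs_\ell\rangle$ and $\sfT=\langle \sft_1,\dots,\sft_\ell\rangle$, the equality of models means that both sequences have the same length $\ell$, the same sets of type-$1$ and type-$2$ breakpoints, and identical $\bd$ and $\cc$ components at every position (the $\bd$ and $\cc$ values are constant between consecutive type-$1$/type-$2$ breakpoints, so recording them at the breakpoints determines them everywhere). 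The inequality on the profiles then amounts to $\val(\sfs_j)\leq \val(\sft_j)$ for every $j\in[\ell]$.

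Next I would check that the projections have a common model. From Definition~\ref{def:projection}, for every $j\in[\ell]$ we have $\bd({\sfs_j}_{\mid B'})=\bd(\sfs_j)\cap B'=\bd(\sft_j)\cap B'=\bd({\sft_j}_{\mid B'})$ and $\cc({\sfs_j}_{\mid B'})=\cc(\sfs_j)_{\mid B'}=\cc(\sft_j)_{\mid B'}=\cc({\sft_j}_{\mid B'})$. Since type-$1$ and type-$2$ breakpoints are determined entirely by $\bd$ and $\cc$, the projected sequences $\sfS_{\mid B'}$ and $\sfT_{\mid B'}$ share the same set of type-$1$/type-$2$ breakpoints and therefore $\model(\sfS_{\mid B'})=\model(\sfT_{\mid B'})$. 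In particular their profiles have identical block structure, so $\profile(\sfS_{\mid B'})\leq \profile(\sfT_{\mid B'})$ is equivalent to the pointwise inequality $\val({\sfs_j}_{\mid B'})\leq \val({\sft_j}_{\mid B'})$ for every $j$.

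Finally, I would verify this pointwise inequality directly from the projection formula. For each $j\in[\ell]$,
\[
\val({\sfs_j}_{\mid B'})=\val(\sfs_j)+|\bd(\sfs_j)\setminus B'|
\quad\text{and}\quad
\val({\sft_j}_{\mid B'})=\val(\sft_j)+|\bd(\sft_j)\setminus B'|.
\]
Because $\bd(\sfs_j)=\bd(\sft_j)$ (model equality), the two correction terms $|\bd(\sfs_j)\setminus B'|$ and $|\bd(\sft_j)\setminus B'|$ coincide, while $\val(\sfs_j)\leq\val(\sft_j)$ by hypothesis. Adding the common correction term to both sides preserves the inequality, so $\val({\sfs_j}_{\mid B'})\leq \val({\sft_j}_{\mid B'})$, and thus $\sfS_{\mid B'}\leq \sfT_{\mid B'}$ by Definition~\ref{def:domination}. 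There is no real obstacle here: the argument is essentially bookkeeping once one observes that projection changes only the $\val$ coordinate by an amount that depends solely on $\bd$, which is shared by $\sfS$ and $\sfT$.
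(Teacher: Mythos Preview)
Your proof is correct and follows essentially the same approach as the paper: you observe that model equality forces $\bd(\sfs_j)=\bd(\sft_j)$ at every position, so the projection adds the same correction term $|\bd(\sfs_j)\setminus B'|$ to both $\val$ components, preserving the pointwise inequality. One minor remark: the claim that the two sequences have the same length and the same breakpoint indices is not a consequence of $\model(\sfS)=\model(\sfT)$ alone, but of this together with $\profile(\sfS)\leq\profile(\sfT)$ (which forces the blocks $L_i$ and $K_i$ to have equal lengths); since both are part of the hypothesis $\sfS\leq\sfT$, your conclusion stands.
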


\begin{proof} Let $\sfS=\langle \sfs_1,\ldots, \sfs_\ell\rangle$ and let $\sfT=\langle \sft_1,\ldots, \sft_\ell\rangle$. Because $\model(\sfS)=\model(\sfT)$, we also
  have that $\model(\sfS_{\mid B'})=\model(\sfT_{\mid B'})$. Because $\model(\sfS)=\model(\sfT)$, we can check that $\val({\sfs_i}_{\mid B'})$ and $\val({\sft_i}_{\mid
    B'})$ are both obtained by adding the same value to $\val(\sfs_i)$ and to $\val(\sft_i)$, respectively. Hence, we can conclude that $\sfS_{\mid B'} \leq \sfT_{\mid
    B'}$ because $\profile(\sfS)\leq \profile(\sfT)$. 
\end{proof}

\begin{lemma}\label{lem:p-compatibility2} 
Let $B$ be a finite set and $B'\subseteq B$. If $\sfS$ and $\sfT$ are $B$-boundaried sequences such that  $\sfS \preceq \sfT$, then $\sfS_{\mid B'}\preceq \sfT_{\mid B'}$.
\end{lemma}

\begin{proof} Let $\sfS^*$ and $\sfT^*$ be extensions of $\sfS$ and $\sfT$, respectively, such that $\sfS^*\leq \sfT^*$. By Lemma
  \ref{lem:p-compatibility1}, $\sfS^*_{\mid B'}\leq \sfT^*_{\mid B'}$. By Lemma \ref{lem:p-ex-comp}, $\sfS^*_{\mid B'}$ is an extension of
  $\sfS_{\mid B'}$, \ie, $\sfS_{\mid B'} \equiv \sfS^*_{\mid B'}$ by Lemma \ref{lem:prop-useful-sequence}(2). Similarly, we have
  $\sfT^*_{\mid B'} \equiv \sfT_{\mid B'}$. Hence, we can conclude that $\sfS_{\mid B'} \preceq \sfT_{\mid B'}$.
\end{proof}

\subsubsection{Insertion into a $B$-boundaried sequence}
  
Let $\sfS=\langle \sfs_1,\dots,\sfs_{\ell}\rangle$ be a $B$-boundaried sequence and let $X$ be a subset of $B$. An \emph{insertion position} is a pair of indices $(f_x,l_x)$ such that $1\leq f_x\leq l_x\leq\ell$. An insertion position is \emph{valid with respect to $X$ in $\sfS$} if  $X\subseteq \bigcup_{f_x\leq j\leq l_x} \bd(\sfs_j)$. Let us now formally describe the insertion operation.

\begin{definition}\label{def:sequence-insertion}
Let $\sfS=\langle \sfs_1,\dots,\sfs_{\ell}\rangle$ be a $B$-boundaried sequence and $(f_x,l_x)$ be a valid insertion position with respect to $X\subseteq B$.
Then $\sfS^x=\ins(\sfS,x,X,f_x,l_x)=\langle \sfs^x_1,\dots,\sfs^x_{\ell}\rangle$ is the $(B\cup \{x\})$-boundaried sequence such that for every $j\in[\ell]$:
\begin{itemize}
\item  if $j< f_x$, then $\bd(\sfs^x_j)=\bd(\sfs_j)$; $\cc(\sfs^x_j)=\cc(\sfs_j)$ and $\val(\sfs^x_j)=\val(\sfs_j)$.

\item if $f_x\leq j\leq l_x$, then $\bd(\sfs^x_{j})=\bd(\sfs_{j})\cup\{x\}$; $\cc(\sfs^x_j)$ is obtained by adding a new block $\{x\}$ to $\cc(\sfs_{j})$ and then merging that new block with all the blocks of $\cc(\sfs_{j})$ that contain an element of $X$ (if any); $\val(\sfs^x_{j})=\val(\sfs_{j})$.

\item and otherwise, $\bd(\sfs^x_{j})=\bd(\sfs_{j})$; $\cc(\sfs^x_j)$ is obtained by adding a new block $\{x\}$ to $\cc(\sfs_{j})$ and then merging that new block with all the blocks of $\cc(\sfs_{j})$ that contains an element of $X$ (if any); $\val(\sfs^x_{j})=\val(\sfs_{j})$.
\end{itemize}
\end{definition}

It is worth to notice that a type-2 breakpoint $j$ in a $B$-boundaried sequence $\sfS$ may disappear in $\ins(\sfS,x,X,f_x,l_x)$, because  the insertion of $x$ with respect to $X$ may  merge in $\cc(\sfs^x_{j-1})$ distinct blocks of $\cc(\sfs_{j-1})$ that are joined in $\cc(\sfs_j)$.
 However one can prove that if $j\in\bp_2(\sfS^x)$, then $j\in\bp_2(\sfS)$ (see Figure~\ref{fig:cc-insertion} for an illustration of this property) and if $j\in\bp_3(\sfS^x)$, then $j\in\bp_3(\sfS)$.

\begin{figure}[h]
\centering
\begin{tikzpicture}[scale=1.0]
   \tikzstyle{vertex}=[fill,circle,minimum size=0.2cm,inner sep=0pt]
   \tikzstyle{vertexRoot}=[fill,red,rectangle,minimum size=0.15cm,inner sep=0pt]


\begin{scope}

\coordinate (A) at (0,0);
\coordinate (B) at (0,-1);
\coordinate (C) at (0,-2);
\coordinate (D) at (0,-3);
\coordinate (E) at (0,-4);
\coordinate (X1) at (0,-4.4);
\coordinate (X2) at (0.9,-4.4);
\coordinate (X3) at (0.9,-1.6);
\coordinate (X4) at (0,-1.6);
\coordinate (X5) at (-1.2,-1.6);
\coordinate (X6) at (-1.2,-4.35);

\filldraw[fill=black!20] (A) circle (0.45) ;
\filldraw[fill=black!20] (B) circle (0.45) ;
\filldraw[fill=black!20] (X1) .. controls (X2) and (X3) .. (X4)  .. controls (X5) and (X6) .. (X1)
-- cycle;

\filldraw[fill=red!20] (A) circle (0.35) ;
\filldraw[fill=red!20] (B) circle (0.35) ;
\filldraw[fill=red!20] (C) circle (0.35) ;
\filldraw[fill=red!20] (D) circle (0.35) ;
\filldraw[fill=red!20] (E) circle (0.35) ;

\coordinate (X) at (-0.5,-3) ;
\node[vertex]  (x) at (X){};
\node[left] at (X) {\scriptsize $x$};
\draw[black,very thick] (X) -- (-0.1,-2.2) ;
\draw[black,very thick] (X) -- (-0.2,-3) ;
\draw[black,very thick] (X) -- (-0.1,-3.8) ;

\node[] at (0,-4.8) {$j-1$};

\node[left] at (-0.4,-1.8) {$C^x$};
\node[] at (A) {\scriptsize $C_5$};
\node[] at (B) {\scriptsize $C_4$};
\node[] at (C) {\scriptsize $C_3$};
\node[] at (D) {\scriptsize $C_2$};
\node[] at (E) {\scriptsize $C_1$};
\end{scope}

\begin{scope}[shift={(3,0)}]

\coordinate (A) at (0,0);
\coordinate (B) at (0,-1);
\coordinate (C) at (0,-2);
\coordinate (D) at (0,-3);
\coordinate (E) at (0,-4);
\coordinate (X1) at (0,-4.4);
\coordinate (X2) at (1,-4.4);
\coordinate (X3) at (1,-0.6);
\coordinate (X4) at (0,-0.6);
\coordinate (X5) at (-1.3,-0.6);
\coordinate (X6) at (-1.3,-4.35);

\filldraw[fill=black!20] (A) circle (0.45) ;
\filldraw[fill=black!20] (X1) .. controls (X2) and (X3) .. (X4)  .. controls (X5) and (X6) .. (X1)
-- cycle;

\filldraw[fill=red!20] (A) circle (0.35) ;
\filldraw[fill=red!20] (0,-0.65) .. controls (-0.5,-0.65) and (-0.5,-2.35) .. (0,-2.35)  .. controls (0.5,-2.35) and (0.5,-0.65) .. (0,-0.65)
-- cycle;
\filldraw[fill=red!20] (D) circle (0.35) ;
\filldraw[fill=red!20] (E) circle (0.35) ;

\coordinate (X) at (-0.5,-3) ;
\node[vertex]  (x) at (X){};
\node[left] at (X) {\scriptsize $x$};
\draw[black,very thick] (X) -- (-0.1,-2.2) ;
\draw[black,very thick] (X) -- (-0.2,-3) ;
\draw[black,very thick] (X) -- (-0.1,-3.8) ;

\node[] at (0,-4.8) {$j$};

\node[] at (A) {\scriptsize $C_5$};
\node[] at (B) {\scriptsize $C_4$};
\node[] at (C) {\scriptsize $C_3$};
\node[] at (D) {\scriptsize $C_2$};
\node[] at (E) {\scriptsize $C_1$};

\end{scope}

\end{tikzpicture}
\caption{In red, the partitions $\cc(\sfs_{j-1})=\{C_1,C_2,C_3,C_4,C_5\}$ and $\cc(\sfs_j)=\{C_1,C_2,C_3\cup C_4,C_5\}$ certifying that $j\in \bp_2(\sfS)$. In grey, the partitions $\cc(\sfs^x_{j-1})=\{C^x,C_4,C_5\}$ and $\cc(\sfs^x_j)=\{C^x\cup C_4,C_5\}$ certifying that $j\in \bp_2(\sfS^x)$.
\label{fig:cc-insertion}}
\end{figure}
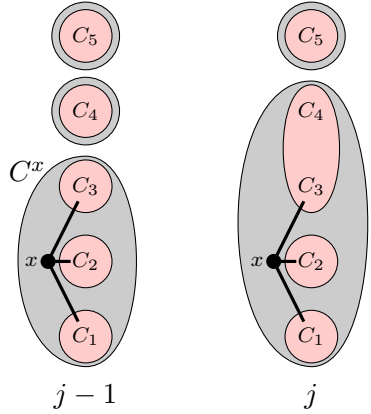

\begin{lemma}\label{lem:i-wd-ins} 
Let $B$ and $B'$ be finite sets with $B=B'\setminus \{x\}$ for some $x\in B'$. Let $\sfS$ be a $B$-boundaried
  sequence and let $(f_x,l_x)$ be a valid insertion position with respect to subset $X\subseteq B$ in $\sfS$. Then, the width of $\sfS$ is at most the width of $\ins(\sfS,x,X,f_x,l_x)$.
\end{lemma}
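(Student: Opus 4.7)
The plan is to argue index by index that the quantity $|\bd(\sfs^x_j)|+\val(\sfs^x_j)$ is never smaller than $|\bd(\sfs_j)|+\val(\sfs_j)$, which, by taking maxima over $j\in[\ell]$, gives the desired inequality $\width(\sfS)\leqslant\width(\sfS^x)$.

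Concretely, I would set $\sfS=\langle\sfs_1,\ldots,\sfs_\ell\rangle$ and $\sfS^x=\ins(\sfS,x,X,f_x,l_x)=\langle\sfs^x_1,\ldots,\sfs^x_\ell\rangle$ and then split the discussion according to the three cases of Definition~\ref{def:sequence-insertion}. In the cases $j<f_x$ and $j>l_x$, the definition sets $\bd(\sfs^x_j)=\bd(\sfs_j)$ and $\val(\sfs^x_j)=\val(\sfs_j)$, so $|\bd(\sfs^x_j)|+\val(\sfs^x_j)=|\bd(\sfs_j)|+\val(\sfs_j)$. In the case $f_x\leqslant j\leqslant l_x$, the definition sets $\bd(\sfs^x_j)=\bd(\sfs_j)\cup\{x\}$ and $\val(\sfs^x_j)=\val(\sfs_j)$; since $\sfS$ is a $B$-boundaried sequence with $x\notin B$, we have $x\notin\bd(\sfs_j)$, so $|\bd(\sfs^x_j)|=|\bd(\sfs_j)|+1$, and in particular $|\bd(\sfs^x_j)|+\val(\sfs^x_j)=|\bd(\sfs_j)|+\val(\sfs_j)+1$.

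In every case $|\bd(\sfs^x_j)|+\val(\sfs^x_j)\geqslant |\bd(\sfs_j)|+\val(\sfs_j)$. Taking the maximum over $j\in[\ell]$ on both sides and using the definition of width yields $\width(\sfS)\leqslant\width(\sfS^x)$, which is exactly the claim. There is essentially no obstacle here: the statement is a direct unfolding of Definition~\ref{def:sequence-insertion}, and the only subtlety is the observation that $x\notin B$ guarantees that adding $x$ to $\bd(\sfs_j)$ genuinely increases the size by one rather than leaving it unchanged, but this fact is needed only to sharpen the inequality and is not required for the weak inequality claimed by the lemma.
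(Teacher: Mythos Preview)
Your proof is correct and follows essentially the same approach as the paper: both argue directly from Definition~\ref{def:sequence-insertion} that $\val(\sfs^x_j)=\val(\sfs_j)$ for all $j$ and that $\bd(\sfs^x_j)$ equals $\bd(\sfs_j)$ or $\bd(\sfs_j)\cup\{x\}$ depending on whether $j\in[f_x,l_x]$, and then conclude by the definition of width. Your version is slightly more explicit (in particular in noting that $x\notin B$ ensures a strict increase on $[f_x,l_x]$), but the argument is the same.
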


\begin{proof}
Suppose that $\sfS=\langle \sfs_1,\dots,\sfs_{\ell}\rangle$ and $\ins(\sfS,x,X,f_x,l_x) =\langle \sfs^x_1,\dots,\sfs^x_{\ell}\rangle$. By Definition~\ref{def:sequence-insertion} we have that: for each $1\leq j \leq \ell$, $\val(\sfs^x_j)=\val(\sfs_j)$; if $j\notin  [f_x,l_x]$, then $\bd(\sfs^x_j)=\bd(\sfs_j)$, otherwise  $\bd(\sfs^x_j)=\bd(\sfs_j)\cup\{x\}$.
The statement follows therefore by definition of width of $B$-boundaried sequences. 
\end{proof}

Let us remind that if a $B$-boundaried sequence $\sfT$ of length $p$ is an extension of $\sfS$ of length $\ell$, then the extension surjection $\delta_{\sfT\rightarrow\sfS}:[p]\rightarrow [\ell]$ associates each element of  $\sfT$ with its original copy in $\sfS$ (see Section~\ref{sec:prelim}).

\begin{lemma}\label{lem:i-val-posb}
Let $B$ and $B'$ be finite sets with $B=B'\setminus \{x\}$ for some $x\in B'$. Let $\sfS=\langle \sfs_1,\ldots, \sfs_\ell\rangle$ be a
$B$-boundaried sequence, and let $\sfT\in {\sf Ext}(\sfS)$ that has length $p$ and is  certified by the surjective function
  $\delta_{\sfT\rightarrow\sfS}:[p]\rightarrow [\ell]$.  For every valid insertion position $(f_x,l_x)$ with respect to some subset
$X\subseteq B$ in $\sfS$, $(f^*_x,l^*_x)$ is a valid insertion position  with respect to $X$ in $\sfT$, where $f^*_x=\min\{h\in [p]\mid
f_x=\delta_{\sfT\rightarrow\sfS}(h)\}$ and $l^*_x=\max\{h\in [p]\mid f_x=\delta_{\sfT\rightarrow\sfS}(h)\}$.  Moreover,
$\ins(\sfT,x,X,f^*_x,l^*_x)$ is an extension of $\ins(\sfS,x,X,f_x,l_x)$. 
\end{lemma}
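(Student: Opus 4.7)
The plan is to exploit the fact that an extension surjection is monotonically non-decreasing, and then to check by case analysis that the same surjection witnesses the extension after insertion.

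First I would establish the following key monotonicity fact. Since extensions are produced by iteratively inserting adjacent copies of elements, the extension surjection $\delta_{\sfT\to\sfS}$ is non-decreasing: if $h_1<h_2$ then $\delta_{\sfT\to\sfS}(h_1)\leq \delta_{\sfT\to\sfS}(h_2)$. It then follows immediately from the definitions of $f^*_x$ and $l^*_x$ that $\delta_{\sfT\to\sfS}(h)\in [f_x,l_x]$ if and only if $h\in [f^*_x,l^*_x]$, and in particular every index $j\in[f_x,l_x]$ has at least one preimage inside $[f^*_x,l^*_x]$. (I note that the statement's second line appears to contain a small typo and should read $l^*_x=\max\{h\in [p]\mid l_x=\delta_{\sfT\to\sfS}(h)\}$; I will use this reading.)

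For part (1), because $\sfT$ is an extension of $\sfS$, we have $\bd(\sft_h)=\bd(\sfs_{\delta_{\sfT\to\sfS}(h)})$ for every $h\in[p]$. Combining this with the preceding observation,
\[
\bigcup_{f^*_x\leq h\leq l^*_x}\bd(\sft_h)\;\supseteq\;\bigcup_{f_x\leq j\leq l_x}\bd(\sfs_j)\;\supseteq\;X,
\]
so $(f^*_x,l^*_x)$ is a valid insertion position with respect to $X$ in $\sfT$.

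For part (2), set $\sfS^x=\ins(\sfS,x,X,f_x,l_x)$ and $\sfT^x=\ins(\sfT,x,X,f^*_x,l^*_x)$. I will prove that $\delta_{\sfT\to\sfS}$ itself certifies that $\sfT^x$ is an extension of $\sfS^x$, that is, that $\sft^x_h=\sfs^x_{\delta_{\sfT\to\sfS}(h)}$ for each $h\in[p]$. Split into three cases according to the position of $h$. If $h<f^*_x$ then $\delta_{\sfT\to\sfS}(h)<f_x$ and Definition~\ref{def:sequence-insertion} leaves the triple unchanged on both sides, so the required equality follows from $\sft_h=\sfs_{\delta_{\sfT\to\sfS}(h)}$. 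If $f^*_x\leq h\leq l^*_x$ then $\delta_{\sfT\to\sfS}(h)\in[f_x,l_x]$, and both $\sft^x_h$ and $\sfs^x_{\delta_{\sfT\to\sfS}(h)}$ are obtained from the identical triple $\sft_h=\sfs_{\delta_{\sfT\to\sfS}(h)}$ by adding $x$ to $\bd$, creating a new block $\{x\}$ in $\cc$ and merging it with all blocks meeting $X$, and leaving $\val$ untouched; the two procedures are identical because they only depend on $\cc(\sft_h)=\cc(\sfs_{\delta_{\sfT\to\sfS}(h)})$ and on the fixed set $X$. Finally, if $h>l^*_x$, then $\delta_{\sfT\to\sfS}(h)>l_x$, and the ``otherwise'' branch of Definition~\ref{def:sequence-insertion} again applies symmetrically to both sides, yielding the same triple.

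The main obstacle I anticipate is a purely bookkeeping one: making the interval equivalence between $[f_x,l_x]$ and $[f^*_x,l^*_x]$ airtight under $\delta_{\sfT\to\sfS}$, since the correctness of the three-case analysis for part (2) rests entirely on this equivalence carrying over indices into the correct branch of the insertion definition. Once monotonicity of $\delta_{\sfT\to\sfS}$ is established, the rest is a direct verification that the $\cc$-component behaves identically on the two sides because it depends only on the input partition and on $X$, not on the ambient length of the sequence.
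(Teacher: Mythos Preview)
Your proof is correct. The paper takes a slightly different structural route: it proves the statement for a single duplication (a $1$-extension) by a three-way case split on whether the duplicated index lies before, inside, or after $[f_x,l_x]$, and then says the general case follows by iterating $p-\ell$ times. Your argument instead works directly for an arbitrary extension by first isolating the monotonicity of $\delta_{\sfT\to\sfS}$ and the resulting equivalence $\delta_{\sfT\to\sfS}(h)\in[f_x,l_x]\Leftrightarrow h\in[f^*_x,l^*_x]$, and then verifying $\sft^x_h=\sfs^x_{\delta_{\sfT\to\sfS}(h)}$ in one pass. Both approaches rest on the same observation (the index correspondence carries the three branches of Definition~\ref{def:sequence-insertion} across), but yours avoids the bookkeeping of tracking how $(f^*_x,l^*_x)$ evolves along a chain of $1$-extensions, at the cost of needing to state the monotonicity property explicitly. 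Your reading of the typo in the definition of $l^*_x$ is the intended one.
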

\begin{proof} 
Let us prove the statement for a $1$-extension $\sfT$ of $\sfS$. Inductively applying the proof $p-\ell$ times leads to the statement.

Let us denote $\sfT=\langle \sft_1,\dots,\sft_{\ell+1}\rangle$.  Suppose that $\sfs_i$, for $1\leq i \leq \ell$, is
duplicated, that is for every $j\leq i$, $\delta_{\sfT\rightarrow\sfS}(j)=j$ and for every $i<j\leq \ell+1$, $\delta_{\sfT\rightarrow\sfS}(j)=j-1$.
It is clear that if $i>l_x$ then $(f_x,l_x)$ is still a valid insertion position with respect to $X$ in $\sfT$, and
similarly for $(f_x+1,l_x+1)$ if $i<f_x$. If $f_x\leq i \leq l_x$, then $(f_x,l_x+1)$ is a valid insertion position with respect to $X$ in
$\sfT$ because $\sft_j=\sfs_j$ for $f_x\leq j \leq i$, and $\sfs^*_j= \sfs_{j-1}$ for $i+1\leq j \leq \ell+1$.

We claim now that $\ins(\sfT,x,X,f^*_x,l^*_x)$ is an extension of $\ins(\sfS,x,X,f_x,l_x)$ certified by the surjective
function $\delta_{\sfT\rightarrow\sfS}$. Indeed,  observe that for every $j\in [\ell+1]$, $\sft_j=\sfs_{\delta_{\sfT\rightarrow\sfS}(j)}$.  
So, 
if we duplicate $\sfs^x_i$ in $\sfS^x$, we
will obtain $\ins(\sfT,x,X,f^*_x,l^*_x)$.
\end{proof}

Lemma~\ref{lem:i-val-posb} shows that if $\sfT$ is an extension of $\sfS$, then, to every  valid insertion position $(f_x,l_x)$ with respect to some subset
$X\subseteq B$ in $\sfS$, one can associate a valid insertion position $(f^*_x,l^*_x)$ with respect to $X$  in $\sfT$. As shown by the example of Figure~\ref{fig:counter-example}, the reverse is not true. The following lemma states that  it is indeed possible to associate a valid insertion position $(f^*_x,l^*_x)$ with respect to $X$ in $\sfT$ to some valid insertion position with respect to $X$ in some $(\leq 2)$-extension of $\sfS$.

\begin{figure}[h]
\centering
\begin{tikzpicture}[scale=1.5]
   \tikzstyle{vertex}=[fill,black,circle,minimum size=0.15cm,inner sep=0pt]
   \tikzstyle{vertex2}=[draw,black,diamond,minimum size=0.2cm,inner sep=0pt]
   \tikzstyle{vertex3}=[fill,red,circle,minimum size=0.15cm,inner sep=0pt]
   \tikzstyle{vertex4}=[draw,red,diamond,minimum size=0.2cm,inner sep=0pt]

\draw[black,thin,-] (-3,3) -- (3,3);
\node[anchor=east] at (-3,3) {$\sfS$};


\foreach \i in {1,...,11}{
	\node[vertex] (s\i) at (\i/2-3,3) {};
	\node[anchor=south] at (s\i) {\tiny $\sfs_{\i}$};
	}

\draw[black,thin,-] (-3.5,1) -- (3.5,1);
\node[anchor=east] at (-3.5,1) {$\sfT$};

\foreach \j in {5}{
	\node[vertex4] (r\j) at (\j/2-3.5,1) {};
	\node[anchor=north] at (r\j) {\tiny $\sft_{\j}$};
	}

\foreach \j in {11}{
	\node[vertex2] (r\j) at (\j/2-3.5,1) {};
	\node[anchor=north] at (r\j) {\tiny $\sft_{\j}$};
	}

\foreach \j in {10}{
	\node[vertex3] (r\j) at (\j/2-3.5,1) {};
	\node[anchor=north] at (r\j) {\tiny $\sft_{\j}$};
	}
\foreach \j in {1,2,3,4,6,7,8,9,12,13}{
	\node[vertex] (r\j) at (\j/2-3.5,1) {};
	\node[anchor=north] at (r\j) {\tiny $\sft_{\j}$};
	}

\draw[gray!60,thick,->,dotted] (r1) -- (s1);
\draw[gray!60,thick,->,dotted] (r2) -- (s2);
\draw[gray!60,thick,->,dotted] (r3) -- (s3);
\draw[gray!60,thick,->,dotted] (r4) -- (s4);
\draw[gray!60,thick,->,dotted] (r5) -- (s4);
\draw[gray!60,thick,->,dotted] (r6) -- (s5);
\draw[gray!60,thick,->,dotted] (r7) -- (s6);
\draw[gray!60,thick,->,dotted] (r8) -- (s7);
\draw[gray!60,thick,->,dotted] (r9) -- (s8);
\draw[gray!60,thick,->,dotted] (r10) -- (s9);
\draw[gray!60,thick,->,dotted] (r11) -- (s9);
\draw[gray!60,thick,->,dotted] (r12) -- (s10);
\draw[gray!60,thick,->,dotted] (r13) -- (s11);

\node[anchor=west] at (3,2) {$\delta_{\sfT\rightarrow \sfS}(\cdot)$};

\draw[red,thin,<->] (-1,1.3) -- (1.5,1.3);
\node[red,anchor=south] at (0.25,1.3) {$x$};

\end{tikzpicture}
\caption{Let $\sfT$ be a $2$-extension of the $B$-boundaried sequence $\sfS$. Suppose that $(5,10)$ is a valid insertion position with respect to some for
  $X\subseteq B$ in $\sfT$. Observe that  as $4=\delta_{\sfT\rightarrow \sfS}(5)$ and $9=\delta_{\sfT\rightarrow \sfS}(10)$, $(4,9)$ is also a  valid insertion
  position with respect to some for $X\subseteq B$ in $\sfS$. However,  $\ins(\sfT,x,X,5,10)$ is not an extension of $\ins(\sfS,x,X,4,9)$.
\label{fig:counter-example}}
\end{figure}

\begin{lemma}\label{lem:i-ext} Let $B$ and $B'$ be finite sets with $B=B'\setminus \{x\}$ for some $x\in B'$. Let $\sfT$ be an extension of  
 a $B$-boundaried sequence $\sfS$. If $(f^*_x,l^*_x)$ is a valid insertion position with respect to a subset $X\subseteq B$ in $\sfT$, then there is a $(\leq 2)$-extension $\sfR$ of
 $\sfS$ and a valid insertion position $(f_x,l_x)$ with respect to $X$ in $\sfR$ such that $\ins(\sfT,x,X,f^*_x,l^*_x)$  is an extension of $\ins(\sfR,x,X,f_x,l_x)$.
\end{lemma}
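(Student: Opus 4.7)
The plan is to take the insertion position $(f^*_x,l^*_x)$ in $\sfT$ and pull it back through the extension surjection $\delta_{\sfT\to\sfS}$, paying attention to the two problematic endpoints where the interval might cut through a duplicated block. This will require adding at most two duplications to $\sfS$ (one at each endpoint) so that the pulled-back interval aligns with element boundaries in the new sequence $\sfR$.

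More precisely, set $f_x=\delta_{\sfT\to\sfS}(f^*_x)$ and $l_x=\delta_{\sfT\to\sfS}(l^*_x)$. Define $\sfR$ from $\sfS$ as follows. If there exists $j<f^*_x$ with $\delta_{\sfT\to\sfS}(j)=f_x$, meaning $\sfs_{f_x}$ has copies in $\sfT$ that lie strictly before $f^*_x$, then duplicate $\sfs_{f_x}$ in $\sfS$ so that in the modified sequence one copy of the old element will be mapped to by positions $<f^*_x$ and another copy will receive the prefix of copies inside $[f^*_x,l^*_x]$. Symmetrically, if there exists $j>l^*_x$ with $\delta_{\sfT\to\sfS}(j)=l_x$, duplicate $\sfs_{l_x}$. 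Call the resulting $(\leq 2)$-extension $\sfR$, and let $f_x'$ (resp. $l_x'$) be the index in $\sfR$ of the right (resp. left) copy of the duplicated element at $f_x$ (resp. $l_x$); if no duplication was needed on a side, keep the original index. By construction, the extension surjection $\delta_{\sfT\to\sfR}$ now satisfies: for every $h\in[f_x',l_x']$, the preimage $\delta_{\sfT\to\sfR}^{-1}(h)$ is entirely contained in $[f^*_x,l^*_x]$, and for every $h\notin[f_x',l_x']$, the preimage is disjoint from $[f^*_x,l^*_x]$.

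Next I would verify validity: since $(f^*_x,l^*_x)$ was valid with respect to $X$ in $\sfT$, we have $X\subseteq\bigcup_{f^*_x\leq j\leq l^*_x}\bd(\sft_j)$. Because $\sfT\in{\sf Ext}(\sfR)$ and the boundary of an element is preserved under duplication, this translates to $X\subseteq\bigcup_{f_x'\leq h\leq l_x'}\bd(\sfr_h)$, so $(f_x',l_x')$ is a valid insertion position with respect to $X$ in $\sfR$. Finally, I would argue that $\ins(\sfT,x,X,f^*_x,l^*_x)$ is an extension of $\ins(\sfR,x,X,f_x',l_x')$ by exhibiting the surjection $\delta_{\sfT\to\sfR}$ itself as the certificate: Definition~\ref{def:sequence-insertion} modifies each element uniformly based only on whether its index lies inside $[f^*_x,l^*_x]$ (respectively $[f_x',l_x']$), and the alignment property established above guarantees that duplicated copies in $\sfT$ receive the same treatment as the corresponding original in $\sfR$; thus each position of $\ins(\sfT,x,X,f^*_x,l^*_x)$ is indeed a copy of the corresponding position of $\ins(\sfR,x,X,f_x',l_x')$.

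The main obstacle I anticipate is the bookkeeping around the connected component merges: when $x$ is inserted with respect to $X$, the new block $\{x\}$ is merged with all blocks of $\cc(\sfs_j)$ meeting $X$, and one must check that this merging commutes with duplication, i.e., that performing the insertion after duplication gives the same near-partitions as performing the insertion first and then duplicating. This is true because duplication does not alter $\bd$ or $\cc$ components of an element, only repeats them, so the merge patterns are identical on corresponding positions; writing this carefully, however, is the most delicate verification step.
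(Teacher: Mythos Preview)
Your approach is essentially the same as the paper's: pull back the endpoints via $\delta_{\sfT\to\sfS}$, duplicate $\sfs_{f}$ and/or $\sfs_{l}$ precisely when copies of that element in $\sfT$ straddle the boundary of $[f^*_x,l^*_x]$, and then take the inner copy on each side as the new insertion endpoint in $\sfR$. The paper constructs $\delta_{\sfT\to\sfR}$ explicitly and finishes by invoking Lemma~\ref{lem:i-val-posb} (after observing that $f^*_x=\min\delta_{\sfT\to\sfR}^{-1}(f_x)$ and $l^*_x=\max\delta_{\sfT\to\sfR}^{-1}(l_x)$), whereas you argue the extension property directly from your alignment claim; these are equivalent. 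Your concern about the $\cc$ merges is harmless but overstated: since duplication copies $(\bd,\cc,\val)$ verbatim and the insertion rule of Definition~\ref{def:sequence-insertion} depends only on the current element and whether the index lies in the interval, commutation with duplication is immediate once the alignment is established---the alignment itself is the only real content.
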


\begin{proof} 
Suppose that $\sfS=\langle \sfs_1,\ldots, \sfs_\ell\rangle$ and $\sfT=\langle \sft_1,\ldots, \sft_p\rangle$. Let $\delta_{\sfT\rightarrow \sfS}:[p]\rightarrow [\ell]$ be the surjection certifying that $\sfT\in {\sf Ext}(\sfS)$, that is for every $j\in[p]$, if $\delta_{\sfT\rightarrow \sfS}(j)=i$, then $\sft_j$ is a copy originating from $\sfs_i$. 
Let us denote $f=\delta_{\sfT\rightarrow \sfS}(f^*_x)$ and $l=\delta_{\sfT\rightarrow \sfS}(l^*_x)$. We also define $f'_x=\min \{j\in[p] \mid \delta_{\sfT\rightarrow \sfS}(j)=f \}$
 and $l'_x=\max \{j\in[p] \mid \delta_{\sfT\rightarrow \sfS}(j)=l \}$. The $(\leq 2)$-extension $\sfR$ of $\sfS$ is built as follows: if $f'_x<f^*_x$, then we duplicate $\sfs_f$ and if $l^*_x<l'_x$, then we duplicate $\sfs_l$. Let $r$ be the size of $\sfR$ and let $\delta_{\sfR\rightarrow \sfS}:[r]\rightarrow[\ell]$ certifying that $\sfR$ is a $(\leq 2)$-extension of $\sfS$.

\begin{figure}[h]
\centering
\begin{tikzpicture}[scale=1.14]
   \tikzstyle{vertex}=[fill,red,circle,minimum size=0.15cm,inner sep=0pt]
   \tikzstyle{vertex2}=[fill,black,diamond,minimum size=0.15cm,inner sep=0pt]

\draw[black,dotted,-] (-3,3) -- (-2,3);
\draw[black,thin,-] (-2,3) -- (2,3);
\draw[black,dotted,->] (2,3) -- (3,3);
\node[anchor=east] at (-3,3) {$\sfS$};

\coordinate (F) at (-1,3);
\coordinate (L) at (1,3);

\node[vertex] (f) at (F) {};
\node[vertex] (l) at (L) {};

\node[anchor=south] at (F) {\footnotesize $f$};
\node[anchor=south] at (L) {\footnotesize $l$};

\draw[black,dotted,-] (-6.7,2) -- (-6,2);
\draw[black,thin,-] (-6,2) -- (-1,2);
\draw[black,dotted,->] (-1,2) -- (-0.3,2);
\node[anchor=east] at (-6.7,2) {$\sfR$};

\coordinate (FR') at (-5,2);
\coordinate (FRX) at (-4.5,2);
\coordinate (LRX) at (-2.5,2);
\coordinate (LR') at (-2,2);

\node[vertex2] (f') at (FR'){};
\node[vertex] (fx) at (FRX){};
\node[vertex] (lx) at (LRX){};
\node[vertex2] (l') at (LR'){};

\node[anchor=north] at (FR') {\footnotesize $f'$};
\node[anchor=north] at (FRX) {\footnotesize $f_x$};
\node[anchor=north] at (LRX) {\footnotesize $l_x$};
\node[anchor=north] at (LR') {\footnotesize $l'$};

\draw[black,dotted,-] (0.3,0.5) -- (1,0.5);
\draw[black,thin,-] (1,0.5) -- (6,0.5);
\draw[black,dotted,->] (6,0.5) -- (6.7,0.5);
\node[anchor=west] at (6.7,0.5) {$\sfT$};

\coordinate (FT') at (1.5,0.5);
\coordinate (FTX) at (2.5,0.5);
\coordinate (LTX) at (4.5,0.5);
\coordinate (LT') at (5,0.5);

\node[vertex2] (ft') at (FT'){};
\node[vertex] (f'x) at (FTX){};
\node[vertex] (ltx) at (LTX){};
\node[vertex2] (lt') at (LT'){};

\node[anchor=north] at (FT') {\footnotesize $f'_x$};
\node[anchor=north] at (FTX) {\footnotesize $f^*_x$};
\node[anchor=north] at (LTX) {\footnotesize $l^*_x$};
\node[anchor=north] at (LT') {\footnotesize $l'_x$};
\draw[gray!60,thick,dotted] (FT') -- (F);
\draw[gray!60,thick,dotted] (FTX) -- (F);
\draw[gray!60,thick,dotted] (LTX) -- (L);
\draw[gray!60,thick,dotted] (LT') -- (L);

\node[anchor=west] at (3.2,1.75) {\scriptsize $\delta_{\sfT\rightarrow\sfS}(\cdot)$};

\draw[gray!60,thick,dotted] (FR') -- (F);
\draw[gray!60,thick,dotted] (FRX) -- (F);
\draw[gray!60,thick,dotted] (LRX) -- (L);
\draw[gray!60,thick,dotted] (LR') -- (L);

\node[anchor=east] at (-3.7,2.5) {\scriptsize $\delta_{\sfR\rightarrow\sfS}(\cdot)$};

\draw[gray!60,thick,dotted] (FR') -- (FT');
\draw[gray!60,thick,dotted] (FRX) -- (FTX);
\draw[gray!60,thick,dotted] (LRX) -- (LTX);
\draw[gray!60,thick,dotted] (LR') -- (LT');

\node[anchor=east] at (-0.6,0.8) {\scriptsize $\delta_{\sfT\rightarrow\sfR}(\cdot)$};


\draw[red,thick,<->] (-4.5,1.5) -- (-2.5,1.5);
\draw[red,thick,<->] (2.5,0) -- (4.5,0);

\node[anchor=north,red] at (-3.5,1.5) {\scriptsize $x$};
\node[anchor=north,red] at (3.5,0) {\scriptsize $x$};

\end{tikzpicture}
\caption{The three surjective functions $\delta_{\sfT\rightarrow\sfS}(\cdot)$, $\delta_{\sfR\rightarrow\sfS}(\cdot)$ and $\delta_{\sfT\rightarrow\sfR}(\cdot)$ respectively certifying that $\sfT\in {\sf Ext}(\sfS)$, $\sfR\in {\sf Ext}(\sfS)$ and $\sfT\in {\sf Ext}(\sfR)$ in the case $f'_x\neq f^*_x$ and $l'_x\neq l^*_x$. In this case,  as $\sfR$ is a $2$-extension of $\sfS$, $f'=\min\{h\in[r]\mid \delta_{\sfR\rightarrow \sfS}(h)=f \}$ 
and $l'=\max\{h\in[r]\mid \delta_{\sfR\rightarrow \sfS}(h)=l \}$.
\label{fig:2-extension}}
\end{figure}
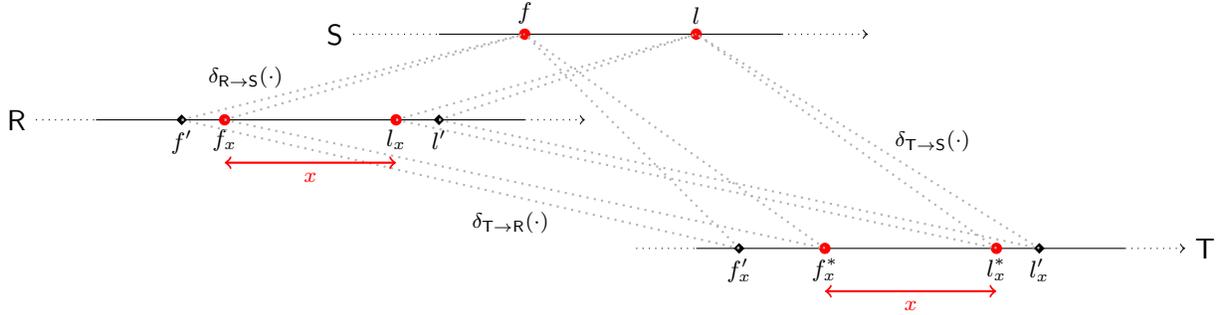

Let us build a surjection $\delta_{\sfT\rightarrow \sfR}:[p]\rightarrow [r]$ certifying that $\sfT$ is an extension of $\sfR$. To that aim, we define $f_x=\max\{h\in[r]\mid \delta_{\sfR\rightarrow \sfS}(h)=f \}$ and $l_x=\min\{h\in[r]\mid \delta_{\sfR\rightarrow \sfS}(h)=l \}$. Then:
\begin{align*}
\delta_{\sfT\rightarrow\sfR}(j) &= 
\begin{cases} 
\delta_{\sfT\rightarrow \sfS}(j) & \textrm{if $j<f_x^*$},\\
\delta_{\sfT\rightarrow \sfS}(j)-f+f_x & \textrm{if $f^*_x\leq j\leq l_x^*$},\\
\delta_{\sfT\rightarrow \sfS}(j)-l+l' & \textrm{if $l^*_x< j$}.
\end{cases}
\end{align*}
where as in Figure~\ref{fig:2-extension} $l'=\max\{h\in[r]\mid \delta_{\sfR\rightarrow \sfS}(h)=l \}$.

Observe that as $\sfT\in {\sf Ext}(\sfS)$ and $\sfR\in {\sf Ext}(\sfS)$, by Lemma~\ref{lem:model}, we have $\model(\sfR)=\model(\sfT)$. Thereby 
$\delta_{\sfT\rightarrow \sfS}(f^*_x)=f_x$ and $\delta_{\sfT\rightarrow \sfS}(l^*_x)=l_x$ implies that $(f_x,l_x)$ is a valid insertion
position with respect to $X$ in $\sfR$. It remains to prove that $\sfT^x=\ins(\sfT,x,X,f^*_x,l^*_x)$  is an extension of
$\sfR^x=\ins(\sfR,x,X,f_x,l_x)$. Observe that, by construction of $\sfR$, $f^*_x=\min\{j\in[p]\mid \delta_{\sfT\rightarrow \sfR}(j)=f_x\}$
and $l^*_x=\max\{j\in[p]\mid \delta_{\sfT\rightarrow \sfR}(j)=l_x\}$. This implies that we can certify $\sfT^x\in {\sf Ext}(\sfR^x)$ by
Lemma~\ref{lem:i-val-posb}. 
\end{proof}

\begin{lemma}\label{lem:i-ext2} 
Let $B$ and $B'$ be finite sets with $B=B'\setminus \{x\}$ for some $x\in B'$. Let $\sfS$ and $\sfT$ be $B$-boundaried sequences such that $\sfS\leq \sfT$. If $(f_x,l_x)$ is a valid insertion position  with respect to a subset $X\subseteq B$ in $\sfT$, then $(f_x,l_x)$ is a valid insertion position with respect to $X$ in $\sfS$ and $\ins(\sfS,x,X,f_x,l_x)\leq \ins(\sfT,x,X,f_x,l_x)$.
\end{lemma}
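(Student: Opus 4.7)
The plan is to reduce everything to the fact that the relation $\sfS\leq \sfT$ almost completely fixes the structural components of both sequences, leaving only the $\val$ fields to differ in a controlled way. First I would unpack what $\sfS\leq\sfT$ says: by Definition~\ref{def:domination} the two sequences share the same $B$-boundary model and (since $\profile(\sfS)\leq\profile(\sfT)$ is to make sense) the same length $\ell$. Since $\bd$ is constant between consecutive type-1 breakpoints and $\cc$ is constant between consecutive type-1 or type-2 breakpoints, and $\model(\sfS)=\model(\sfT)$ records the values of $\bd$ and $\cc$ at each such breakpoint, I would deduce the key pointwise identities $\bd(\sfs_j)=\bd(\sft_j)$ and $\cc(\sfs_j)=\cc(\sft_j)$ for every $j\in[\ell]$, while $\val(\sfs_j)\leq\val(\sft_j)$ for every $j\in[\ell]$.

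Next I would verify validity of $(f_x,l_x)$ in $\sfS$: as $(f_x,l_x)$ is valid with respect to $X$ in $\sfT$,
$$X\;\subseteq\;\bigcup_{f_x\leq j\leq l_x}\bd(\sft_j)\;=\;\bigcup_{f_x\leq j\leq l_x}\bd(\sfs_j),$$
so $(f_x,l_x)$ is also a valid insertion position with respect to $X$ in $\sfS$. This justifies that both $\ins(\sfS,x,X,f_x,l_x)$ and $\ins(\sfT,x,X,f_x,l_x)$ are well-defined.

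Now I would apply Definition~\ref{def:sequence-insertion} to both sequences to compute $\sfS^x=\ins(\sfS,x,X,f_x,l_x)$ and $\sfT^x=\ins(\sfT,x,X,f_x,l_x)$. At every index $j$, the new boundary $\bd(\sfs^x_j)$ is determined entirely by $\bd(\sfs_j)$ together with whether $f_x\leq j\leq l_x$, and similarly the new partition $\cc(\sfs^x_j)$ is obtained from $\cc(\sfs_j)$ and $X$ by the same recipe; by the pointwise identities of the previous paragraph, this yields $\bd(\sfs^x_j)=\bd(\sft^x_j)$ and $\cc(\sfs^x_j)=\cc(\sft^x_j)$ at every index. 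Moreover the insertion rule leaves $\val$ untouched, so $\val(\sfs^x_j)=\val(\sfs_j)\leq\val(\sft_j)=\val(\sft^x_j)$ for every $j$.

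Finally, because $\bd$ and $\cc$ agree at every position of $\sfS^x$ and $\sfT^x$, the type-1 and type-2 breakpoints occur at exactly the same indices in both sequences, hence $\model(\sfS^x)=\model(\sfT^x)$. The profiles $\profile(\sfS^x)$ and $\profile(\sfT^x)$ are thus partitioned at the same indices, and the componentwise inequality $\val(\sfs^x_j)\leq\val(\sft^x_j)$ gives $\profile(\sfS^x)\leq\profile(\sfT^x)$. By Definition~\ref{def:domination} this is precisely $\ins(\sfS,x,X,f_x,l_x)\leq\ins(\sfT,x,X,f_x,l_x)$. There is no real obstacle here, only the bookkeeping needed to argue that $\sfS\leq\sfT$ forces $\bd$ and $\cc$ to coincide at \emph{every} position and not merely at the breakpoints stored by $\model$.
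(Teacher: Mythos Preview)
Your proof is correct and follows essentially the same approach as the paper: both argue that $\sfS\leq\sfT$ forces $\model(\sfS)=\model(\sfT)$ and identical breakpoint positions, hence $\bd$ and $\cc$ agree pointwise, so the insertion produces sequences with the same model and pointwise-comparable $\val$ entries. Your write-up is in fact slightly more explicit than the paper's about why the pointwise identities $\bd(\sfs_j)=\bd(\sft_j)$ and $\cc(\sfs_j)=\cc(\sft_j)$ hold at \emph{every} index (not just at breakpoints), which is exactly the bookkeeping you flagged at the end.
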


\begin{proof}
Suppose that $\profile(\sfS)=\langle L_1,\ldots,L_r\rangle$ and $\profile(\sfT)=\langle L'_1,\ldots, L'_r\rangle$. By Definition~\ref{def:domination}, as
$\sfS\leq \sfT$, $\sfT$ and $\sfS$ have the same $B$-model. 
It follows that $(f_x,l_x)$  is a valid insertion position with respect to $X$ in $\sfS$ as well.
And it implies that for every $i\in[r]$, $i\in\bp_1(\sfS)$ if and only if $i\in\bp_1(\sfT)$ and that $i\in\bp_2(\sfS)$ if and only if
$i\in\bp_2(\sfT)$. Thereby, if we denote $\sfS^x=\ins(\sfS,x,X,f_x,l_x)$ and $\sfT^x=\ins(\sfT,x,X,f_x,l_x)$, by Definition~\ref{def:sequence-insertion}, we
obtain that, for every $i\in[r]$, $i\in\bp_1(\sfS^x)$ if and only if $i\in\bp_1(\sfT^x)$ and that $i\in\bp_2(\sfS^x)$ if and only if $i\in\bp_2(\sfT^x)$. Thereby we have $\model(\sfS^x)=\model(\sfT^x)$.
Observe moreover that $\sfS\leq \sfT$ implies that for every $i\in[r]$, $\val(\sfs_i)\leq \val(\sft_i)$. As for every $i\in[r]$, we have that $\val(\sfs_i)=\val(\sfs^x_i)$ and $\val(\sft_i)=\val(\sft^x_i)$, we obtain that $\val(\sfs^x_i)\leq\val(\sft^x_i)$. It follows that $\profile(\sfS^x)\leq\profile(\sfT^x)$, in other words $\sfS^x\leq \sfT^x$.
\end{proof}

\begin{lemma}\label{lem:i-completeness2} 
Let $B$ and $B'$ be finite sets with $B=B'\setminus \{x\}$ for some $x\in B'$. Let $\sfS$ and $\sfT$ be $B$-boundaried sequences such that $\sfS\preceq \sfT$. If $(f_x,l_x)$ is a valid insertion position  with respect to a subset $X\subseteq B$ in $\sfT$, then there is a valid insertion position $(f'_x,l'_x)$ in a $(\leq 2)$-extension $\sfR$ of $\sfS$ such that $\ins(\sfR,x,X,f'_x,l'_x) \preceq \ins(\sfT,x,X,f_x,l_x)$. 
\end{lemma}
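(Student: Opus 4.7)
The plan is to chain the three preceding insertion lemmas through appropriate extensions so that the relation $\preceq$ is transported from $\sfT$ to $\sfS$ by routing via a common pair of extensions. Since $\sfS \preceq \sfT$, by definition of the domination relation (and Lemma~\ref{lem:domination-extension}) there exist extensions $\sfS^* \in {\sf Ext}(\sfS)$ and $\sfT^* \in {\sf Ext}(\sfT)$ with $\sfS^* \leq \sfT^*$. In particular $\model(\sfS^*)=\model(\sfT^*)$, so both have the same type-$1$ and type-$2$ breakpoint structure.

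First I would lift the valid insertion position $(f_x,l_x)$ from $\sfT$ up to $\sfT^*$: by Lemma~\ref{lem:i-val-posb} applied to the extension $\sfT^* \in {\sf Ext}(\sfT)$ and its surjection, there is a valid insertion position $(f^*_x,l^*_x)$ in $\sfT^*$ with respect to $X$ such that $\ins(\sfT^*,x,X,f^*_x,l^*_x) \in {\sf Ext}(\ins(\sfT,x,X,f_x,l_x))$; by Lemma~\ref{lem:prop-useful-sequence}(2) these two insertions are $\equiv$-equivalent.

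Next I would push $(f^*_x,l^*_x)$ across to $\sfS^*$. Because $\sfS^* \leq \sfT^*$ (so they share the same boundary model), Lemma~\ref{lem:i-ext2} ensures that $(f^*_x,l^*_x)$ is also a valid insertion position with respect to $X$ in $\sfS^*$, and $\ins(\sfS^*,x,X,f^*_x,l^*_x) \leq \ins(\sfT^*,x,X,f^*_x,l^*_x)$, hence $\ins(\sfS^*,x,X,f^*_x,l^*_x) \preceq \ins(\sfT^*,x,X,f^*_x,l^*_x)$.

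Finally I would descend from $\sfS^*$ down to a $(\leq 2)$-extension of $\sfS$ via Lemma~\ref{lem:i-ext}: since $\sfS^* \in {\sf Ext}(\sfS)$ and $(f^*_x,l^*_x)$ is valid in $\sfS^*$, there is a $(\leq 2)$-extension $\sfR$ of $\sfS$ and a valid insertion position $(f'_x,l'_x)$ in $\sfR$ such that $\ins(\sfS^*,x,X,f^*_x,l^*_x) \in {\sf Ext}(\ins(\sfR,x,X,f'_x,l'_x))$, so $\ins(\sfR,x,X,f'_x,l'_x) \equiv \ins(\sfS^*,x,X,f^*_x,l^*_x)$ by Lemma~\ref{lem:prop-useful-sequence}(2). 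Chaining the equivalences and inequalities with transitivity of $\preceq$ (Lemma~\ref{lem:prop-useful-sequence}(5))
\[
\ins(\sfR,x,X,f'_x,l'_x) \equiv \ins(\sfS^*,x,X,f^*_x,l^*_x) \preceq \ins(\sfT^*,x,X,f^*_x,l^*_x) \equiv \ins(\sfT,x,X,f_x,l_x)
\]
yields $\ins(\sfR,x,X,f'_x,l'_x) \preceq \ins(\sfT,x,X,f_x,l_x)$, as required. The main obstacle is the bookkeeping: one must verify that each lemma's model-compatibility hypothesis is preserved along the chain (in particular, that the surjections used in the two applications of Lemmas~\ref{lem:i-val-posb} and~\ref{lem:i-ext} produce matching boundary models), but this is forced by $\model(\sfS^*)=\model(\sfT^*)$ together with Lemma~\ref{lem:model}.
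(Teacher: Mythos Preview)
Your proposal is correct and follows essentially the same route as the paper: take extensions $\sfS^*\leq\sfT^*$, lift the insertion position from $\sfT$ to $\sfT^*$ via Lemma~\ref{lem:i-val-posb}, transfer it to $\sfS^*$ and compare via Lemma~\ref{lem:i-ext2}, then descend to a $(\leq 2)$-extension $\sfR$ of $\sfS$ via Lemma~\ref{lem:i-ext}, and chain the resulting equivalences with transitivity. The only difference is cosmetic ordering (the paper applies Lemma~\ref{lem:i-ext} before stating the $\leq$ comparison), and your remark about model compatibility is already handled by the hypotheses of the cited lemmas.
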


\begin{proof} 
Let $\sfS^*$ and $\sfT^*$ be extensions of $\sfS$ and $\sfT$, respectively, such that $\sfS^*\leq \sfT^*$. Suppose that $\sfT^*$ has size $p^*$. Let $\delta_{\sfT^*\rightarrow\sfT}$ be the surjective function certifying that $\sfT^*\in {\sf Ext}(\sfT)$. Let us denote $f^*_x=\min\{h\in [p^*]\mid f_x=\delta_{\sfT^*\rightarrow \sfT}(h)\}$ and $l^*_x=\max\{h\in [p^*]\mid l_x=\delta_{\sfT^*\rightarrow \sfT}(h)\}$. As $(f_x,l_x)$ is a valid insertion position with respect to $X$ in $\sfT$, then by Lemma~\ref{lem:i-val-posb}, $(f_x^*,l_x^*)$ is also a valid insertion position  with respect to $X$ in $\sfT^*$ and  $\ins(\sfT^*,x,X,f^*_x,l^*_x)$ is an extension of $\ins(\sfT,x,X,f_x,l_x)$. 

 By Lemma \ref{lem:i-ext2}, $(f^*_x,l^*_x)$ is a valid insertion position with respect to $X$ in $\sfS^*$ and by Lemma \ref{lem:i-ext} there is a
  $(\leq 2)$-extension $\sfR$ of $\sfS$ and a valid insertion position $(f'_x,l'_x)$ in $\sfR$ such that $\ins(\sfS^*,x,X,f^*_x,l^*_x)$ is an extension of
  $\ins(\sfR,x,X,f'_x,l'_x)$. Then, by Lemma \ref{lem:i-ext2}, we have 
$$\ins(\sfS^*,x,X,f^*_x,l^*_x) \leq \ins(\sfT^*,x,X,f_x^*,l_x^*).$$ 
By using  Lemma~\ref{lem:prop-useful-sequence}(2), it follows    that                                                                    
$ \ins(\sfT,x,X,f_x,l_x) \equiv \ins(\sfT^*,x,X,f^*_x,l^*_x)$
and
 $   \ins(\sfR,x,X,f'_x,l'_x) \equiv  \ins(\sfS^*,x,X,f^*_x,l^*_x),$
  implying the statement by Lemma \ref{lem:prop-useful-sequence}(5).
\end{proof}

\section{Computing the connected pathwidth}

\label{algosec}

We first explain how $B$-boundaried sequence are natural combinatorial objects to encode a connected path-decomposition. We describe and analyze the time complexity of the \emph{Forget Routine} and the \emph{Insertion Routine} that allow us to respectively process forget and insertion bags of the nice path-decomposition given as input to the DP algorithm.

\subsection{Encoding a connected path-decomposition}
\label{sec_encoding}

Let us explain how to represent a path-decomposition of a boundaried graph $(G,B)$  by means of a $B$-boundaried sequence.

\begin{definition}[$(\mathbf{G},\sfP)$-encoding sequence]
\label{def:canonical-sequence}
Let $\sfP=\langle A_1,\dots, A_\ell\rangle$ be a path-decomposition of the boundaried graph $\mathbf{G}=(G,B).$ 
A $B$-boundaried sequence $\sfS=\langle \sfs_1,\dots, {\sfs}_j,\dots, \sfs_{\ell}\rangle$ is a \emph{$(\G,\sfP)$-encoding sequence}, if for every $j\in[\ell]$:

\begin{itemize}
\item $\bd(\sfs_j)=A_{j}\cap B$: the set of boundary  vertices of $(G,B)$ belonging to the bag $A_{j}$;
\item $\cc(\sfs_j)=\{V(C)\cap B\mid C \mbox{ is a connected component of } G_{j}\}$;
\item $\val(\sfs_j)=|A_{j}\setminus B|$: the number of inactive vertices in the bag $A_{j}$.
\end{itemize}
\end{definition}

It is worth to observe that $\cc(\sfs_j)$ is, in general, not a partition of $A_{j}$ (see Figure~\ref{fig_boundaried_sequence}).  Also, notice that if $G_{j}$ is connected and $B\cap V_{j}=\emptyset,$ then $\cc(\sfs_j)=\{\emptyset\}.$ 

\begin{lemma} \label{lem:connected-seq}
Let $\sfP$ be a path-decomposition of a connected boundaried graph $\mathbf{G}=(G,B).$ If $\sfP$ is a connected path-decomposition, then its $(\G,\sfP)$-encoding sequence is a connected $B$-boundaried sequence.
\end{lemma}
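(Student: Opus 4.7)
The plan is to verify directly that the sequence $\sfS$ produced from the connected path-decomposition $\sfP$ satisfies the defining condition of a connected $B$-boundaried sequence, namely that for every $j\in[\ell]$ the near-partition $\cc(\sfs_j)$ is in fact a partition of $\bigcup_{i\le j}\bd(\sfs_i)$.

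First I would check that $\sfS$ is a valid $B$-boundaried sequence at all: the sets $\bd(\sfs_j)=A_j\cap B$ have the ``consecutive indices'' property because $\sfP$ is a path-decomposition and hence each $x\in B$ has consecutive trace in the bags $A_1,\ldots,A_\ell$; the union $\bigcup_{i\le j}\bd(\sfs_i)$ equals $V_j\cap B$; and $\val(\sfs_j)=|A_j\setminus B|\ge 0$. The coarsening property $\cc(\sfs_j)\sqsubseteq\cc(\sfs_{j+1})$ follows because passing from $G_j$ to $G_{j+1}$ only adds vertices and edges, so every connected component of $G_j$ is contained in some component of $G_{j+1}$, whence the projection on $B$ satisfies the same containment.

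Next I would prove that $\cc(\sfs_j)$ is actually a partition of $V_j\cap B$, by the following case distinction based on the definition of a connected boundaried graph applied to $\G_j=(G_j,V_j\cap B)$, which is connected by hypothesis on $\sfP$.

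If $V_j\cap B\neq\emptyset$, every connected component $C$ of $G_j$ satisfies $C\cap B\neq\emptyset$, so every block $C\cap B$ of $\cc(\sfs_j)$ is non-empty; together with the fact that distinct components are disjoint and their union equals $V_j$, this yields a partition of $V_j\cap B$. If $V_j\cap B=\emptyset$, then by the ``either'' branch of the definition of connected boundaried graph, $G_j$ itself is connected, so it has a single component $C=V_j$, yielding $\cc(\sfs_j)=\{V_j\cap B\}=\{\emptyset\}$, which qualifies as a partition because its unique block is the empty set. In both cases the defining condition of connected $B$-boundaried sequence is met.

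The only subtle point — and the place one has to be careful — is the second case above: the raw definition of $\cc(\sfs_j)$ could a priori contain several copies of $\emptyset$ (one per component disjoint from $B$), which would make it a near-partition but not a partition. The connectivity assumption on $\sfP$ is precisely what rules this out, since any component of $G_j$ disjoint from $B$ would violate the connectedness of the boundaried graph $\G_j=(G_j,V_j\cap B)$.
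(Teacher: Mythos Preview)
Your proof is correct and follows exactly the approach the paper intends: the paper's own proof is the single sentence ``Follows directly from the definitions,'' and what you have written is precisely the unpacking of that sentence. Your case distinction on whether $V_j\cap B$ is empty, and in particular your remark that connectivity of $\G_j$ is exactly what prevents multiple empty blocks in $\cc(\sfs_j)$, is the right (and only) point that needs checking.
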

\begin{proof}
Follows directly from the definitions.
\end{proof}

\begin{definition}
Let $\G=(G,B)$ be a {connected} boundaried graph and $\sfS$ a $B$-boundaried sequence. We say that $\sfS$ is \emph{realizable in $\G$} if  there is an extension $\sfS^*$ of $\sfS$ that is the $(\G,\sfP)$-encoding sequence of some connected path-decomposition $\sfP$ of $\G.$
\end{definition}

Let us observe that if a $B$-boundaried sequence $\sfS$ is realizable, then by Lemma~\ref{lem:connected-seq} $\sfS$ is connected.
The set of \emph{representative $B$-boundaried sequences of a connected boundaried graph $\G=(G,B)$ of width $\leq \w$} is defined as:
$$\Rep_\w(\G)=\{\rep(\sfS)\mid \sfS  \textrm{ of width $\leq w$  is realizable in $\G=(G,B)$}\}.$$

To compute the connected pathwidth of a graph, rather than computing $\Rep_\w(\G),$ we compute a subset $\cD_\w(\G)\subseteq\Rep_\w(\G),$ called \emph{domination set}, such that for every representative $B$-boundaried sequence $\sfS\in\Rep_\w(\G),$ there exists a representative $B$-boundaried sequence $\sfR\in\cD_\w(\G)$ such that $\sfR\preceq\sfS.$

\begin{proposition}\label{prop:correctness}
A connected boundaried graph $\G=(G,B)$ has connected pathwidth at most $w$ if and only if $\cD_{\w+1}(\G)\neq\emptyset.$
\end{proposition}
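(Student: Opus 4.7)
My plan is to unpack the definitions on both sides and verify that the relationship between bag size in a (connected) path-decomposition and width of its $(\G,\sfP)$-encoding sequence gives the equivalence directly, modulo two routine checks about extensions.

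For the forward direction, suppose $\cpw(G)\leq w$, and let $\sfP=\langle A_1,\dots,A_\ell\rangle$ be a connected path-decomposition of $\G=(G,B)$ of width $\leq w$. Build the associated $(\G,\sfP)$-encoding sequence $\sfS=\langle \sfs_1,\dots,\sfs_\ell\rangle$ as in Definition~\ref{def:canonical-sequence}. The key observation is that for every $j\in[\ell]$,
\[|\bd(\sfs_j)|+\val(\sfs_j)=|A_j\cap B|+|A_j\setminus B|=|A_j|\leq w+1,\]
so $\width(\sfS)\leq w+1$. By Lemma~\ref{lem:connected-seq}, $\sfS$ is a connected $B$-boundaried sequence, and since $\sfS$ is trivially an extension of itself (no duplication), it is realizable in $\G$. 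Hence $\rep(\sfS)\in\Rep_{w+1}(\G)$, so $\Rep_{w+1}(\G)\neq\emptyset$. As $\cD_{w+1}(\G)$ is defined to be a subset of $\Rep_{w+1}(\G)$ such that every element of $\Rep_{w+1}(\G)$ is dominated by some element of $\cD_{w+1}(\G)$, the nonemptiness of $\Rep_{w+1}(\G)$ forces $\cD_{w+1}(\G)\neq\emptyset$.

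For the converse, assume $\cD_{w+1}(\G)\neq\emptyset$. Since $\cD_{w+1}(\G)\subseteq\Rep_{w+1}(\G)$, there exists $\sfR\in\Rep_{w+1}(\G)$, i.e.\ $\sfR=\rep(\sfS)$ for some $B$-boundaried sequence $\sfS$ of width $\leq w+1$ that is realizable in $\G$. By realizability, there is an extension $\sfS^*$ of $\sfS$ that is the $(\G,\sfP)$-encoding sequence of some connected path-decomposition $\sfP=\langle A_1,\dots,A_p\rangle$ of $\G$. The key auxiliary fact I need is that extensions preserve width: duplicating an element $\sfs_j$ simply creates another occurrence of the same triple $(\bd(\sfs_j),\cc(\sfs_j),\val(\sfs_j))$, and therefore cannot increase $\max_j(|\bd(\sfs_j)|+\val(\sfs_j))$. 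Applied iteratively this gives $\width(\sfS^*)=\width(\sfS)\leq w+1$. But then, using the same identity as before on the encoding sequence, $|A_j|=|\bd(\sfs^*_j)|+\val(\sfs^*_j)\leq w+1$ for each bag, so $\width(\sfP)\leq w$ and $\cpw(G)\leq w$.

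The argument is essentially a bookkeeping exercise; the only non-obvious step is the width-under-extension claim, and I expect that to be the main (very mild) obstacle worth stating explicitly as a one-line observation from Definition of extension together with the formula $\width(\sfS)=\max_j(|\bd(\sfs_j)|+\val(\sfs_j))$. Everything else follows immediately by matching the definitions of the width of a $B$-boundaried sequence with the bag-size formulation of $\width(\sfP)$.
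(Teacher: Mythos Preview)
Your proof is correct and follows essentially the same approach as the paper's own proof: both directions boil down to the identity $|\bd(\sfs_j)|+\val(\sfs_j)=|A_j|$ between the width of an encoding sequence and the bag sizes of the underlying path-decomposition, together with the containment $\cD_{w+1}(\G)\subseteq\Rep_{w+1}(\G)$. Your explicit observation that extensions preserve width (since duplication only repeats existing triples) is a useful clarification that the paper leaves implicit in its converse direction.
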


\begin{proof} 
  Let $\sfP$ be a connected path-decomposition of width at most $\w$ of $\G.$ Recall the the bags of such decomposition have size at most
  $\w+1.$ By definition, the $(\G,\sfP)$-encoding sequence is realizable in $\G,$ implying that $\Rep_{\w+1}(\G)$ and thereby
  $\cD_{\w+1}(\G)$ is not empty. Conversely, suppose that $\Rep_{\w+1}(\G)$ is non-empty and consider $\sfS\in\cD_{\w+1}(\G).$ As
  $\sfS\in\Rep_{\w+1}(\G),$ there exists a connected path-decomposition $\sfP$ of width at most $\w$ of $\G$ and $\sfS^*$ the
  $(\G,\sfP)$-encoding sequence with $\rep(\sfS^*)=S$, implying that $\cpw(\G)\leq w.$
\end{proof}

\subsection{Forget Routine}

Let $\G=(G,B)$ be a boundaried graph. If $x\in B$ is a boundary vertex, we denote by $\Bx=B\setminus\{x\}.$ We define $\Gx=(G,\Bx),$ that is, while the graph $G$ is left unchanged, we remove $x$ from the set of boundary vertices. Given $\cD_\w(\G)$ and $x\in B,$  \emph{Forget Routine} aims at computing a domination set $\cD_\w(\Gx).$ The routine is described in Algorithm~\ref{alg:forget}.

\begin{algorithm}[h]
{\small 
\KwIn{A boundaried graph $\G=(G,B),$ a vertex $x\in B,$ and $\cD_\w(\G).$}
\KwOut{$\cD_\w(\Gx)$, a domination set of $\Rep_\w(\Gx).$}
\BlankLine
$\cD_\w(\Gx)\leftarrow \emptyset$\;
\ForEach{$\sfS\in \cD_\w(\G)$} {
	\lIf{$\sfS_{\mid B\setminus\{x\}}$ is connected,}{
		add $\rep(\sfS_{\mid B\setminus\{x\}})$ to $\cD_\w(\Gx)$
		}
	}
	\Return{ $\cD_\w(\Gx).$}
\caption{Forget Routine} \label{alg:forget}
}
\end{algorithm}

To prove the correctness of  Forget Routine, we proceed in two steps. We first establish the \emph{completeness} of the algorithm. More precisely, Proposition~\ref{prop:forget-complete} states that, for every connected path-decomposition $\sfP$ of $\Gx,$ there exists some $B$-boundaried sequence $\sfS\in\cD_\w(\G)$ such that $\rep(\sfS_{\mid B\setminus\{x\}})\preceq \rep(\sfT)$ where $\sfT$ is the $(\Gx,\sfP)$-encoding sequence. 
Then Proposition~\ref{prop:forget-sound} proves the soundness of the routine: for every $B$-boundaried sequence $\sfS\in \cD_\w(\G),$
$\rep(\sfS_{\mid B\setminus\{x\}})\in\cD_\w(\Gx)$ if $\sfS_{\mid B\setminus\{x\}}$ is connected.

\begin{proposition}[Forget completeness] \label{prop:forget-complete}
Let $\G=(G,B)$ be a boundaried graph and $x\in B$ be a boundary vertex. If $\sfP$ is a connected path-decomposition of width at most $\w$ of $\Gx,$ then there exists $\sfS\in \cD_\w(\G)$ such that $\sfS_{\mid \Bx}$ is connected and $\rep(\sfS_{\mid \Bx})\preceq \rep(\sfT)$ where $\sfT$ is the $(\Gx,\sfP)$-encoding sequence.
\end{proposition}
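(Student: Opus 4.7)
The plan is to reduce to the domination property of $\cD_w(\G)$, applied to the canonical encoding of $\sfP$ when viewed as a path-decomposition of $\G$.  First I would observe that $\sfP$ is also a connected path-decomposition of $\G=(G,B)$: connectedness on $\Gx$ means that for every $i$, every connected component of $G_i$ contains a vertex of $V_i\cap \Bx$, and since $\Bx\subseteq B$ that same component contains a vertex of $V_i\cap B$.  Hence the $(\G,\sfP)$-encoding sequence $\sfS'$ is well-defined, connected by Lemma~\ref{lem:connected-seq}, and of sequence-width at most $\w$.  Being trivially an extension of itself, $\sfS'$ is realizable in $\G$, so $\rep(\sfS')\in \Rep_\w(\G)$, and by the defining property of the domination set there exists $\sfS\in \cD_\w(\G)$ with $\sfS\preceq \rep(\sfS')\equiv \sfS'$ (transitivity via Lemma~\ref{lem:prop-useful-sequence}(5)).

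Next I would show by direct unfolding of Definitions~\ref{def:canonical-sequence} and~\ref{def:projection} that $\sfS'_{\mid \Bx}=\sfT$.  Index-wise: $\bd((\sfs'_j)_{\mid \Bx})=(A_j\cap B)\cap \Bx=A_j\cap \Bx=\bd(\sft_j)$; projecting $\cc(\sfs'_j)=\{C\cap B \mid C \text{ component of }G_j\}$ onto $\Bx$ yields $\{C\cap \Bx\mid C \text{ component of }G_j\}=\cc(\sft_j)$; and $\val((\sfs'_j)_{\mid \Bx})=|A_j\setminus B|+[x\in A_j]=|A_j\setminus \Bx|=\val(\sft_j)$.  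Applying Lemma~\ref{lem:p-compatibility2} to $\sfS\preceq \sfS'$ then gives $\sfS_{\mid \Bx}\preceq \sfS'_{\mid \Bx}=\sfT$, and Lemma~\ref{lem:prop-useful-sequence}(3) yields $\rep(\sfS_{\mid \Bx})\preceq \rep(\sfT)$, which is the dominance half of the conclusion.

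The main obstacle is the connectedness claim on $\sfS_{\mid \Bx}$, since (as noted after Definition~\ref{def:projection}) projecting a connected $B$-boundaried sequence may produce a near-partition that is not a partition.  Here the extra leverage comes from the fact that $\sfP$ is connected on $\Gx$, not merely on $\G$: whenever $x\in V_i$, the component of $G_i$ containing $x$ must contain some vertex of $\Bx\setminus\{x\}$, so in every index $j$ the block of $\cc(\sfs'_j)$ that contains $x$ has size at least $2$.  Since $\sfS\preceq \sfS'$ forces $\model(\sfS)=\model(\sfS')$, the $\bd$ and $\cc$ components of $\sfS$ at each type-$1$ or type-$2$ breakpoint coincide with those of $\sfS'$ at the corresponding breakpoint, and between consecutive such breakpoints $\cc$ is constant in both sequences.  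Consequently, at every index $j$ of $\sfS$ the block of $\cc(\sfs_j)$ containing $x$ still has size at least $2$, so projecting onto $\Bx$ removes no non-empty block, i.e.\ $\cc(\sfs_j)_{\mid \Bx}$ remains a genuine partition of $\bigcup_{i\leq j}\bd((\sfs_i)_{\mid \Bx})$.  This shows $\sfS_{\mid \Bx}$ is connected and completes the proof.
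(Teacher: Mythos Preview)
Your argument mirrors the paper's: regard $\sfP$ as a connected path-decomposition of $\G$, take its $(\G,\sfP)$-encoding $\sfS'$ (the paper's $\sfR$), verify $\sfS'_{\mid\Bx}=\sfT$ coordinatewise, pick $\sfS\in\cD_w(\G)$ dominating $\sfS'$, and push the domination through the projection via Lemma~\ref{lem:p-compatibility2} and Lemma~\ref{lem:prop-useful-sequence}(3). Your explicit connectedness argument for $\sfS_{\mid\Bx}$ actually goes a bit beyond the paper (which leaves it implicit in $\model(\sfS)=\model(\sfR)$); one small correction there: the block of $\cc(\sfs'_j)$ containing $x$ need not have size $\geq 2$ when $V_j\cap\Bx=\emptyset$ (connectedness of $\Gx$ then only forces $G_j$ to be connected, so the block is just $\{x\}$), but in that case the projected near-partition is $\{\emptyset\}$, the unique partition of the empty set, so your conclusion survives.
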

\begin{proof}
Suppose that $\sfP=\langle A_1,\dots, A_\ell\rangle.$ Observe that $\sfP$ is also a connected path-decomposition of $\G$ of width at most $\w.$ Let $\sfR=\langle \sfr_1,\dots, \sfr_\ell\rangle$ be the $(\G,\sfP)$-encoding sequence. 

We claim that $\sfR_{\mid \Bx}$ is the $(\Gx,P)$-encoding sequence. 
To see this, we apply Definition~\ref{def:projection} on the projection of $\sfR$ onto $\Bx.$ Consider an index $j\in[\ell].$ 
First, we have that $\bd({\sfr_j}_{\mid \Bx})=\bd(\sfr_j)\cap \Bx.$ As by construction of $\sfR,$ $\bd(\sfr_j)=A_j\cap \Bx$ and as $\Bx\subset B,$ we obtain $\bd({\sfr_j}_{\mid \Bx})=A_j\cap \Bx.$ For the same arguments, observe that $\val({\sfr_j}_{\mid \Bx})=\val(\sfr_j)+|\bd(\sfr_j)\setminus \Bx|=|A_j\setminus \Bx|.$ 
Let us now examine $\cc({\sfr_j}_{\mid \Bx})=\cc(\sfr_j)_{\mid \Bx}.$ By Definition~\ref{def:projection}, every block $X\in \cc({\sfr_j}_{\mid \Bx})$ is obtained as $X=X'\cap \Bx$ for some block $X'$ of $\cc(\sfr_j).$ Since $\sfR$ is connected, $X'=C\cap B$ for some connected component $C$ of $G_j=G[V_j],$ and thereby $X=C\cap\Bx.$ The assumption that $\Gx$ is connected implies that if $X=\emptyset,$ then $G_j$ is connected (that is $C=V_j$) and $\Bx\cap V_j=\emptyset$ (that is $B=\{x\}$). This implies that $\cc({\sfr_j}_{\mid \Bx})$ is a partition and fulfills the requirements of Definition~\ref{def:canonical-sequence}.
It follows that $\sfR_{\mid \Bx}$ is indeed the $(\Gx,\sfP)$-encoding sequence and we can thereby set $\sfT=\sfR_{\mid \Bx}.$

Since $\cD_\w(\G)$ is a domination set of $\Rep_\w(\G),$ there exists a $B$-boundaried sequence $\sfS\in \cD_\w(\G)$ such that $\sfS\preceq
\rep(\sfR)$. As $\model(\sfR)=\model(\sfS),$ by Lemma \ref{lem:p-compatibility2} we can conclude that $\sfS_{\mid \Bx}\preceq \sfR_{\mid
  \Bx}=\sfT.$ Lemma~\ref{lem:prop-useful-sequence}(3) allows to conclude that  $\rep(\sfS_{\mid \Bx})\preceq \rep(\sfT).$
\end{proof}

\begin{proposition}[Forget soundness] \label{prop:forget-sound}
Let $\G=(G,B)$ be a boundaried graph and $x\in B$ be a boundary vertex. If $\sfS\in \cD_\w(\G)$ and $\sfS_{\mid\Bx}$ is connected, then $\rep(\sfS_{\mid \Bx})\in\Rep_\w(\Gx).$
\end{proposition}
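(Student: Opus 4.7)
The plan is to unfold $\sfS\in\cD_\w(\G)\subseteq\Rep_\w(\G)$ into a realizable witness, project it through the forget operation, and verify that the projection remains realizable in $\Gx$. By the definition of $\Rep_\w(\G)$, there exists a $B$-boundaried sequence $\sfR$ of width at most $w$ realizable in $\G$ with $\sfS=\rep(\sfR)$, and hence some extension $\sfR^*\in{\sf Ext}(\sfR)$ that is the $(\G,\sfP)$-encoding of a connected path-decomposition $\sfP$ of $\G$ of width at most $w$. Since extensions preserve models (Lemma~\ref{lem:model}) and typical sequences, one has $\rep(\sfR^*)=\rep(\sfR)=\sfS$; in particular, for every position $j$ of $\sfR^*$, $\cc(\sfr^*_j)$ coincides with $\cc(\sfs_i)$ for the index $i$ corresponding to the largest type-$1$ or type-$2$ breakpoint of $\sfR^*$ that is at most $j$, because $\cc$ is constant between consecutive type-$1$/type-$2$ breakpoints.

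The crux is to show that $\sfP$ is also a connected path-decomposition of $\Gx$. The hypothesis ``$\sfS_{\mid\Bx}$ connected'' states that each $\cc(\sfs_i)_{\mid\Bx}$ is a partition, and by the previous observation this transfers to $\cc(\sfr^*_j)_{\mid\Bx}$ being a partition for every $j$. Unpacking via Definition~\ref{def:canonical-sequence}, $\cc(\sfr^*_j)=\{C\cap B\mid C \text{ is a connected component of } G_j\}$; requiring the projection onto $\Bx$ to be a partition precisely rules out the bad configuration where the component of $G_j$ containing $x$ satisfies $C\cap B=\{x\}$ while some other component has $C\cap B\neq\emptyset$. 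The only surviving possibilities are either that every component of $G_j$ meets $V_j\cap\Bx$, or that $G_j$ is a single component $C$ with $C\cap B=\{x\}$ (so $V_j\cap\Bx=\emptyset$ and $G_j$ is connected). In both cases $(G_j,V_j\cap\Bx)$ is a connected boundaried graph, so $\sfP$ is indeed a connected path-decomposition of $\Gx$.

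With $\sfP$ connected for $\Gx$, a direct check from Definitions~\ref{def:canonical-sequence} and~\ref{def:projection} shows that $\sfR^*_{\mid\Bx}$ is exactly the $(\Gx,\sfP)$-encoding, since the projection formulas for $\bd$, $\val$ and $\cc$ align with the required encoding. Lemma~\ref{lem:p-ex-comp} then yields $\sfR^*_{\mid\Bx}\in{\sf Ext}(\sfR_{\mid\Bx})$, so $\sfR_{\mid\Bx}$ is realizable in $\Gx$; by Lemma~\ref{lem:p-width} it has width at most $w$, hence $\rep(\sfR_{\mid\Bx})\in\Rep_\w(\Gx)$. Finally, Lemma~\ref{lem:prop-useful-sequence}(1) gives $\sfS\equiv\sfR$, Lemma~\ref{lem:p-compatibility2} propagates this to $\sfS_{\mid\Bx}\equiv\sfR_{\mid\Bx}$, and by the uniqueness of representatives within an equivalence class we conclude $\rep(\sfS_{\mid\Bx})=\rep(\sfR_{\mid\Bx})\in\Rep_\w(\Gx)$. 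The main obstacle lies in the middle paragraph: converting the combinatorial condition on $\cc$-projections into the graph-theoretic statement that $\sfP$ is connected for $\Gx$, which requires the case analysis above to see that the partition condition is exactly strong enough to rule out the configurations that would otherwise destroy connectedness after removing $x$ from the boundary.
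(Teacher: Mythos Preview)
Your proof is correct and follows essentially the same route as the paper's. The paper goes directly from $\sfS\in\Rep_\w(\G)$ to a $(\G,\sfP)$-encoding $\sfT$ with $\sfS=\rep(\sfT)$ (collapsing your $\sfR$ and $\sfR^*$ into a single step), then argues that $\model(\sfS)=\model(\sfT)$ transfers connectedness of $\sfS_{\mid\Bx}$ to $\sfT_{\mid\Bx}$, hence $\sfP$ is connected for $\Gx$; your middle paragraph is simply a more explicit unfolding of this last implication, and your final chain through Lemmas~\ref{lem:p-ex-comp}, \ref{lem:p-width}, \ref{lem:prop-useful-sequence}(1) and \ref{lem:p-compatibility2} matches the paper's.
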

\begin{proof}
  As $\sfS\in\cD_\w(\G)\subseteq \Rep_\w(\G),$ there exists a connected path-decomposition $\sfP$ of $\G$ of width at most $\w$ such that
  the $(\G,\sfP)$-encoding sequence $\sfT=\langle\sft_1,\dots, \sft_p\rangle$ satisfies $\sfS=\rep(\sfT).$ Since
  $\model(\sfS)=\model(\sfT),$ the hypothesis that $\sfS_{\mid\Bx}$ is connected implies that $\sfT_{\mid\Bx}$ is also connected. It follows
  that $\sfP$ is also a connected path-decomposition of $\Gx.$ One can check that $\sfT_{\mid\Bx}$ is the $(\Gx,\sfP)$-encoding sequence
  (for this, one may just copy the corresponding argument of Proposition \autoref{prop:forget-complete}). As $\sfS=\rep(T),$ we have that
  $\sfS\equiv\sfT$ by Lemma \ref{lem:prop-useful-sequence}(1) and then $\model(\sfS)=\model(\sfT).$ Then, Lemma~\ref{lem:p-compatibility2}
  implies that $\sfS_{\mid \Bx}\equiv\sfT_{\mid \Bx}$ and so $\rep(\sfS_{\mid \Bx})=\rep(\sfT_{\mid \Bx})$ by Lemma
  \ref{lem:prop-useful-sequence} and the fact that the representative is uniquely defined. Finally, as $\sfS$ has width at most $\w$ (it
  belongs to $\cD_\w(\G)$), by Lemma~\ref{lem:p-width}, $\sfS_{\mid \Bx}$ has width at most $\w$ as well. It follows that
  $\rep(\sfS_{\mid \Bx})\in\Rep_\w(\Gx).$
\end{proof}

\begin{theorem} \label{th:forget}
Algorithm~\ref{alg:forget} computes $\cD_\w(\G^{\overline{x}})$ in $2^{O(k (w + \log k))}$-time, where $k=|B|.$
\end{theorem}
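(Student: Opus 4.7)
The plan is to split the argument into a correctness part and a running-time part. For correctness, I would combine Propositions~\ref{prop:forget-complete} and~\ref{prop:forget-sound} to show that the output satisfies the two requirements of a domination set: (i) containment in $\Rep_\w(\Gx)$, and (ii) domination of every element of $\Rep_\w(\Gx)$. Soundness (\autoref{prop:forget-sound}) directly yields (i), since every element added by the loop is of the form $\rep(\sfS_{\mid \Bx})$ with $\sfS\in\cD_\w(\G)$ and $\sfS_{\mid \Bx}$ connected. For (ii), I would pick an arbitrary $\sfT\in\Rep_\w(\Gx)$ and let $\sfP$ be a connected path-decomposition realizing it; \autoref{prop:forget-complete} then yields some $\sfS\in\cD_\w(\G)$ such that $\sfS_{\mid\Bx}$ is connected and $\rep(\sfS_{\mid\Bx})\preceq\rep(\sfT)=\sfT$. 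Since this $\sfS$ is processed by the loop and $\sfS_{\mid\Bx}$ passes the connectivity test, $\rep(\sfS_{\mid\Bx})$ is added to $\cD_\w(\Gx)$, providing the required dominator.

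For the running time, I would first bound the number of iterations. Since $\cD_\w(\G)\subseteq\Rep_\w(\G)\subseteq\Rep_\w(B)$, \autoref{lem:nb-rep} yields $|\cD_\w(\G)|\leq 2^{O(k(\w+\log k))}$. Next, I would bound the cost of a single iteration. By \autoref{lem:size-model} and \autoref{lem:nb-ts}, each representative $\sfS\in\cD_\w(\G)$ has length $O(k\w)$: at most $O(k)$ type-1 or type-2 breakpoints, and between any two consecutive such breakpoints a typical sequence of length at most $2\w+1$. The projection $\sfS_{\mid\Bx}$ is computed element-wise in time linear in the length, using \autoref{def:projection} (recompute $\bd$, $\cc$, and $\val$ in each coordinate). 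Testing whether $\sfS_{\mid\Bx}$ is connected reduces to checking that every $\cc(\sfs_{j\mid\Bx})$ is a partition, which can be done in $O(k\w)$ time by inspecting each block once. Finally, computing the representative of the projected sequence requires identifying the new breakpoints (by scanning for changes in $\bd$, in $\cc$, and for tips of the integer sequences between them) and extracting the typical subsequences, which is polynomial in $k\w$.

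Putting these bounds together yields a total running time of $|\cD_\w(\G)|\cdot \mathrm{poly}(k\w)=2^{O(k(\w+\log k))}$, as claimed. The only mildly delicate point, which I would address explicitly, is that after projection the set of breakpoints can shrink (a block consisting only of $x$ disappears, possibly erasing a type-2 breakpoint and collapsing two adjacent integer runs), so the recomputation of $\rep$ must be carried out on the projected sequence from scratch and cannot be obtained by simply restricting the breakpoint set of $\sfS$; this is the main structural care needed in the implementation, but it does not affect the asymptotic running time since the recomputation still runs in time polynomial in the length $O(k\w)$ of the sequence.
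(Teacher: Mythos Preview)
Your proposal is correct and follows essentially the same approach as the paper's proof, which also derives correctness from Propositions~\ref{prop:forget-complete} and~\ref{prop:forget-sound} and bounds the running time by $|\cD_\w(\G)|\cdot\mathrm{poly}(k\w)$ via \autoref{lem:nb-rep}. You give more detail than the paper (spelling out the two domination-set conditions and the per-iteration cost, and flagging that $\rep$ must be recomputed from scratch after projection), but the underlying argument is the same.
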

\begin{proof}
  The correctness of Algorithm~\ref{alg:forget} is proved by Proposition~\ref{prop:forget-complete} and
  Proposition~\ref{prop:forget-sound}. These two propositions imply that by applying  Forget Routine on a domination set of ${\bf G}$
  included in the set of representatives of $\G$, we indeed compute a domination set of $\Gx$ that is a subset of the set of representatives
  of $\Gx$.  As performing the projection of $B$-boundaried sequence onto $\Bx$ can be performed in polynomial time in the size of the
  sequence, the complexity of the algorithm is dominated by the size of $\cD_\w(\G)$ that is $2^{O(k (w + \log k))}$, because of
  Lemma~\autoref{lem:nb-rep}. 
\end{proof}

\subsection{  Insertion Routine}

In this subsection, we present the \emph{Insertion Routine}. Suppose that $\G=(G,B)$ is a boundaried graph with $G=(V,E).$ For a subset $X\subseteq B,$ we set $G^{x}=(V\cup\{x\},E\cup\{xy\mid y\in X\})$ and $\G^{x}=(G^{x},B^{x})$ where $B^{x}=B\cup\{x\}.$ Given a domination set $\cD_\w(\G)$ of $\Rep_\w(\G),$ the task of  Insertion Routine is to compute a domination set $\cD_\w(\G^{x})$ of $\Rep_\w(\G^{x}).$ Algorithm~\ref{alg:insertion} is describing  Insertion Routine.

\begin{algorithm} [h]
{\small 
\KwIn{A boundaried graph $\G=(G,B),$ a subset $X\subset B,$ and $\cD_\w(\G).$}
\KwOut{$\cD_\w(\G^x)$, a domination set of $\Rep_\w(\G^x).$}

\BlankLine
$\cD_\w(\G^x)\leftarrow \emptyset$\;
\ForEach{$\sfS=\langle \sfs_1,\ldots, \sfs_\ell\rangle\in \cD_\w(\G)$} {
	\ForEach{$f,l\in[\ell]$ such that $X\subseteq \bigcup_{f\leq j\leq l}\bd(\sfs_j)$}{
		\ForEach{$(\leq 2)$-extension $\sfS'$ of $\sfS$ duplicating none, one or both of $\sfs_f$ and $\sfs_l$}{
			let $\ell'$ be the length of $\sfS'$\;
			set $f_x=\max\{j\in[\ell']\mid \delta_{\sfS'\rightarrow \sfS}(j)=f\}$ and $l_x=\min\{j\in[\ell']\mid \delta_{\sfS'\rightarrow \sfS}(j)=l\}$\;
			set $\sfS^x=\ins(\sfS',x,X,f_x,l_x)$\;
			(observe that by construction $(f_x,l_x)$ is valid with respect to $X$ in $\sfS'$)\;
			\lIf{$\mathsf{width}(\sfS^x)\leq \w$,}{add $\rep(\sfS^x)$ to $\cD_\w(\G^x)$}
			}
		}	
	}	
	\Return{ $\cD_\w(\G^x).$}
\caption{Insertion Routine} \label{alg:insertion}
}
\end{algorithm}

To prove the correctness of  Insertion Routine, we proceed in two steps. We first establish the \emph{completeness} of the algorithm. More precisely, 
Proposition~\ref{prop:insertion-complete} aims at proving that for every connected path-decomposition $\sfP^x$ of $\G^x,$ the $(\G^x,\sfP^x)$-encoding sequence
$T^x$ is dominated by some $B^x$-boundaried sequence $\sfS^x$ that can be computed from a $B$-boundaried sequence $\sfS$ belonging to $\cD_\w(\G).$
Then we argue about the \emph{soundness} of  Insertion Routine. Proposition~\ref{prop:insertion-sound} shows that if $\sfS^x$ is generated
from a $B$-boundaried sequence $\sfS\in \cD_\w(\G),$ then $\rep(\sfS^x)$ belongs to $\cD_\w(\G^x).$

\begin{proposition}[Insertion completeness] \label{prop:insertion-complete}

Let $\G=(G,B)$ be a boundaried graph and let $X\subseteq B$ be a subset of boundary vertices. Let $\sfP^x$ be a connected path-decomposition of width at most $\w$ of the boundaried graph $\G^x=(G^x,B^x)$ and let $\sfT^x$ be the $(\G^x,\sfP^x)$-encoding sequence. Then there exist a $B$-boundaried sequence $\sfS'$ such that $\sfS'$ is a $(\leq 2)$-extension of some $B$-boundaried sequence $\sfS\in\cD_\w(\G)$ and an insertion position $(f_x,l_x)$ valid with respect to $X$ in $\sfS'$ such that the $B^x$-boundaried sequence $\sfS^x=\ins(\sfS',x,X,f_x,l_x)$ satisfies $\rep(\sfS^x)\preceq \rep(\sfT^x).$

\end{proposition}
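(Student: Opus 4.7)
The plan is to construct from $\sfP^x$ a canonical path-decomposition $\sfP$ of $\G$ by removing $x$ from every bag, then to recognise $\sfT^x$ as exactly an insertion (in the sense of Definition~\ref{def:sequence-insertion}) into the encoding sequence $\sfT$ of $\sfP$, and finally to transport this insertion down to a dominating $\sfS \in \cD_\w(\G)$ via Lemma~\ref{lem:i-completeness2}.

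First, write $\sfP^x = \langle A^x_1, \ldots, A^x_p \rangle$ and let $[f_x, l_x]$ denote the trace of $x$ in $\sfP^x$. Define $\sfP = \langle A^x_1 \setminus \{x\}, \ldots, A^x_p \setminus \{x\} \rangle$. The path-decomposition axioms for $\sfP$ are immediate, and its width does not exceed that of $\sfP^x$. Since $N_{G^x}(x) = X \subseteq B$, Lemma~\ref{lem:proj-boundaried} applies and certifies that $\sfP$ is in fact a connected path-decomposition of $\G$.

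Next, let $\sfT = \langle \sft_1, \ldots, \sft_p \rangle$ be the $(\G, \sfP)$-encoding sequence. The pair $(f_x, l_x)$ is a valid insertion position with respect to $X$ in $\sfT$: every edge $\{x, y\}$ with $y \in X$ sits inside some bag $A^x_j$ with $j \in [f_x, l_x]$, so $y \in A^x_j \cap B = \bd(\sft_j)$. I would then verify coordinate-by-coordinate that $\sfT^x = \ins(\sfT, x, X, f_x, l_x)$: the $\bd$-coordinates match by a direct calculation on $A^x_j \cap B^x$; the $\val$-coordinates coincide because $x \in B^x$ is never inactive; and the $\cc$-coordinates agree because the connected components of $G^x_j$ are obtained from those of $G_j$ by adjoining $\{x\}$ and merging every component that intersects $X$, exactly matching the insertion rule.

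Once this identification is in place, the rest is assembly. Since $\sfT$ is realizable in $\G$, we have $\rep(\sfT) \in \Rep_\w(\G)$, so by the domination property of $\cD_\w(\G)$ there exists $\sfS \in \cD_\w(\G)$ with $\sfS \preceq \rep(\sfT)$; combining this with Lemma~\ref{lem:prop-useful-sequence}(1) and (5) yields $\sfS \preceq \sfT$. Applying Lemma~\ref{lem:i-completeness2} to $\sfS \preceq \sfT$ and the valid insertion position $(f_x, l_x)$ produces a $(\leq 2)$-extension $\sfS'$ of $\sfS$ and a valid insertion position $(f'_x, l'_x)$ in $\sfS'$ such that $\ins(\sfS', x, X, f'_x, l'_x) \preceq \ins(\sfT, x, X, f_x, l_x) = \sfT^x$. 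Setting $\sfS^x = \ins(\sfS', x, X, f'_x, l'_x)$ and invoking Lemma~\ref{lem:prop-useful-sequence}(3) delivers $\rep(\sfS^x) \preceq \rep(\sfT^x)$, which is the desired conclusion. The main obstacle is the $\cc$-coordinate verification in the second step, where one must argue that the only effect of re-inserting $x$ (with neighbourhood $X \subseteq B$) into $G_j$ is to glue those components of $G_j$ that touch $X$ into a single component through $x$; this is precisely why the global connectivity information is encoded faithfully by the local insertion operation on boundaried sequences.
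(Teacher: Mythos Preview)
Your proof is correct and follows essentially the same route as the paper's own argument: remove $x$ from the bags of $\sfP^x$, invoke Lemma~\ref{lem:proj-boundaried} to get a connected path-decomposition $\sfP$ of $\G$, identify $\sfT^x$ with $\ins(\sfT,x,X,\cdot,\cdot)$ at the trace of $x$, pick a dominating $\sfS\in\cD_\w(\G)$, and push the insertion down via Lemma~\ref{lem:i-completeness2}. Your write-up is in fact slightly more explicit than the paper's on the $\cc$-coordinate verification, which the paper dispatches with ``one can easily check''.
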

\begin{proof}
Suppose that $\sfP^x=\langle A^x_1,\dots, A^x_\ell\rangle$ and that $\sfT^x=\langle\sft^x_1,\dots, \sft^x_p\rangle.$ Let $[f^*_x,l^*_x]$ be the trace of $x$ in $\sfP^x.$ By the definition of a path-decomposition and of an encoding sequence, $X\subseteq \bigcup_{f_x\leq j\leq l_x}\bd(\sft^x_j).$ By Lemma \ref{lem:proj-boundaried}, $\sfP=\langle A_1,\dots, A_\ell\rangle,$ with $A_i=A_i^x\setminus\{x\}$ for every $1\leq i \leq\ell,$ is a connected path-decomposition of $\G.$  Let $\sfT=\langle \sft_1,\dots, \sft_\ell\rangle$ be the $(\G,\sfP)$-encoding sequence. Observe that by  the 
construction of $\G^x,$ if $y\in X,$ then $y\in A_j$ for some $f^*_x\leq j\leq l^*_x.$ As by 
assumption, $X\subseteq B,$ we have that $y\in \bd(\sft_j).$ Therefore, $(f^*_x,l^*_x)$ is 
a valid insertion position with respect to $X$ in $\sfT.$ One can easily check that $\ins(\sfT,x,X,f^*_x,l^*_x)=\sfT^x.$ Observe that, as the width of $\sfT^x$ is at most $\w,$ 
the width of $\sfT$ is at most $\w$ as well, because of Lemma~\autoref{lem:i-wd-ins}. Since $\cD_\w(\G)$ is a domination set of $\Rep_\w(\G),$ there exists a $B$-boundaried 
sequence $\sfS\in\cD_\w(\G)$ such that $\sfS\preceq \sfT.$ By Lemma~\ref{lem:i-completeness2}, there exists a $(\leq 2)$-extension $\sfS'$ of $\sfS$ and a valid 
insertion position $(f_x,l_x)$ with respect to $X$ in $\sfS'$ such that $\ins(\sfS',x,X,f_x,l_x)\preceq \ins(\sfT,x,X,f^*_x,l^*_x).$ By Lemma~\ref{lem:prop-useful-sequence}(3), we have $\rep(\sfS^x)\preceq \rep(\sfT^x).$
\end{proof}

{We let the reader observe that the completeness of  Insertion Routine relies on Lemma~\ref{lem:i-completeness2} and thereby on Lemma~\ref{lem:i-ext}. And the reason we compute a domination set of $\Rep_w(\G^x)$ rather than the set $\Rep_w(\G^x)$, is the issue discussed in Figure~\ref{fig:counter-example}.}

\begin{proposition}[Insertion soundness] \label{prop:insertion-sound}
Let $\G=(G,B)$ be a boundaried graph and let $X\subseteq B$ be a subset of boundary vertices. If $\sfS'=\langle\sfs'_1,\dots, \sfs'_{\ell'}\rangle$ is a $(\leq 2)$-extension of a $B$-boundaried sequence $\sfS=\langle \sfs_1,\dots, \sfs_{\ell}\rangle\in \cD_\w(\G)$ and if $(f_x,l_x)$ is a valid insertion position with respect to $X$ in $\sfS'$ such that $\sfS^x=\ins(\sfS',x,X,f_x,l_x)$ has width at most $\w,$ then $\rep(\sfS^x)\in\Rep_\w(\Gx).$
\end{proposition}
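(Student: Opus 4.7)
The plan is to realize $\sfS^x$ in $\G^x$ by exhibiting a connected path-decomposition $\sfP^x$ of $\G^x$ of width at most $\w$ whose $(\G^x,\sfP^x)$-encoding sequence extends $\sfS^x$; this directly yields $\rep(\sfS^x)\in\Rep_\w(\G^x)$. Since $\sfS\in\cD_\w(\G)\subseteq\Rep_\w(\G)$, I first fix a connected path-decomposition $\sfP$ of $\G$ of width at most $\w$ and an extension $\sfS^*$ of $\sfS$ that is the $(\G,\sfP)$-encoding sequence. Both $\sfS^*$ and $\sfS'$ are extensions of $\sfS$, so I form a common extension $\sfT^*$ of the two (taking, for each $i\in[\ell]$, the larger multiplicity of $\sfs_i$ among $\sfS^*$ and $\sfS'$) and transport it back to a path-decomposition $\sfP^*=\langle A^*_1,\dots,A^*_r\rangle$ of $\G$ by duplicating the bags of $\sfP$ accordingly. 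Duplicating a bag only repeats the corresponding prefix-graph, so $\sfP^*$ remains a connected path-decomposition of $\G$ of width at most $\w$ and its encoding sequence is exactly $\sfT^*$.

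Next, since $\sfT^*\in{\sf Ext}(\sfS')$, Lemma~\ref{lem:i-val-posb} supplies a valid insertion position $(f^*_x,l^*_x)$ of $X$ in $\sfT^*$ such that $\ins(\sfT^*,x,X,f^*_x,l^*_x)$ is an extension of $\sfS^x$. I then construct $\sfP^x$ from $\sfP^*$ by inserting $x$ into the consecutive bags $A^*_{f^*_x},\dots,A^*_{l^*_x}$. The validity of $(f^*_x,l^*_x)$ forces $X\subseteq\bigcup_{f^*_x\leq j\leq l^*_x}\bd(\sft^*_j)\subseteq\bigcup_{f^*_x\leq j\leq l^*_x}A^*_j$, so every new edge $\{x,y\}$ with $y\in X$ lies in some bag; the trace of $x$ in $\sfP^x$ is the consecutive set $[f^*_x,l^*_x]$, and the remaining path-decomposition axioms are inherited from $\sfP^*$.

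Finally I check connectedness and the encoding identity. For $i<f^*_x$ the graph $G^x_i$ equals $G^*_i$ and $x\notin V^x_i$, so connectedness with respect to $B^x$ follows from connectedness of $\sfP^*$ with respect to $B$. For $i\geq f^*_x$, $x$ is adjacent in $G^x$ to some $y\in X\cap A^*_{f^*_x}\subseteq B\subseteq B^x$ already introduced, so the component of $G^x_i$ containing $x$ meets $B^x$; every other component of $G^x_i$ is a component of $G^*_i$ and already meets $B\subseteq B^x$. A coordinate-wise check against Definitions~\ref{def:canonical-sequence} and~\ref{def:sequence-insertion} shows that the $(\G^x,\sfP^x)$-encoding sequence equals $\ins(\sfT^*,x,X,f^*_x,l^*_x)$, which extends $\sfS^x$; the width bound on $\sfS^x$ transfers to $\sfP^x$. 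Hence $\sfS^x$ is realizable in $\G^x$ and $\rep(\sfS^x)\in\Rep_\w(\G^x)$. The main obstacle is the initial bookkeeping: producing $\sfT^*$ compatibly with both the realization $\sfS^*$ and the $(\leq 2)$-extension $\sfS'$, and lifting it to a \emph{connected} path-decomposition $\sfP^*$ of $\G$; once this is in place, Lemma~\ref{lem:i-val-posb} and an unfolding of the encoding definitions carry the proof through.
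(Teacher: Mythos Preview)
Your argument breaks down at the very first step. From $\sfS\in\cD_\w(\G)\subseteq\Rep_\w(\G)$ you conclude that there is an \emph{extension} $\sfS^*$ of $\sfS$ that is the $(\G,\sfP)$-encoding sequence of some connected path-decomposition $\sfP$. But membership in $\Rep_\w(\G)$ only says that $\sfS=\rep(\sfT)$ for some encoding sequence $\sfT$, and $\sfT$ is in general \emph{not} an extension of $\sfS$. Extensions are produced by duplication only, whereas $\sfT$ contains, between two consecutive type-1/type-2 breakpoints, elements whose $\val$-component is strictly between the neighbouring tip values. Those elements are not copies of anything in $\sfS$, so $\sfT\notin{\sf Ext}(\sfS)$; and you cannot ``repair'' $\sfP$ to realise an extension of $\sfS$ either, since the non-boundary vertices forcing the intermediate $\val$-values must live in some bag. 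Consequently your ``common extension $\sfT^*$ of $\sfS^*$ and $\sfS'$'' does not exist in general, and the rest of the argument (Lemma~\ref{lem:i-val-posb}, the construction of $\sfP^x$, the width transfer) never gets off the ground.

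This is exactly why the paper proceeds differently: it keeps the relation $\sfS=\rep(\sfT)$, builds the $(\leq 2)$-extension $\sfT'$ of $\sfT$ in parallel with $\sfS'$, performs the insertion in $\sfT'$ to obtain a genuinely realizable $\sfT^x$, and only then proves $\rep(\sfS^x)=\rep(\sfT^x)$ via the sandwich $\sfS_1\leq\sfT\leq\sfS_2$ with $\sfS_1,\sfS_2\in{\sf Ext}(\sfS)$ coming from Lemma~\ref{lem:prop-useful}(6). That extra equivalence argument is precisely the piece that compensates for $\sfT$ not being an extension of $\sfS$; your proposal skips it by assuming a relationship between $\sfS$ and the encoding that is too strong. (A minor aside: in your connectedness check, $X\cap A^*_{f^*_x}$ may be empty, so the neighbour $y$ need not exist; the conclusion still holds simply because $x\in B^x$.)
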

\begin{proof}
As $\sfS\in\cD_\w(\G)\subseteq \Rep_\w(\G),$ there exists a connected path-
decomposition $\sfP$ of $\G$ of width at most $\w$ such that the $(\G,\sfP)$-encoding 
sequence $\sfT=\langle\sft_1,\dots, \sft_p\rangle$ satisfies $\rep(\sfT)=\sfS.$ Let $\delta_{\sfS'\rightarrow\sfS}:[\ell']\rightarrow [\ell]$ be the extension surjection certifying 
that $\sfS'$ is a $(\leq 2)$-extension of $\sfS.$ Let us denote $f=\delta_{\sfS'\rightarrow \sfS}(f_x)$ and  $l=\delta_{\sfS'\rightarrow \sfS}(l_x).$ As $\sfS=\rep(\sfT),$ with every $j\in[\ell],$ we can associate a $\iota_j\in[p]$ such that $\sfS$ is the subsequence of $\sfT$ induced by $\bp(\sfT)=\{\iota_j\in[p]\mid j\in[\ell]\}.$ We build a $(\leq 2)$-extension 
$\sfT'=\langle \sft_1,\dots, \sft_{p'}\rangle$ of $\sfT,$ in the same way as $\sfS'$  is obtained 
from $\sfS,$ that is: we duplicate $\sft_{i_f}$ if and only if $\sfs_f$ is duplicated, and we 
duplicate $\sft_{i_j}$ if and only if $\sfs_l$ is duplicated. Observe that $\sfS'$ is the 
subsequence of $\sfT'$ induced by $\{i_{j}\in[p']\mid j\in[\ell']\}$ (see 
Figure~\ref{fig:insertion-soundness}). By construction of $\sfT',$ $(i_{f_x},i_{l_x})$ is a 
valid insertion position with respect to $X$ in $\sfT'.$ Thereby, we can define $\sfT^x=\ins(\sfT',x,X,i_{f_x},i_{l_x})$ and $\sfS^x=\ins(\sfS',x,X,f_x,l_x).$
Let $\sfP'$ be the connected path-decomposition obtained from $\sfP$ by duplicating the 
bags corresponding to $\sft_{\iota_f}$ and $\sft_{\iota_l}$ and adding $x$ to all bags 
between the bags associated with $\sft'_{i_{f_x}}$ and $\sft'_{i_{l_x}}.$
We remark that $\sfT^x$ is the $(\G^x,\sfP')$-encoding sequence and is thereby realizable.

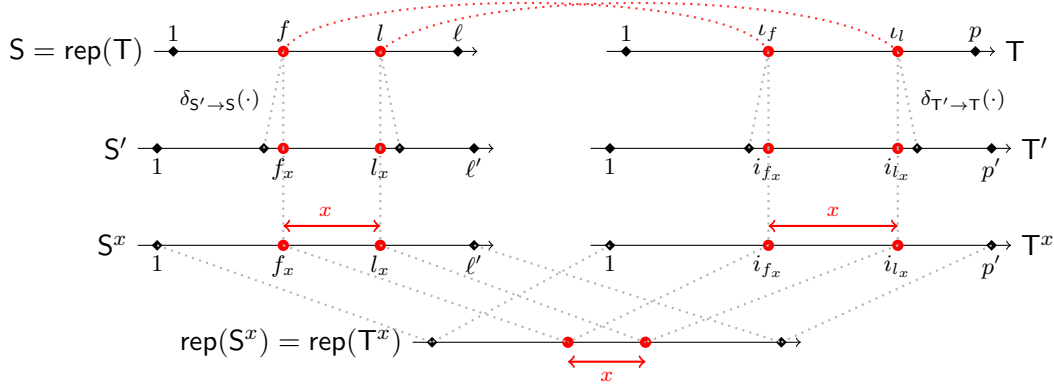
\begin{figure}[h]
\centering
\begin{tikzpicture}[scale=.86 ]
   \tikzstyle{vertex}=[fill,red,circle,minimum size=0.15cm,inner sep=0pt]
   \tikzstyle{vertex2}=[fill,black,diamond,minimum size=0.15cm,inner sep=0pt]

\draw[black,thin,->] (-7,3.5) -- (-2,3.5);
\node[anchor=east] at (-7,3.5) {$\sfS=\rep(\sfT)$};

\coordinate (1s) at (-6.7,3.5);
\coordinate (Fs) at (-5,3.5);
\coordinate (Ls) at (-3.5,3.5);
\coordinate (ls) at (-2.3,3.5);

\node[vertex2] (1s) at (1s) {};
\node[vertex] (f) at (Fs) {};
\node[vertex] (l) at (Ls) {};
\node[vertex2] (ls) at (ls) {};

\node[anchor=south] at (1s) {\footnotesize $1$};
\node[anchor=south] at (Fs) {\footnotesize $f$};
\node[anchor=south] at (Ls) {\footnotesize $l$};
\node[anchor=south] at (ls) {\footnotesize $\ell$};

\draw[black,thin,->] (-7.25,2) -- (-1.75,2);
\node[anchor=east] at (-7.25,2) {$\sfS'$};

\coordinate (1S') at (-6.95,2);
\coordinate (FS') at (-5.3,2);
\coordinate (FS'X) at (-5,2);
\coordinate (LS'X) at (-3.5,2);
\coordinate (LS') at (-3.2,2);
\coordinate (lS') at (-2.05,2);

\node[vertex2] (1s') at (1S'){};
\node[vertex2] (f') at (FS'){};
\node[vertex] (fx') at (FS'X){};
\node[vertex] (lx') at (LS'X){};
\node[vertex2] (l') at (LS'){};
\node[vertex2] (ls') at (lS'){};

\node[anchor=north] at (1S') {\footnotesize $1$};
\node[anchor=north] at (FS'X) {\footnotesize $f_x$};
\node[anchor=north] at (LS'X) {\footnotesize $l_x$};
\node[anchor=north] at (lS') {\footnotesize $\ell'$};

\draw[black,thin,->] (-7.25,0.5) -- (-1.75,0.5);
\node[anchor=east] at (-7.25,0.5) {$\sfS^x$};

\coordinate (1Sx) at (-6.95,0.5);
\coordinate (FSx) at (-5.3,0.5);
\coordinate (FSxX) at (-5,0.5);
\coordinate (LSxX) at (-3.5,0.5);
\coordinate (LSx) at (-3.2,0.5);
\coordinate (lSx) at (-2.05,0.5);

\node[vertex2] (1sx) at (1Sx){};
\node[vertex] (fxx) at (FSxX){};
\node[vertex] (lxx) at (LSxX){};
\node[vertex2] (lsx) at (lSx){};

\node[anchor=north] at (1Sx) {\footnotesize $1$};
\node[anchor=north] at (FSxX) {\footnotesize $f_x$};
\node[anchor=north] at (LSxX) {\footnotesize $l_x$};
\node[anchor=north] at (lSx) {\footnotesize $\ell'$};

\draw[red,thick,<->] (-5,0.8) -- (-3.5,0.8);
\node[anchor=south,red] at (-4.35,0.8) {\scriptsize $x$};

\draw[gray!60,thick,dotted] (FS') -- (Fs);
\draw[gray!60,thick,dotted] (FS'X) -- (Fs);
\draw[gray!60,thick,dotted] (LS'X) -- (Ls);
\draw[gray!60,thick,dotted] (LS') -- (Ls);

\node[anchor=east] at (-5.2,2.75) {\scriptsize $\delta_{\sfS'\rightarrow\sfS}(\cdot)$};

\draw[gray!60,thick,dotted] (FS'X) -- (FSxX);
\draw[gray!60,thick,dotted] (LS'X) -- (LSxX);

\draw[black,thin,->] (0,3.5) -- (6,3.5);
\node[anchor=west] at (6,3.5) {$\sfT$};

\coordinate (1t) at (0.3,3.5);
\coordinate (Ft) at (2.5,3.5);
\coordinate (Lt) at (4.5,3.5);
\coordinate (pt) at (5.7,3.5);

\node[vertex2] (1t) at (1t) {};
\node[vertex] (ft) at (Ft) {};
\node[vertex] (lt) at (Lt) {};
\node[vertex2] (pt) at (pt) {};

\node[anchor=south] at (1t) {\footnotesize $1$};
\node[anchor=south] at (Ft) {\footnotesize $\iota_f$};
\node[anchor=south] at (Lt) {\footnotesize $\iota_l$};
\node[anchor=south] at (pt) {\footnotesize $p$};

\draw[black,thin,->] (-0.25,2) -- (6.25,2);
\node[anchor=west] at (6.25,2) {$\sfT'$};

\coordinate (1T') at (0.05,2);
\coordinate (FT') at (2.2,2);
\coordinate (FT'X) at (2.5,2);
\coordinate (LT'X) at (4.5,2);
\coordinate (LT') at (4.8,2);
\coordinate (pT') at (5.95,2);

\node[vertex2] (1s') at (1T'){};
\node[vertex2] (f') at (FT'){};
\node[vertex] (fx') at (FT'X){};
\node[vertex] (lx') at (LT'X){};
\node[vertex2] (l') at (LT'){};
\node[vertex2] (ps') at (pT'){};

\node[anchor=north] at (1T') {\footnotesize $1$};
\node[anchor=north] at (FT'X) {\footnotesize $i_{f_x}$};
\node[anchor=north] at (LT'X) {\footnotesize $i_{l_x}$};
\node[anchor=north] at (pT') {\footnotesize $p'$};

\draw[black,thin,->] (-0.25,0.5) -- (6.25,0.5);
\node[anchor=west] at (6.25,0.5) {$\sfT^x$};

\coordinate (1Tx) at (0.05,0.5);
\coordinate (FTx) at (2.2,0.5);
\coordinate (FTxX) at (2.5,0.5);
\coordinate (LTxX) at (4.5,0.5);
\coordinate (LTx) at (4.8,0.5);
\coordinate (pTx) at (5.95,0.5);

\node[vertex2] (1tx) at (1Tx){};
\node[vertex] (ftxx) at (FTxX){};
\node[vertex] (ltxx) at (LTxX){};
\node[vertex2] (ptx) at (pTx){};

\node[anchor=north] at (1Tx) {\footnotesize $1$};
\node[anchor=north] at (FTxX) {\footnotesize $i_{f_x}$};
\node[anchor=north] at (LTxX) {\footnotesize $i_{l_x}$};
\node[anchor=north] at (pTx) {\footnotesize $p'$};

\draw[red,thick,<->] (2.5,0.8) -- (4.5,0.8);
\node[anchor=south,red] at (3.5,0.8) {\scriptsize $x$};

\draw[gray!60,thick,dotted] (FT') -- (Ft);
\draw[gray!60,thick,dotted] (FT'X) -- (Ft);
\draw[gray!60,thick,dotted] (LT'X) -- (Lt);
\draw[gray!60,thick,dotted] (LT') -- (Lt);

\node[anchor=west] at (4.7,2.75) {\scriptsize $\delta_{\sfT'\rightarrow\sfT}(\cdot)$};

\draw[gray!60,thick,dotted] (FT'X) -- (FTxX);
\draw[gray!60,thick,dotted] (LT'X) -- (LTxX);

\draw[black,thin,->] (-3,-1) -- (3,-1);
\node[anchor=east] at (-3,-1) {$\rep(\sfS^x)=\rep(\sfT^x)$};

\coordinate (R1) at (-2.7,-1);
\coordinate (FR) at (-0.6,-1);
\coordinate (LR) at (0.6,-1);
\coordinate (Rn) at (2.7,-1);

\node[vertex2] (r1) at (R1){};
\node[vertex] (rf) at (FR){};
\node[vertex] (rl) at (LR){};
\node[vertex2] (rn) at (Rn){};

\draw[red,thick,<->] (-0.6,-1.3) -- (0.6,-1.3);
\node[anchor=north,red] at (0,-1.3) {\scriptsize $x$};

\draw[gray!60,thick,dotted] (1Tx) -- (R1);
\draw[gray!60,thick,dotted] (FTxX) -- (FR);
\draw[gray!60,thick,dotted] (LTxX) -- (LR);
\draw[gray!60,thick,dotted] (pTx) -- (Rn);

\draw[gray!60,thick,dotted] (1Sx) -- (R1);
\draw[gray!60,thick,dotted] (FSxX) -- (FR);
\draw[gray!60,thick,dotted] (LSxX) -- (LR);
\draw[gray!60,thick,dotted] (lSx) -- (Rn);

\draw[red!80,thick,dotted] (Fs) .. controls (-4.5,4.5) and (2,4.5) .. (Ft) ;
\draw[red!80,thick,dotted] (Ls) .. controls (-2.5,4.5) and (3.5,4.5) .. (Lt) ;

\end{tikzpicture}
\caption{Soundness of the insertion routine: if $\sfS'=\langle\sfs'_1,\dots, \sfs'_{\ell'}\rangle$ is a $(\leq 2)$-extension of a $B$-boundaried sequence $\sfS=\rep(\sfT)\in \cD_w(\G)$ and $(f_x,l_x)$ is a valid insertion position with respect to $X$ in $\sfS'$, then $\rep(\sfS^x)\in \cD_w(\G^x)$.
\label{fig:insertion-soundness}}
\end{figure}

We claim now that $\rep(\sfS^x) = \rep(\sfT^x).$ Because $\sfS=\rep(\sfT),$ one can prove, in the same way as the second statement of Lemma
\ref{lem:prop-useful}(6), that there are $\sfS_1$ and $\sfS_2,$ extensions of $\sfS,$ such that $\sfS_1\leq \sfT \leq \sfS_2,$
  $\delta_{\sfS_1\to \sfS}(i_j) = \delta_{\sfS_2\to \sfS}(i_j) = j\in [\ell],$ and
  $i_{f_x} = \min\{h\in [p]\mid f=\delta_{\sfS_1\rightarrow\sfS}(h)=\delta_{\sfS_2\rightarrow\sfS}(h)\}$ and
  $i_{l_x}=\max\{h\in [p]\mid l=\delta_{\sfS_1\rightarrow\sfS}(h) = \delta_{\sfS_2\rightarrow\sfS}(h)\}.$ By making the same duplications
  in $\sfS'$ as in $\sfS$ to obtain $\sfS_1$ and $\sfS_2,$ one can construct extensions $\sfS'_1$ and $\sfS'_2$ of $\sfS'$ such that
  $\sfS'_1 \leq \sfT' \leq \sfS'_2,$ $\delta_{\sfS'_1\to \sfS'}(i_j) = \delta_{\sfS'_2\to \sfS'}(i_j) = j\in [\ell'],$ and
  $i_{f_x} = \min\{h\in [p']\mid f_x=\delta_{\sfS'_1\rightarrow\sfS'}(h)=\delta_{\sfS'_2\rightarrow\sfS'}(h)\}$ and
  $i_{l_x}=\max\{h\in [p']\mid l_x=\delta_{\sfS'_1\rightarrow\sfS'}(h) = \delta_{\sfS'_2\rightarrow\sfS'}(h)\}.$ Therefore,
  $(i_{f_x},i_{l_x})$ is a valid insertion position with respect to $X$ in both $\sfS'_1$ and $\sfS'_2.$ By Lemma \ref{lem:i-ext2}, we have
  $\ins(\sfS'_1,x,X,i_{f_x},i_{l_x}) \leq \sfT^x \leq \ins(\sfS'_2,x,X,i_{f_x},i_{l_x}).$ Because $\sfS'_1$ and $\sfS'_2$ are both
  extensions of $\sfS’$, $i_{f_x} = \min\{h\in [p']\mid f_x=\delta_{\sfS'_1\rightarrow\sfS'}(h)=\delta_{\sfS'_2\rightarrow\sfS'}(h)\}$, and $i_{l_x}=\max\{h\in [p']\mid l_x=\delta_{\sfS'_1\rightarrow\sfS'}(h)\} = \delta_{\sfS'_2\rightarrow\sfS'}(h)\},$ we can conclude by
  Lemma \ref{lem:i-val-posb} that $\ins(\sfS'_1,x,X,i_{f_x},i_{l_x})$ and $\ins(\sfS'_2,x,X,i_{f_x},i_{l_x})$ are both extensions of
  $\sfS^x.$ We can therefore conclude that $\sfS^x\equiv \sfT^x,$ \ie, $\rep(\sfS^x)=\rep(\sfT^x).$ Finally, as $\sfT^x$ is realisable, we
  can conclude that $\rep(\sfS^x) \in\Rep_\w(\Gx).$
\end{proof}

\begin{theorem} \label{th:insertion}
Algorithm~\ref{alg:insertion} computes $\cD_\w(\G^x)$  in $2^{O(k (w + \log k))}$-time, where $k=|B|.$
\end{theorem}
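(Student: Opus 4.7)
The plan is to deduce the theorem by assembling the two correctness propositions already proved (Proposition~\ref{prop:insertion-complete} and Proposition~\ref{prop:insertion-sound}) with a careful accounting of the nested loops of Algorithm~\ref{alg:insertion}. For correctness, I would invoke Proposition~\ref{prop:insertion-sound} to conclude that every $\rep(\sfS^x)$ added to $\cD_\w(\G^x)$ indeed belongs to $\Rep_\w(\G^x)$, and Proposition~\ref{prop:insertion-complete} to conclude that, for every connected path-decomposition $\sfP^x$ of $\G^x$ of width at most $w$, the algorithm enumerates some triple $(\sfS', f_x, l_x)$ whose output dominates the $(\G^x,\sfP^x)$-encoding sequence. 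Since the middle and inner loops exhaustively iterate over all $(f,l)$ pairs and all $(\leq 2)$-extensions duplicating only $\sfs_f$ and/or $\sfs_l$, every triple guaranteed by Proposition~\ref{prop:insertion-complete} is in fact produced. Together this shows $\cD_\w(\G^x)\subseteq\Rep_\w(\G^x)$ is a domination set of $\Rep_\w(\G^x)$.

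For the running time, the first step is to bound the length of a representative sequence. Lemma~\ref{lem:size-model} gives at most $3k+2$ breakpoints of type-1 or type-2, and Lemma~\ref{lem:size-ts} bounds by $2w+1$ the number of type-3 breakpoints between two consecutive type-1 or type-2 breakpoints, yielding $\ell=O(kw)$ for every $\sfS\in\cD_\w(\G)$. Next, Lemma~\ref{lem:nb-rep} bounds $|\cD_\w(\G)|$ by $2^{O(k(w+\log k))}$. The second loop enumerates $O(\ell^2)=O(k^2w^2)$ pairs $(f,l)$, and the innermost loop ranges over at most four choices of $(\leq 2)$-extension (none, only $\sfs_f$, only $\sfs_l$, or both). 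For each triple, computing $\sfS^x=\ins(\sfS',x,X,f_x,l_x)$, testing $\width(\sfS^x)\leq w$, and computing $\rep(\sfS^x)$ can all be done in time polynomial in $\ell=O(kw)$. Multiplying these factors yields the total bound $2^{O(k(w+\log k))}\cdot\mathrm{poly}(k,w)=2^{O(k(w+\log k))}$.

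The main conceptual subtlety, already absorbed by the preceding lemmas, is that Proposition~\ref{prop:insertion-complete} only guarantees existence of an insertion inside a $(\leq 2)$-extension rather than inside $\sfS$ itself (cf.\ Figure~\ref{fig:counter-example} and Lemma~\ref{lem:i-ext}). Hence the inner loop must explore exactly these four extensions; anything less would break completeness, while anything more would blow up the time bound. The only remaining bookkeeping task is to observe that the constant factor of four extensions is harmless and that duplicating $\sfs_f$ and/or $\sfs_l$ suffices because $\delta_{\sfS'\to \sfS}^{-1}(f)$ and $\delta_{\sfS'\to \sfS}^{-1}(l)$ are the only indices that matter for choosing a valid $(f_x,l_x)$. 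With this in place the proof is a straightforward combination of the propositions and the size estimates.
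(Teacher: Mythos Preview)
Your proposal is correct and follows essentially the same approach as the paper: correctness is delegated to Propositions~\ref{prop:insertion-complete} and~\ref{prop:insertion-sound}, and the running time is obtained by bounding $|\cD_\w(\G)|$ via Lemma~\ref{lem:nb-rep} and observing that the per-sequence work is polynomial in the length of a representative. Your accounting is in fact more explicit than the paper's (you spell out the $O(kw)$ length bound, the $O(\ell^2)$ choices of $(f,l)$, and the four $(\leq 2)$-extensions), whereas the paper simply asserts that the time is dominated by $|\cD_\w(\G)|$; but the argument is the same.
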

\begin{proof}
The correctness of Algorithm~\ref{alg:insertion} is proved by Proposition~\ref{prop:insertion-complete} and
Proposition~\ref{prop:insertion-sound}. These two propositions imply that by applying Insertion Routine on a domination set of ${\bf G}$
that is a subset of the representatives of $\G$, we indeed compute a domination set of $\G^x$ that is a subset of the set of representatives of $\G^x$. Let us
analyse its time complexity. By Lemma~\ref{lem:nb-rep}, the size of $\Rep_\w(\G)$ (and so the size of $\cD_\w(\G)$) depends on $k$ and $\w.$ By
Lemma~\ref{lem:size-model}, the length of a representative $B$-boundaried sequence of $\Rep_\w(\G)$ depends on $k.$ As performing the insertion in a
$B$-boundaried sequence can be performed in polynomial time in the size of the sequence, the time complexity of Algorithm~\ref{alg:insertion} is dominated by the size of
$\cD_\w(\G)$ that is $2^{O(k (w + \log k))}$, because of Lemma \autoref{lem:nb-rep}.
\end{proof}

\subsection{The dynamic programming algorithm}

We are now in position to prove \autoref{main_erd}. 
We first explain  an algorithm that decides whether ${\sf cpw}(G)\leq w$.  
Suppose that we are given a path-decompositon
$\mathsf{Q}=\langle B_1,\dots, B_q\rangle$ of $G$ of width at most $k.$
Our algorithm  performs dynamic programming   over $\mathsf{Q}.$ 
For each $i\in[q]$, we consider the boundaried graph ${\bf G}_{i}=(G[V_{i}],B_{i})$, where $V_i=\bigcup_{1\leq h\leq i}B_h.$ The task is to compute for every $i\in[q],$ a domination set $\cD_{\w+1}(\G_i).$ Let us describe $\cD_{\w+1}(\G_1).$ As $\mathsf{Q}$ is a nice path-decomposition, $B_1=\{x\}$ for some $x\in V.$ The representative  set $\Rep_{\w+1}(\G_1)$ consists for the following  four possible connected $B_1$-boundaried sequences:
\begin{itemize}
\item $\sfS_1=\langle (\{x\},\{\{x\}\},0)\rangle,$
\item $\sfS_2=\langle (\emptyset,\{\emptyset\},0), (\{x\},\{\{x\}\},0)\rangle,$
\item $\sfS_3=\langle (\emptyset,\{\emptyset\},0), (\{x\},\{\{x\}\},0), (\emptyset,\{\{x\}\},0)\rangle,$ and 
\item $\sfS_4=\langle (\{x\},\{\{x\}\},0), (\emptyset,\{\{x\}\},0) \rangle.$
\end{itemize} 
We use $\Rep_{\w+1}(\G_1)$, as $\cD_{\w+1}(\G_1)$ as none of the above sequence is dominating the other.
Now Algorithm~\ref{alg:insertion} and Algorithm~\ref{alg:forget} describe how to compute for every $1<i\leq q,$ $\cD_{\w+1}(\G_i)$ depending on whether $B_i$ is an insertion or a forgetting bag. We obtain that $\cpw(G)\leq{\w}$ if and only if $\cD_{\w+1}(\G_q)\neq\emptyset$, because of Proposition~\autoref{prop:correctness}. The correctness of the DP algorithm described above follows from Theorem~\ref{th:forget}, Theorem~\ref{th:insertion}. The time complexity depends on the running time of  Insertion Routine (Algorithm~\ref{alg:insertion}) and  Forget Routine (Algorithm~\ref{alg:forget}) described respectively in  Theorem~\ref{th:forget} and Theorem~\ref{th:insertion}. We just proved  the decision version of \autoref{main_erd}. 
In \cite[Section 6]{BodlaenderK96effi} Bodlaender and Kloks explained how to turn their decision algorithm for pathwidth and treewidth to one that is able to construct, in case of a positive answer, 
the corresponding decomposition. Following the same arguments, it is straightforward to 
transform the above decision algorithm for connected pathwidth to one that also constructs the 
connected path-decomposition, if it exists. This completes the proof of \autoref{main_erd}.

\begin{theorem}
One may  construct an algorithm that, given an $n$-connected graph $G$ and a non-negative integer $k$, either 
outputs a connected path-decomposition of $G$ of width at most $k$ or correctly 
reports that such a decomposition does not exist in $2^{O(k^2)}\cdot n$ time.
\end{theorem}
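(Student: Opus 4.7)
The plan is to combine the existing linear-time parameterized algorithm for ordinary pathwidth with the algorithm of \autoref{main_erd}. First, I would invoke Fürer's algorithm from~\cite{Furer16fast} on input $(G,k)$; this runs in $2^{O(k^2)}\cdot n$ time and either outputs a path-decomposition $\mathsf{Q}=\langle B_{1},\dots,B_{q}\rangle$ of $G$ of width at most $k$, or correctly reports that $\pw(G)>k$. In the latter case, by \autoref{obs:cpw} we have $\cpw(G)\geq\pw(G)>k$, so the algorithm may safely report that no connected path-decomposition of $G$ of width at most $k$ exists.

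If the first phase has produced $\mathsf{Q}$, the second phase feeds $(G,\mathsf{Q},w)$ with $w=k$ into the algorithm of \autoref{main_erd}. By that theorem the call runs in $2^{O(k(w+\log k))}\cdot n = 2^{O(k^{2})}\cdot n$ time and either outputs a connected path-decomposition of $G$ of width at most $k$ or correctly reports that none exists. Combining the two phases yields the desired dichotomy; the total running time is dominated by Fürer's step and is $2^{O(k^{2})}\cdot n$ overall.

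For the correctness, the only point to justify is the logical branch: if $\pw(G)>k$ we are entitled to reject the instance because of the monotonicity $\pw(G)\leq\cpw(G)$ from \autoref{obs:cpw}, and if $\pw(G)\leq k$ then by the guarantee of \autoref{main_erd} the second phase is complete and sound. No genuinely new combinatorial argument is needed; all the work has already been done in the design and analysis of the dynamic programming routines (\autoref{th:forget} and \autoref{th:insertion}) and in their assembly into the algorithm behind \autoref{main_erd}.

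I do not expect a hard obstacle here: the statement is a straightforward corollary. The only mild subtlety is to notice that in order to get the cleaner parameterization $2^{O(k^{2})}\cdot n$ one must set $w=k$ (so $w+\log k = O(k)$) and rely on \autoref{obs:cpw} to eliminate the ``no path-decomposition of width $k$ exists'' case without ever calling the connected-pathwidth routine; this also implicitly uses that $\cpw$ is defined for connected graphs, but in the disconnected case the answer is trivially negative and can be tested in linear time.
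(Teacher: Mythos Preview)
Your proposal is correct and follows essentially the same approach as the paper: invoke F\"urer's algorithm to either reject via \autoref{obs:cpw} or obtain a width-$k$ path-decomposition, then feed it to \autoref{main_erd} with $w=k$ to finish within the claimed time bound. The paper's own proof is virtually identical (it phrases the second call with $w$ equal to the width of the decomposition returned, which is at most $k$, but this makes no difference).
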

\begin{proof}
According to the result of Fürer\cite{Furer16fast} there is an algorithm that, given a graph $G$
and an integer $k$, outputs, if exists, a path-decomposition of width at most $k$ in  $2^{O(k^2)}\cdot n$ time. We run this algorithm and if
the answer is negative, we report that ${\sf cpw}(G)>k$ and we are done (here we use Observation~\ref{obs:cpw}). Otherwise we use the
provided path-decomposition in order to solve the problem in $2^{O(w(k+\log w))}\cdot n$ time using the algorithm of \autoref{main_erd}
where $w\leq k$ is the width of the constructed path-decomposition in the first step.
\end{proof}


\end{document}